\numberwithin{equation}{section}
\newtheorem{Theorem}{Theorem}[section]
\newtheorem{Corollary}[Theorem]{Corollary}
\newtheorem{Lemma}[Theorem]{Lemma}
\newtheorem{Proposition}[Theorem]{Proposition}
\theoremstyle{definition}
\newtheorem{Definition}[Theorem]{Definition}
\newtheorem{Remark}[Theorem]{Remark}
\let\Re\undefined
\let\Im\undefined
\DeclareMathOperator{\Re}{Re}
\DeclareMathOperator{\Im}{Im}
\definecolor{shadecolor}{rgb}{0.95, 0.95, 0.86}
\def\bt{\begin{Theorem}}
\def\et{\end{Theorem}}
\def\be{\begin{equation}}
	\def\ee{\end{equation}}
\def\bi{\begin{itemize}}
	\def\ei{\end{itemize}}
\def\bea{\begin{eqnarray}}
	\def\eea{\end{eqnarray}}
\def\bl{\begin{Lemma}}
	\def\el{\end{Lemma}}
\def\bd{\begin{Definition}}
	\def\ed{\end{Definition}}
\def\bp{\begin{Proposition}}
	\def\ep{\end{Proposition}}
\def\br{\begin{Remark}}
	\def\er{\end{Remark}}
\def\bc{\begin{Corollary}}
	\def\ec{\end{Corollary}}
\def\ra{\rightarrow}
\newcommand{\G}{\Gamma}
\renewcommand{\d}{\delta}
\renewcommand{\o}{\omega}
\newcommand{\g}{\gamma}
\renewcommand{\part}{\partial}
\newcommand{\Res}{\operatorname{Res}}
\def\Rscr{\mathcal{R}}
\newcommand{\bigo}{\mathcal{O}}
\newcommand{\sech}{\hbox{sech}}
\def\C{{\mathbb C}}
\def\R{{\mathbb R}}
\def\N{{\mathbb N}}
\def\a{\alpha}
\def\g{\gamma}
\def\l{\lambda}
\def\1{{\bf 1}}
\def\r{\rho}
\def\s{ {\sigma}}
\def\th{ {\theta}}
\def\x{\xi}
\def\hf{\frac{1}{2}}
\def\dn{\mathop{\rm dn}\nolimits}
\def\sech{\mathop{\rm sech}\nolimits}
\def\l{\lambda}
\def\s{\sigma}
\def\o{\omega}
\def\g{\gamma}
\def\G{\Gamma}
\def\1{{\bf 1}}
\def\d{\delta}
\def\r{\rho}
\def\th{\theta}
\def\tk{\tilde k}
\def\to{\tilde \o}
\begin{document}

\renewcommand{\thefootnote}{}

\newcommand{\arXivNumber}{2312.16406}

\renewcommand{\PaperNumber}{070}

\FirstPageHeading

\ShortArticleName{Soliton Condensates for the Focusing Nonlinear Schr\"odinger Equation}

\ArticleName{Soliton Condensates for the Focusing Nonlinear\\ Schr\"odinger Equation: a Non-Bound State Case\footnote{This paper is a~contribution to the Special Issue on Evolution Equations, Exactly Solvable Models and Random Matrices in honor of Alexander Its' 70th birthday. The~full collection is available at \href{https://www.emis.de/journals/SIGMA/Its.html}{https://www.emis.de/journals/SIGMA/Its.html}}}

\Author{Alexander TOVBIS~$^{\rm a}$ and Fudong WANG~$^{\rm bc}$}

\AuthorNameForHeading{A.~Tovbis and F.~Wang}

\Address{$^{\rm a)}$~University of Central Florida, Orlando FL, USA}
\EmailD{\href{mailto:alexander.tovbis@ucf.edu}{alexander.tovbis@ucf.edu}}

\Address{$^{\rm b)}$~School of Sciences, Great Bay University, Dongguan, P.R.~China}
\Address{$^{\rm c)}$~Great Bay Institute for Advanced Study, Dongguan, P.R.~China}
\EmailD{\href{mailto:fudong@gbu.edu.cn}{fudong@gbu.edu.cn}}

\ArticleDates{Received December 30, 2023, in final form July 16, 2024; Published online July 31, 2024}

\Abstract{In this paper, we study the spectral theory of soliton condensates -- a special limit of soliton gases -- for the focusing NLS (fNLS). In particular, we analyze the kinetic equation for the fNLS circular condensate, which represents the first example of an explicitly solvable fNLS condensate with nontrivial large scale space-time dynamics. Solution of the kinetic equation was obtained by reducing it to Whitham type equations for the endpoints of spectral arcs. We also study the rarefaction and dispersive shock waves for circular condensates, as well as calculate the corresponding average conserved quantities and the kurtosis. We want to note that one of the main objects of the spectral theory~-- the nonlinear dispersion relations~-- is introduced in the paper as some special large genus (thermodynamic) limit the Riemann bilinear identities that involve the quasimomentum and the quasienergy meromorphic differentials.}

\Keywords{soliton condensate; focusing nonlinear Schr\"odinger equation; kurtosis; nonlinear dispersion relations; dispersive shock wave}

\Classification{37K40; 35P30; 37K10}

\renewcommand{\thefootnote}{\arabic{footnote}}
\setcounter{footnote}{0}

\section{Introduction}

\subsection{Soliton gases for integrable equations}
Many nonlinear integrable equations have special solutions representing solitary waves~--
%a single hump
localized traveling wave solutions also known as solitons. Solitons have some very peculiar and well studied properties. One of the most celebrated of them is the property that two solitons with different velocities retain their shapes and velocities after the interaction, so that the only result of this interaction is a phase change (e.g., a shift in the position of the center) of the above solitons. There are also more complicated $n$-soliton solutions which can be considered as ensembles of~$n$ interacting solitons.

The spectral theory of soliton gases (see, for example, \cite{ET2020} or \cite[Section~3]{ElRew1}) is based on the natural idea to interpret solitons as particles of some gas with elastic two-particle interactions. Assuming that a given soliton (a particle) undergoes consistent interactions (phase shifts) with other solitons (particles) in the gas, it is clear that its actual velocity will be affected by the persistent interactions, namely, by the density and the characteristics of other solitons in the gas interacting with the given one. This idea goes back to the paper of V.E.~Zakharov \cite{Za71}, where the formula for the effective velocity of a soliton in a rarefied Korteweg--de Vries (KdV) soliton gas was first presented. However, the case of a dense soliton gas required a different approach, which was suggested by G.~El in \cite{El2003} for KdV soliton gases and later by G.~El and A.~Tovbis in~\cite{ET2020} for the focusing nonlinear Schr\"odinger equation (fNLS)
soliton gases. This approach is based on considering a certain large $n$ limit of special $n$-phase nonlinear wave (finite gap) solutions to an integrable system, called the thermodynamic limit. The subject of our interest in this paper will be not the large $n$ limit of finite gap solutions to integrable equations, but rather the scaled continuum limit of wavenumbers $k_j$ and of frequencies $\o_j$, associated with these solutions. These limits will be interpreted as the density of states (DOS)~$u(z)$ and the density of fluxes (DOF)~$v(z)$ respectively of the corresponding soliton gas, where $z$ denotes the spectral parameter. The main results of the paper will be described in Section~\ref{sec-Main} below, while some brief background information on spectral theory of soliton gases and soliton condensates will be given in Section~\ref{sec-KdV}.
	More detailed information on the subject can be found in the review paper \cite{ElRew1}. We now turn to a brief discussion of the thermodynamic limit -- the key concept of the spectral theory of soliton gases.

Since finite gap solutions can be represented in terms of the Riemann theta functions, it is convenient for us to describe soliton gases starting from the corresponding Riemann surfaces. In particular, genus $n$
finite gap solutions to the fNLS, considered in this paper, can be defined in terms of a genus $n$ Schwarz symmetrical hyperelliptic Riemann surface $\Rscr_n$. Additional information in the form of $n$ initial phases is required to define a particular finite gap solution. However, the
wavenumbers $k_j$ and frequencies $\o_j$ of any finite gap solution to the fNLS, associated with $\Rscr_n$, are defined in terms of $\Rscr_n$ only. In particular, let ${\rm d}p_n$, ${\rm d}q_n$ be the second kind real normalized meromorphic differentials on $\Rscr_n$ with poles only at infinity (both sheets) and the principal parts $\pm 1$ for ${\rm d}p_n$ and $\pm 2z$ for ${\rm d}q_n$ there respectively, where the spectral parameter $z\in \Rscr_n$. Here real normalized mean that all the periods of ${\rm d}p_n$, ${\rm d}q_n$ on $\Rscr_n$ are real.
Note that the above conditions uniquely define ${\rm d}p_n$, ${\rm d}q_n$ on $\Rscr_n$.
Then the vectors $\vec k$, $\vec \o$ of the (real) periods of ${\rm d}p_n$, ${\rm d}q_n$ with respect to a fixed homology basis ($\bf A$ and $\bf B$ cycles) of $\Rscr_n$ are vectors of the wavenumbers and frequencies of finite gap solutions on $\Rscr_n$, respectively, i.e.,
\begin{equation*}%\label{waven-freq}
k_j=\oint_{\bf A_j}{\rm d}p_n,\qquad \tk_j= \oint_{\bf B_j}{\rm d}p_n,\qquad
\o_j=\oint_{\bf A_j}{\rm d}q_n,\qquad \to _j= \oint_{\bf B_j}{\rm d}q_n,
\end{equation*}
where $\vec k=(k_1,\dots,k_n,\tk_1,\dots,\tk_n),$ $\vec \o=(\o_1,\dots,\o_n,\to_1,\dots,\to_n)$.
The differentials ${\rm d}p_n$, ${\rm d}q_n$ are known as quasimomentum and quasienergy differentials, respectively, \cite{FFM,ForLee}.

Denote by $w_{j,n}=w_j$ the $j$-th normalized holomorphic differential on $\Rscr_n$, $j=1,\dots,n$, that are defined by the condition $\int_{A_k}w_j =\d_{k,j}$, $k=1,\dots,n$, where $\d_{k,j}$ is the Kronecker delta. The well known Riemann bilinear relations (RBR)
\begin{gather}
\sum_j\biggl[\oint_{\bf A_j}w_m\oint_{\bf B_j}{\rm d}p_n-\oint_{\bf A_j}{\rm d}p_n\oint_{\bf B_j}w_m\biggr]=2\pi {\rm i} {\sum} {\rm Res}\left(\int w_m {\rm d}p_n\right), \label{RBR1}\\
\sum_j\biggl[\oint_{\bf A_j}w_m\oint_{\bf B_j}{\rm d}q_n-\oint_{\bf A_j}{\rm d}q_n\oint_{\bf B_j}w_m\biggr]=2\pi {\rm i} {\sum} {\rm Res}\left(\int w_m {\rm d}q_n\right),
\label{RBR2}
\end{gather}
$m=1,\dots,n$, form systems of linear equations for $\vec k$, $\vec \o$ respectively, where the summation is taking over the only poles $z=\infty_\pm$ (infinities on both sheets of $\Rscr_n$) of ${\rm d}p_n,{\rm d}q_n$. Indeed,
taking real and imaginary parts of \eqref{RBR1}, one gets
 \begin{gather}
\sum_j k_j \Im \oint_{\bf B_j}w_m=-2\pi \Re\left( \sum {\rm Res}\left(\int w_m {\rm d}p_n\right) \right), \label{RBR1-Im} \\
\tk_m - \sum_j k_j \Re \oint_{\bf B_j}w_m=-2\pi \Im\left( \sum {\rm Res}\left(\int w_m {\rm d}p_n\right) \right),\qquad m=1,\dots,n.
\label{RBR1-Re}
\end{gather}

The matrix of the $n\times n$ system of linear equations is positive definite, since it is the imaginary part $\Im \tau$ of the Riemann period matrix \smash{$\tau= \oint_{\bf B_j}w_m$}. Once $k_j$ are known, the values of $\tk_j$ can be calculated from~\eqref{RBR1-Re}.
Thus, the systems \eqref{RBR1-Im}--\eqref{RBR1-Re} always have a unique solution.
Similar results are true for \eqref{RBR2}.
So, the wavenumbers and frequencies vectors $\vec k$, $\vec \o$ are connected via the underlying Riemann surface $\Rscr_n$. Thus, the Riemann bilinear relations for the quasimomentum and quasienergy differentials ${\rm d}p_n$, ${\rm d}q_n$, connecting the wavenumbers and frequencies, form the nonlinear dispersion relations (NDR) for the finite gap solutions to the fNLS, defined by~$\Rscr_n$.\looseness=1

One of the main subjects of the spectral theory of soliton gases is the thermodynamic limit of scaled vectors $\vec k$, $\vec\o$, i.e., the thermodynamic limit of Riemann bilinear relations \eqref{RBR1}--\eqref{RBR2}, which leads to the continuum version of the NDR established in \cite{ET2020} in the form of two Fredholm integral equations \eqref{NDR}--\eqref{NDR2}.

For simplicity of the exposition, let us assume that $N$ is even and only one (Schwarz symmetrical) band $\g_0$ intersects $\R$.
Before considering the thermodynamic limit, we want to address the choice of the homology basis. Namely, the $\bf A$ cycles are chosen as counterclockwise loops around each of the bands in $\C^+$ (on the main sheet) and clockwise loops around each of the bands in~$\C^-$ (on the main sheet). Each such loop contains only one band (branch cut) inside.
That leaves one exceptional band $\g_0$ not surrounded by a loop.

A $\bf B$ cycle $\bf B_j$ is represented by a loop going through the branch cut encircled by $\bf A_j$ and through
the exceptional branch cut $\g_0$,
see Figure \ref{Fig:Cont}.
If one starts shrinking just one branch cut of $\Rscr_n$, corresponding to $\bf A_j$ (and, generically, its Schwarz symmetrical) to a point while leaving the other branch cuts unchanged, the system \eqref{RBR1}--\eqref{RBR2} implies that the corresponding $k_j,\o_j\ra 0$, which can be associated with a solitonic limit. Indeed, an fNLS soliton is spectrally represented by a pair of Schwarz symmetrical points and a complex norming constant, the latter can be viewed as an analog of initial phases. For example, an elliptic solution to the fNLS $\psi_{m} = {\rm e}^{{\rm i}t(2-m)}\dn(x,m) $, where $\dn$ denotes the corresponding Jacobi elliptic function, in the limit $m\ra 1$ goes into a soliton solution $\psi_{1} = {\rm e}^{{\rm i}t}\sech(x) $. That observation illustrates the connection between finite gap and multi-soliton solutions.

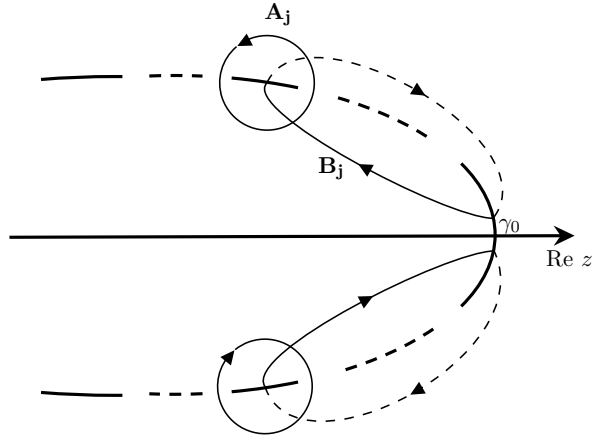
\begin{figure}[h]
	\centering
		 \resizebox{0.5\textwidth}{!}{
		
\tikzset{every picture/.style={line width=0.75pt}} %set default line width to 0.75pt

\begin{tikzpicture}[x=0.75pt,y=0.75pt,yscale=-1,xscale=1]
	%uncomment if require: \path (0,404); %set diagram left start at 0, and has height of 404
	
	%Straight Lines [id:da7223098994070081]
	\draw [fill={rgb, 255:red, 155; green, 155; blue, 155 }  ,fill opacity=1 ][line width=1.5]    (165,201) -- (519.45,200.01) ;
	\draw [shift={(523.45,200)}, rotate = 179.84] [fill={rgb, 255:red, 0; green, 0; blue, 0 }  ][line width=0.08]  [draw opacity=0] (13.4,-6.43) -- (0,0) -- (13.4,6.44) -- (8.9,0) -- cycle    ;
	%Shape: Arc [id:dp7197767916689966]
	\draw  [draw opacity=0][dash pattern={on 5.63pt off 4.5pt}][line width=1.5]  (287.44,300.29) .. controls (281.58,300.54) and (275.65,300.67) .. (269.67,300.67) .. controls (264.29,300.67) and (258.96,300.56) .. (253.69,300.36) -- (269.67,199.5) -- cycle ; \draw  [dash pattern={on 5.63pt off 4.5pt}][line width=1.5]  (287.44,300.29) .. controls (281.58,300.54) and (275.65,300.67) .. (269.67,300.67) .. controls (264.29,300.67) and (258.96,300.56) .. (253.69,300.36) ;
	%Shape: Arc [id:dp3653236378506848]
	\draw  [draw opacity=0][line width=1.5]  (451.49,154.17) .. controls (465.25,167.8) and (473,183.2) .. (473,199.5) .. controls (473,215.8) and (465.25,231.2) .. (451.49,244.83) -- (269.67,199.5) -- cycle ; \draw  [line width=1.5]  (451.49,154.17) .. controls (465.25,167.8) and (473,183.2) .. (473,199.5) .. controls (473,215.8) and (465.25,231.2) .. (451.49,244.83) ;
	%Shape: Arc [id:dp9141977916603252]
	\draw  [draw opacity=0][dash pattern={on 5.63pt off 4.5pt}][line width=1.5]  (253.69,98.64) .. controls (258.96,98.44) and (264.29,98.33) .. (269.67,98.33) .. controls (275.65,98.33) and (281.58,98.46) .. (287.44,98.71) -- (269.67,199.5) -- cycle ; \draw  [dash pattern={on 5.63pt off 4.5pt}][line width=1.5]  (253.69,98.64) .. controls (258.96,98.44) and (264.29,98.33) .. (269.67,98.33) .. controls (275.65,98.33) and (281.58,98.46) .. (287.44,98.71) ;
	%Shape: Arc [id:dp018363651430338468]
	\draw  [draw opacity=0][line width=1.5]  (305.91,99.94) .. controls (320.55,101.25) and (334.65,103.34) .. (348.02,106.12) -- (269.67,199.5) -- cycle ; \draw  [line width=1.5]  (305.91,99.94) .. controls (320.55,101.25) and (334.65,103.34) .. (348.02,106.12) ;
	%Shape: Arc [id:dp7598335590527734]
	\draw  [draw opacity=0][line width=1.5]  (348.02,292.88) .. controls (334.65,295.66) and (320.55,297.75) .. (305.91,299.06) -- (269.67,199.5) -- cycle ; \draw  [line width=1.5]  (348.02,292.88) .. controls (334.65,295.66) and (320.55,297.75) .. (305.91,299.06) ;
	%Shape: Arc [id:dp5875638612610647]
	\draw  [draw opacity=0][dash pattern={on 5.63pt off 4.5pt}][line width=1.5]  (373.39,112.47) .. controls (397.04,119.46) and (417.59,128.77) .. (433.79,139.77) -- (269.67,199.5) -- cycle ; \draw  [dash pattern={on 5.63pt off 4.5pt}][line width=1.5]  (373.39,112.47) .. controls (397.04,119.46) and (417.59,128.77) .. (433.79,139.77) ;
	%Shape: Arc [id:dp17650732443428496]
	\draw  [draw opacity=0][dash pattern={on 5.63pt off 4.5pt}][line width=1.5]  (433.79,259.23) .. controls (417.59,270.23) and (397.04,279.54) .. (373.39,286.53) -- (269.67,199.5) -- cycle ; \draw  [dash pattern={on 5.63pt off 4.5pt}][line width=1.5]  (433.79,259.23) .. controls (417.59,270.23) and (397.04,279.54) .. (373.39,286.53) ;
	%Curve Lines [id:da482825425220228]
	\draw    (328.33,103) .. controls (311,120.33) and (441,188.33) .. (471.67,189) ;
	\draw [shift={(386.13,154.55)}, rotate = 28.52] [fill={rgb, 255:red, 0; green, 0; blue, 0 }  ][line width=0.08]  [draw opacity=0] (8.93,-4.29) -- (0,0) -- (8.93,4.29) -- cycle    ;
	%Curve Lines [id:da15864062625404984]
	\draw    (327,295.67) .. controls (313,274.33) and (442.33,210.33) .. (472.33,209.67) ;
	\draw [shift={(395.71,238.3)}, rotate = 152.67] [fill={rgb, 255:red, 0; green, 0; blue, 0 }  ][line width=0.08]  [draw opacity=0] (8.93,-4.29) -- (0,0) -- (8.93,4.29) -- cycle    ;
	%Shape: Arc [id:dp5867404631491058]
	\draw  [draw opacity=0][line width=1.5]  (185.02,100.85) .. controls (200.01,99.63) and (215.49,99) .. (231.33,99) .. controls (233.1,99) and (234.87,99.01) .. (236.63,99.02) -- (231.33,200.17) -- cycle ; \draw  [line width=1.5]  (185.02,100.85) .. controls (200.01,99.63) and (215.49,99) .. (231.33,99) .. controls (233.1,99) and (234.87,99.01) .. (236.63,99.02) ;
	%Shape: Arc [id:dp19861604303444103]
	\draw  [draw opacity=0][line width=1.5]  (236.63,301.31) .. controls (234.87,301.33) and (233.1,301.33) .. (231.33,301.33) .. controls (215.49,301.33) and (200.01,300.7) .. (185.02,299.49) -- (231.33,200.17) -- cycle ; \draw  [line width=1.5]  (236.63,301.31) .. controls (234.87,301.33) and (233.1,301.33) .. (231.33,301.33) .. controls (215.49,301.33) and (200.01,300.7) .. (185.02,299.49) ;
	%Curve Lines [id:da2194430446529394]
	\draw  [dash pattern={on 4.5pt off 4.5pt}]  (328.33,103) .. controls (351.67,50.33) and (510.33,139.67) .. (471.67,189) ;
	\draw [shift={(428.85,111.13)}, rotate = 211.12] [fill={rgb, 255:red, 0; green, 0; blue, 0 }  ][line width=0.08]  [draw opacity=0] (8.93,-4.29) -- (0,0) -- (8.93,4.29) -- cycle    ;
	%Curve Lines [id:da34362702277658097]
	\draw  [dash pattern={on 4.5pt off 4.5pt}]  (327,295.67) .. controls (347.67,357.67) and (503.67,276.33) .. (472.33,209.67) ;
	\draw [shift={(419.17,300.92)}, rotate = 330.04] [fill={rgb, 255:red, 0; green, 0; blue, 0 }  ][line width=0.08]  [draw opacity=0] (8.93,-4.29) -- (0,0) -- (8.93,4.29) -- cycle    ;
	%Shape: Arc [id:dp11287113090264955]
	\draw  [draw opacity=0] (308.15,272.32) .. controls (310.8,270.18) and (313.86,268.46) .. (317.27,267.29) .. controls (332.95,261.92) and (350.01,270.27) .. (355.38,285.94) .. controls (360.75,301.61) and (352.4,318.67) .. (336.73,324.05) .. controls (321.05,329.42) and (303.99,321.07) .. (298.62,305.39) .. controls (294.49,293.33) and (298.48,280.45) .. (307.68,272.7) -- (327,295.67) -- cycle ; \draw    (308.15,272.32) .. controls (310.8,270.18) and (313.86,268.46) .. (317.27,267.29) .. controls (332.95,261.92) and (350.01,270.27) .. (355.38,285.94) .. controls (360.75,301.61) and (352.4,318.67) .. (336.73,324.05) .. controls (321.05,329.42) and (303.99,321.07) .. (298.62,305.39) .. controls (294.82,294.3) and (297.89,282.5) .. (305.58,274.66) ; \draw [shift={(307.68,272.7)}, rotate = 128.59] [fill={rgb, 255:red, 0; green, 0; blue, 0 }  ][line width=0.08]  [draw opacity=0] (8.93,-4.29) -- (0,0) -- (8.93,4.29) -- cycle    ;
	%Shape: Arc [id:dp9138993892223812]
	\draw  [draw opacity=0] (309.48,79.65) .. controls (312.13,77.51) and (315.2,75.79) .. (318.61,74.62) .. controls (334.28,69.25) and (351.34,77.6) .. (356.71,93.27) .. controls (362.08,108.95) and (353.73,126.01) .. (338.06,131.38) .. controls (322.39,136.75) and (305.33,128.4) .. (299.95,112.73) .. controls (295.82,100.67) and (299.81,87.78) .. (309.02,80.04) -- (328.33,103) -- cycle ; \draw    (311.93,77.87) .. controls (313.97,76.54) and (316.2,75.44) .. (318.61,74.62) .. controls (334.28,69.25) and (351.34,77.6) .. (356.71,93.27) .. controls (362.08,108.95) and (353.73,126.01) .. (338.06,131.38) .. controls (322.39,136.75) and (305.33,128.4) .. (299.95,112.73) .. controls (295.82,100.67) and (299.81,87.78) .. (309.02,80.04) ;  \draw [shift={(309.48,79.65)}, rotate = 332.61] [fill={rgb, 255:red, 0; green, 0; blue, 0 }  ][line width=0.08]  [draw opacity=0] (8.93,-4.29) -- (0,0) -- (8.93,4.29) -- cycle    ;
	
	% Text Node
	\draw (503.89,208) node [anchor=north west][inner sep=0.75pt]   [align=left] {$\mathrm{Re}\,\, z$};
	
	\draw (473.89,188) node [anchor=north west][inner sep=0.75pt]   [align=left] {$\gamma_0$};
	% Text Node
	\draw (359.33,148.4) node [anchor=north west][inner sep=0.75pt]    {$\mathbf{B_{j}}$};
	% Text Node
	\draw (325.33,50.73) node [anchor=north west][inner sep=0.75pt]    {$\mathbf{A_{j}}$};

\end{tikzpicture}}

	\caption{The hyperelliptic Riemann surface $\mathcal R_n$, indicating the orientations of the canonical homology basis.  }
	\label{Fig:Cont}
	\vspace{5pt}
\end{figure} 

Considering the thermodynamic limit, we assume
(see \cite{ElRew1,ET2020}) that the number $n+1$ of the branch cuts of $\Rscr_n$ is growing so that the centers of branch cut (spectral band) are accumulating on some (Schwarz symmetrical) compact $\G$ in $\C$ with a
normalized continuous density function $\phi(z)>0$ on $\G^+=\G\cap\C^+$ \big(i.e., $\int_{\G^+}\phi(w){\rm d}\l(w)=1$ for some reference measure~$\l(w)$, see Section~\ref{sec-KdV} below\big).
Simultaneously, all the bandwidth are shrinking at the order ${\rm e}^{-\nu(z)n}$, where~$\nu(z)$ is a continuous non-negative function on $\G^+$,
in such a way that the distance between any two bands should be of the order at least $\mathcal{O}(1/n)$. The wavenumbers $k_j$ and frequencies~$\o_j$ are called solitonic wavenumbers and frequencies respectively, because in the thermodynamic limit they go to zero. The function $\s(z)=\frac{ 2\nu(z)}{\phi(z)}$ is called spectral scaling function.
	In
	the thermodynamic limit, the system of linear equations \eqref{RBR1-Im} for the solitonic wavenumbers $k_j$ turns into the integral equation \eqref{NDR} for the scaled continuum limit~$u(z)$ of~$k_j$, see details in~\cite{ET2020}.
	Similarly, the imaginary part of~\eqref{RBR2} for the solitonic frequencies $\o_j$ turns into the integral equation \eqref{NDR2} for the scaled continuum limit~$v(z)$ of~$\o_j$.
	Thus, the NDR
	\eqref{NDR}--\eqref{NDR2}, discussed in Section~\ref{sec-KdV} below, which is one of the main subjects of this paper, represent the thermodynamic limit of the systems of linear equations for $k_j$, $\o_j$, i.e., the thermodynamic limit of imaginary parts of the RBR \eqref{RBR1}--\eqref{RBR2}.
So far, the derivation of the NDR
\eqref{NDR}--\eqref{NDR2} was made rigorous~\cite{TW}
in the case when $\G^+$ is a curve and $\s>0$ on $\G^+$. In the case $\s(z)\equiv 0$ on $\G^+$, i.e., in the case of sub-exponential decay of the bandwidth, a soliton gas was called soliton condensate~\cite{ET2020}. Soliton condensates have a natural extremal property: if $\G^+$ is fixed but $\s\geq 0,~\s\in C(\G^+)$, is allowed to vary, the maximal average intensity of the fNLS soliton gas is attained at $\s\equiv 0$, i.e., for the soliton condensate, see~\cite{KT2021}.\looseness=1

\subsection{fNLS bound state soliton condensate}\label{sec-KdV}

In this paper we consider soliton gases for the fNLS
\begin{equation*} %\label{NLS}
{\rm i} \psi_t + \psi_{xx} +2 |\psi|^2 \psi=0,
\end{equation*}
where $x,t\in \R$ are the space-time variables and $\psi\colon \R^2 \ra \C$ is the unknown complex-valued function.

The nonlinear dispersion relations (NDR) for the fNLS \textit{soliton gas} are
integral equations (see, for example, \cite{ET2020,TW}):
%that determine the focusing nonlinear Schr\"odinger equation (fNLS) \textbf{soliton gas} are:
\begin{align}\label{NDR}
\int_{\G^+}\log\biggl|\frac{z-\bar w}{z-w}\biggr|u(w)\,{\rm d}\l(w)+\sigma(z)u(z)&=\Im{z},\qquad z\in \G^+,\\
\int_{\G^+}\log\biggl|\frac{z-\bar w}{z-w}\biggr|v(w)|\,{\rm d}\l(w)+\sigma(z)v(z)&=-4\Re{z}\Im{z},\qquad z\in \G^+,
\label{NDR2}
\end{align}
where $\G^+\subset \overline{\C^+}$ is a compact, $\l(w)$ is a reference measure (for example, the arclength if $\G^+$ is a curve) and $\sigma(z)\colon \G^+\mapsto [0,\infty)$ is a continuous function.
As it was mentioned above, the uniqueness and existence of solutions to the discrete NDR (\eqref{RBR1-Im} and its analog for ${\rm d}q_n$) follow from the positive-definiteness of the imaginary part of the Riemann period matrix.
Similar results for the continuous NDR \eqref{NDR}--\eqref{NDR2} were obtained in \cite{KT2021},
where each of the equations
was considered as a variational equation for the minimizer of the Green's energy functional.

\begin{Definition} \label{thick}
Let $S$ be a subset of $\mathbb C$ and let $z_0 \in
\mathbb C$. Then $S$ is thick (or non-thin) at
$z_0$ if $z_0 \in \overline{S \setminus \{z_0\}}$
and if, for every superharmonic function
$u$ defined on a neighborhood of $z_0$,
\[ \liminf_{z \ra z_0 \atop z \in S \setminus \{ z_0\}} u(z) = u(z_0). \]
Otherwise, $S$ is thin at $z_0$.
\end{Definition}

A connected set with more than one point (for example, a contour) is thick at
all of its points. On the other hand, a countable set
is thin at every point. We consider $\G^+$ to be a finite collection of arc or closed regions in $\C$. Here is one of the results from \cite{KT2021}.

\begin{Theorem} \label{secondthm}
Let $\sigma$ be continuous on $\Gamma^+$, and
$S_0 = \bigl\{ z \in \Gamma^+ \mid \sigma(z) = 0 \bigr\}$.
Suppose $S_0$ is either empty or thick at each $z_0 \in S_0$
$($see Definition $\ref{thick})$. 	Then
the solution $u(z)$ to
\eqref{NDR} exists and is unique; moreover, $u(z)\geq 0$ on $\G^+$.
\end{Theorem}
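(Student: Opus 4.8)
The plan is to realize \eqref{NDR} as the Euler--Lagrange (Frostman) equation of a convex Green energy functional and then run the direct method of potential theory: produce the minimizer, read off its variational conditions --- which a priori hold only quasi-everywhere --- and use the thickness of $S_0$ to upgrade them to the pointwise equation \eqref{NDR} on all of $\Gamma^+$, the nonnegativity $u\geq 0$ being automatic from the admissible class. Concretely, I would introduce
\[
\mathcal{F}[u]=\iint_{\Gamma^+\times\Gamma^+} g(z,w)\,u(z)u(w)\,{\rm d}\lambda(z)\,{\rm d}\lambda(w)+\int_{\Gamma^+}\sigma(z)u(z)^2\,{\rm d}\lambda(z)-2\int_{\Gamma^+}\Im z\;u(z)\,{\rm d}\lambda(z),
\]
with $g(z,w)=\log\bigl|\frac{z-\bar w}{z-w}\bigr|$ the Green function of $\C^+$, on the convex cone $\mathcal{A}$ of nonnegative densities $u$ with $\sqrt{\sigma}\,u\in L^2(\lambda)$ and finite Green energy $\iint g\,u u<\infty$.

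Two structural facts drive the existence/uniqueness part. First, $g\geq 0$ on $\Gamma^+\times\Gamma^+$ and the Green kernel of $\C^+$ is strictly positive definite, so $\iint g(z,w)\,{\rm d}\nu(z)\,{\rm d}\nu(w)\geq 0$ for every finite-energy signed measure $\nu$ on $\Gamma^+$, with equality only if $\nu=0$; hence the quadratic part of $\mathcal{F}$ is a genuine strictly convex form. Second, since $\Gamma^+$ is a finite union of arcs or closed regions it is thick at each of its points, hence of positive Green capacity, so $\mathcal{A}$ is nonempty and contains measures of arbitrarily large mass; combined with compactness of $\Gamma^+$ and boundedness of $\Im z$, this yields coercivity and weak-$*$ lower semicontinuity of $\mathcal{F}$ on $\mathcal{A}$. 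The direct method then produces a minimizer $u$, unique by strict convexity, with $u\geq 0$ because $u\in\mathcal{A}$ --- this is the nonnegativity assertion. Taking the first variation along admissible perturbations and writing $U(z)=\int_{\Gamma^+}g(z,w)u(w)\,{\rm d}\lambda(w)$ for the Green potential of $u\,{\rm d}\lambda$, one obtains the Frostman-type conditions
\[
U(z)+\sigma(z)u(z)=\Im z\ \ \text{q.e.\ on }\supp u,\qquad U(z)+\sigma(z)u(z)\geq \Im z\ \ \text{q.e.\ on }\Gamma^+.
\]

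It remains to promote these quasi-everywhere statements to the pointwise equation \eqref{NDR} everywhere on $\Gamma^+$. On the relatively open set $\{\sigma>0\}$ this is routine: there the equation expresses $u$ on $\supp u$ as $(\Im z-U)/\sigma$, a bootstrap gives enough regularity of $u$ (hence continuity of $U$), and a maximum-principle comparison for the harmonic function $U-\Im z$ on the components of $\C^+\setminus\supp u$ (whose boundary values are $\leq 0$ on $\supp u$, $=0$ on $\R$, and $\to-\infty$ at infinity) forces the Frostman inequality to be an equality. On $S_0=\{\sigma=0\}$ the identity $U=\Im z$ holds only quasi-everywhere on $\supp u\cap S_0$, $U$ need not be continuous a priori, and one must instead exploit that $S_0$ is thick at each of its points: since $U$ (and $U-\Im z$) is superharmonic, Definition \ref{thick} forces the $\liminf$ of $U$ along $S_0$ toward any $z_0\in S_0$ to equal $U(z_0)$, and as polar exceptional sets are thin, hence negligible for this $\liminf$, one transfers the q.e.\ identity to every point of $S_0$ and simultaneously rules out a point of $S_0$ at which $U$ strictly exceeds $\Im z$. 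Altogether $U+\sigma u\equiv\Im z$ on $\Gamma^+$, which is \eqref{NDR}; uniqueness of the minimizer shows this is the only solution.

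The genuine obstacle is this last step, the degeneracy of the quadratic form on the zero set of $\sigma$: where $\sigma>0$ everything reduces to classical regularity and maximum-principle arguments, but the real work is the potential-theoretic control of the minimizer's Green potential on $S_0$ --- ruling out that the solution ``hides'' mass away from $S_0$ or behaves irregularly on a polar portion of $S_0$ --- and the thickness hypothesis is exactly the minimal regularity of $S_0$ under which that control is available (which is also why the case ``$S_0$ empty'' is listed separately: then there is no degenerate locus). The remaining technical points --- finiteness of $\mathcal{F}$ on a nonempty admissible set (from boundedness of $g$ and $\Im z$ on the compact $\Gamma^+$) and $\lambda$-absolute continuity of the minimizing measure --- are routine.
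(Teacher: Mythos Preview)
The paper does not itself prove Theorem~\ref{secondthm}; it quotes the result from \cite{KT2021} and only remarks that there ``each of the equations was considered as a variational equation for the minimizer of the Green's energy functional.'' Your proposal is precisely that variational route: minimize the Green energy with the $\sigma u^2$ penalty and linear $\Im z$ term, obtain the Frostman conditions, and upgrade them using the thickness hypothesis. So your approach is the same one the paper points to, and the overall strategy is correct.

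A couple of points worth tightening. First, you conflate uniqueness of the minimizer with uniqueness of solutions to \eqref{NDR}; the latter follows most cleanly by subtracting two putative solutions $u_1,u_2$, pairing the resulting homogeneous equation against $u_1-u_2$, and invoking strict positive definiteness of the Green kernel together with $\sigma\geq 0$. Second, your maximum-principle step (showing the Frostman inequality is an equality on $\Gamma^+\setminus\supp u$) uses that $U-\Im z\leq 0$ on $\supp u$ as boundary data; this needs continuity of $U$ up to $\supp u$, which on $S_0$ you do not get from the $u=(\Im z-U)/\sigma$ bootstrap. In \cite{KT2021} this is handled uniformly: one works with the superharmonic function $U-\Im z$ on all of $\C^+$, uses that it is $\leq 0$ q.e.\ on $\supp u$ and $=0$ on $\R$, and invokes the domination principle for Green potentials rather than the classical maximum principle, which sidesteps the continuity issue. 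Your thickness argument for passing from q.e.\ to everywhere on $S_0$ is the right idea and is exactly why the hypothesis is there. Finally, the ``routine'' absolute continuity of the minimizer with respect to $\lambda$ is not entirely free; in \cite{KT2021} it comes from the specific structure of $\lambda$ (arclength on curves or area on regions) and a balayage-type comparison, so you should flag where this enters.
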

Similar results can be proven for
%the second equation
\eqref{NDR2} with the exception
that, in general, $v(z)$ can take both positive and negative values on $\G^+$.
In particular, the statement of Theorem \ref{secondthm} is valid in the case $\s\equiv 0$, i.e., in the case of a soliton condensate. The ratio
\begin{gather}\label{v-eff}
s(z)=\frac{v(z)}{u(z)}
\end{gather}
represents the effective velocity of a tracer soliton (elements of the gas with the spectral characteristic $z$) in the soliton gas.

Soliton gases (and condensates) discussed so far were equilibrium soliton gases, that is, the spectral properties of these gases were uniform all over the physical $ (x,t)$-plane. One can also consider non-equilibrium soliton gases, where the parameters of the NDR \eqref{NDR}--\eqref{NDR2} are slowly varying with $ (x,t)$ on large $ (x,t)$ scales. That is, we assume that $\s=\s(z;x,t)$, $\G^+=\G^+(x,t)$ and so, the DOS $u=u(z;x,t)$ and DOF $v=v(z;x,t)$. In this case, the NDR \eqref{NDR}--\eqref{NDR2} should be supplemented by the ``continuity equation''
\begin{gather}\label{PDE}
\partial_t u(z;x,t)+\partial_x v(z;x,t)=0,
%\label{kin}
\end{gather}
thus forming the system known as kinetic equation see \cite{ET2020} (to be precise, often the NDR \eqref{NDR}--\eqref{NDR2} in the kinetic equation are replaced by the equation of state, which represents an integral equation for the unknown $s(z)$ in terms of $\G^+$ and $u(z)$).

Equation \eqref{PDE} can be used to illustrate 3 different scales naturally appearing in our approach to soliton gases: the large number of micro-scale soliton-soliton interactions allows one to derive the NDR describing meso-scale DOS $u(z)$ and DOF $v(z)$, assuming that at these relatively large space-time scales the compact $\G^+$ and the spectral scaling function $\s(z)$, determining $u$, $v$, are virtually independent on $x$, $t$. Finally, equation \eqref{PDE} describes the macro-scale dynamics of $u(z;x,t)$ and $v(z;x,t)$ for non-equilibrium soliton gases, where the large scale $x$, $t$ dependence of $\G^+$ and $\s$ has to be taken into account.

Even though the existence and uniqueness of the solutions to the NDR \eqref{NDR}--\eqref{NDR2} were established in Theorem \ref{secondthm}, the explicit form of such solutions is known in a very few special cases, such as, for example, periodic gases, see \cite{TW}, or bound state soliton condensates, where $\s\equiv 0$ and $\G^+\subset a+{\rm i}\R^+$ for some $a\in\R$. fNLS soliton gases with such $\G^+$ are called bound state gases. This name reflects the fact that, as can be easily seen from \eqref{NDR}--\eqref{NDR2}, the effective speed
\eqref{v-eff}
for bound state gases is $s=-4 a$ for all $z\in\G^+$, i.e., all the elements (tracers) of the gas have the same effective velocity. It was observed in \cite{KT2021} that, for a bound state condensate, the DOS $u(z)$ is proportional to the $\frac{{\rm d}p}{{\rm d}z}$, where ${\rm d}p$ is the quasimomentum differential on the (limiting) hyperelliptic Riemann surface $\Rscr$ defined by $\G$. Similar results are valid for KdV soliton condensates with $u(z)$, $v(z)$ proportional to $\frac{{\rm d}p}{{\rm d}z}$, $\frac{{\rm d}q}{{\rm d}z}$, where ${\rm d}q$ is the quasienergy differential on $\Rscr$. For non-equilibrium KdV soliton condensates, it was proved in \cite{CERT} that the integro-differential kinetic equation for
the DOS $u=u(z;x,t)$ and DOF $v=v(z;x,t)$
reduces to the multi-phase KdV-Whitham modulation equations for the endpoints of $\G$ derived by
Flaschka, Forest and McLaughlin \cite{FFM} and Lax and Levermore \cite{LL83}.
Riemann problems for soliton condensates and explicit solutions for
the kinetic equation describing generalized rarefaction and dispersive shock waves were recently considered in \cite{CERT}.

Extension of these results of \cite{CERT} to the fNLS bound states condensate is trivial since, as it was mentioned earlier, $s(z;x,t)$ is constant. In the
present paper we provide the first known explicit solutions of the NDR \eqref{NDR}--\eqref{NDR2} for a
class of
non bound state fNLS condensates (i.e., $\s=0$), namely, for circular condensates, where the contour $\G^+$ is a collection of arcs (bands) located on an upper semicircle $|z|=\rho$, where $\rho$ is a real positive constant.

We then obtain the Whitham equations describing the $x$, $t$ dynamics of the endpoints of the arcs of $\G^+$,
as well as some explicit solutions for
the kinetic equation describing generalized rarefaction and dispersive shock waves.
We also analyze the kurtosis $\kappa$
for equilibrium and non-equilibrium circular fNLS gases.
Here the kurtosis is the fourth
normalized moment
\begin{align*}
\kappa=\frac{\braket{|\psi|^4}}{\braket{|\psi|^2}^2}
\end{align*}
of the probability density function (PDF) of the random wave amplitude $\psi$.
The fNLS evolution of
the so-called partially coherent waves, whose amplitude is
given by a slowly varying random function with a given
(e.g., Gaussian) statistics, was studied in \cite{CERTRS}. In particular, it was shown there that long time fNLS evolution of partially coherent waves leads to the doubling of the initial $	\kappa$, so that the initial $	\kappa_0=2$, corresponding to the Gaussian distribution, eventually becomes $	\kappa_\infty=4$, indicating ``fat tail'' distribution and, thus, potential presence of rogue waves. In Section \ref{sec-kurt}, we calculate the kurtosis $\kappa$ for genus one and zero circular condensates and show that $\kappa=2$ for a rarefaction wave and $\kappa>2$ for a dispersive shock wave.
Considering special families of genus one condensates in the limit of diminishing bands, we show that
any value $\kappa>2$ can be approached along certain trajectories in the corresponding parameter space.

The rest of the paper is organized as follows: in Section~\ref{sec-Main}, we present the main results of the paper (Theorems~\ref{th-main-1} and~\ref{th-main-2}). The proof of Theorem~\ref{th-main-1} is given in Section~\ref{s2}. In Section~\ref{s3}, we apply Theorem~\ref{th-main-1} to study the circular condensates in genus~0 and~1 cases, where the explicit formulae for the DOS and DOF are presented. In Section~\ref{sec-Riem-prob}, we study the modulational dynamics of circular condensates.
In Section~\ref{sec-kurt}, we analyze the kurtosis of the genus 0 and 1 circular condensates.

\subsection{Main results}
\label{sec-Main}

In this paper, we study the fNLS soliton condensate supported on a compact $\G^+\subset S^+$, where $S^+\subset \overline{\C^+}$ is the centered at $z=0$ semicircle of the radius $\rho>0$.

We consider $\G^+=\bigcup_{k=1}^N\{\rho {\rm e}^{{\rm i}\theta}\colon \theta\in [\alpha_{2k-2},\alpha_{2k-1}]\}$, where $0\leq \a_0 < \a_1<\cdots<\a_{2N-1}{\leq \pi}$,
a~collection
of $N\in \N$ closed non-degenerate arcs (bands) of $S^+$.
We call $z_j=\rho {\rm e}^{{\rm i}\a_j}$, $j=0,\dots,2N-1$, endpoints of the bands. %(branchpoints of the hyperelliptic Riemann surface )
The bands are interlaced with $N+1$ gaps lying on $S^+$, where the arcs from $z=\pm\rho$ to the nearest band are also considered as gaps, see Figure \ref{fig:my_label}. Any of the latter gaps are considered to be collapsed if the corresponding $\pm\rho\in\G^+$. We also take the reference measure $\l(w)$ in \eqref{NDR}--\eqref{NDR2} to be simply the arclength. The conformal map
\begin{gather}\label{conf-map-p}
p(z)=\hf\left(\frac z\rho+\frac \rho z\right)
\end{gather}
maps $\C^+$ onto $\C$ with two branch cuts from $\pm 1$ to $\pm\infty $ respectively,
where $p(S^+)=[-1,1]$. Let~$\mathfrak R_N$ denotes the hyperelliptic Riemann surface with the branch cuts on $(-\infty,-1]$, $[1,\infty)$ and on
\begin{gather*}
\hat \G^+=p\bigl(\G^+\bigr)\subset [-1,1],
\end{gather*}
where
$p(\G^+)=\bigcup_{k=1}^N[a_{2k-2},a_{2k-1}]$ and $a_j=p(z_j)=p(\rho\exp({\rm i}\a_j))$.
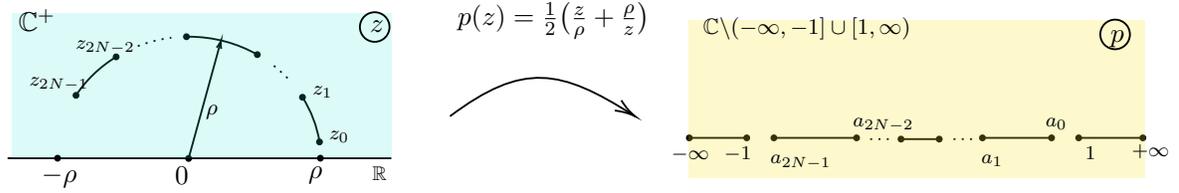
\begin{figure}[th]
    \centering
    \resizebox{0.99\textwidth}{!}{

\tikzset{every picture/.style={line width=0.75pt}} %set default line width to 0.75pt

\begin{tikzpicture}[x=0.75pt,y=0.75pt,yscale=-1,xscale=1]
%uncomment if require: \path (0,300); %set diagram left start at 0, and has height of 300

%Straight Lines [id:da11043395115642407]
\draw    (61,166) -- (268.33,166) ;
%Shape: Arc [id:dp5857971438825527]
\draw  [draw opacity=0] (97.96,132.01) .. controls (103.47,123.68) and (110.89,116.57) .. (119.64,111.21) -- (158.8,167.13) -- cycle ; \draw    (97.96,132.01) .. controls (103.47,123.68) and (110.89,116.57) .. (119.64,111.21) ; \draw [shift={(119.64,111.21)}, rotate = 323.47] [color={rgb, 255:red, 0; green, 0; blue, 0 }  ][fill={rgb, 255:red, 0; green, 0; blue, 0 }  ][line width=0.75]      (0, 0) circle [x radius= 1.34, y radius= 1.34]   ; \draw [shift={(97.96,132.01)}, rotate = 308.77] [color={rgb, 255:red, 0; green, 0; blue, 0 }  ][fill={rgb, 255:red, 0; green, 0; blue, 0 }  ][line width=0.75]      (0, 0) circle [x radius= 1.34, y radius= 1.34]   ;
%Shape: Arc [id:dp48308113060202285]
\draw  [draw opacity=0] (157.63,100.31) .. controls (158.02,100.3) and (158.41,100.3) .. (158.8,100.3) .. controls (172.39,100.3) and (185.09,103.84) .. (195.91,109.99) -- (158.8,167.13) -- cycle ; \draw    (157.63,100.31) .. controls (158.02,100.3) and (158.41,100.3) .. (158.8,100.3) .. controls (172.39,100.3) and (185.09,103.84) .. (195.91,109.99) ; \draw [shift={(195.91,109.99)}, rotate = 25.21] [color={rgb, 255:red, 0; green, 0; blue, 0 }  ][fill={rgb, 255:red, 0; green, 0; blue, 0 }  ][line width=0.75]      (0, 0) circle [x radius= 1.34, y radius= 1.34]   ; \draw [shift={(157.63,100.31)}, rotate = 3.52] [color={rgb, 255:red, 0; green, 0; blue, 0 }  ][fill={rgb, 255:red, 0; green, 0; blue, 0 }  ][line width=0.75]      (0, 0) circle [x radius= 1.34, y radius= 1.34]   ;
%Shape: Arc [id:dp5615899707798235]
\draw  [draw opacity=0] (220.31,133.04) .. controls (224.93,140.31) and (228.12,148.48) .. (229.51,157.2) -- (158.8,167.13) -- cycle ; \draw    (220.31,133.04) .. controls (224.93,140.31) and (228.12,148.48) .. (229.51,157.2) ; \draw [shift={(229.51,157.2)}, rotate = 75.29] [color={rgb, 255:red, 0; green, 0; blue, 0 }  ][fill={rgb, 255:red, 0; green, 0; blue, 0 }  ][line width=0.75]      (0, 0) circle [x radius= 1.34, y radius= 1.34]   ; \draw [shift={(220.31,133.04)}, rotate = 62.88] [color={rgb, 255:red, 0; green, 0; blue, 0 }  ][fill={rgb, 255:red, 0; green, 0; blue, 0 }  ][line width=0.75]      (0, 0) circle [x radius= 1.34, y radius= 1.34]   ;
%Shape: Arc [id:dp36745803345523487]
\draw  [draw opacity=0][dash pattern={on 0.84pt off 2.51pt}] (131.32,105.41) .. controls (137.04,103.19) and (143.13,101.63) .. (149.49,100.86) -- (158.8,167.13) -- cycle ; \draw  [dash pattern={on 0.84pt off 2.51pt}] (131.32,105.41) .. controls (137.04,103.19) and (143.13,101.63) .. (149.49,100.86) ;
%Shape: Arc [id:dp11470689402830425]
\draw  [draw opacity=0][dash pattern={on 0.84pt off 2.51pt}] (204.84,116) .. controls (208.03,118.51) and (210.98,121.28) .. (213.66,124.27) -- (158.8,167.13) -- cycle ; \draw  [dash pattern={on 0.84pt off 2.51pt}] (204.84,116) .. controls (208.03,118.51) and (210.98,121.28) .. (213.66,124.27) ;
%Straight Lines [id:da16372735936793958]
\draw    (158.8,166.13) -- (175.8,104.26) ;
\draw [shift={(176.33,102.33)}, rotate = 105.37] [color={rgb, 255:red, 0; green, 0; blue, 0 }  ][line width=0.75]    (4.37,-1.32) .. controls (2.78,-0.56) and (1.32,-0.12) .. (0,0) .. controls (1.32,0.12) and (2.78,0.56) .. (4.37,1.32)   ;
\draw [shift={(158.8,166.13)}, rotate = 285.37] [color={rgb, 255:red, 0; green, 0; blue, 0 }  ][fill={rgb, 255:red, 0; green, 0; blue, 0 }  ][line width=0.75]      (0, 0) circle [x radius= 1.34, y radius= 1.34]   ;
%Curve Lines [id:da5763794809339404]
\draw    (300,143.33) .. controls (339.6,113.63) and (359.92,117.91) .. (398.82,142.58) ;
\draw [shift={(400,143.33)}, rotate = 212.56] [color={rgb, 255:red, 0; green, 0; blue, 0 }  ][line width=0.75]    (10.93,-3.29) .. controls (6.95,-1.4) and (3.31,-0.3) .. (0,0) .. controls (3.31,0.3) and (6.95,1.4) .. (10.93,3.29)   ;
%Shape: Rectangle [id:dp5634612549455493]
\draw  [draw opacity=0][fill={rgb, 255:red, 33; green, 235; blue, 212 }  ,fill opacity=0.16 ][line width=0.75]  (63,87.67) -- (268.33,87.67) -- (268.33,166) -- (63,166) -- cycle ;
%Straight Lines [id:da040773734353770186]
\draw    (429.33,155) -- (460.33,155) ;
\draw [shift={(460.33,155)}, rotate = 0] [color={rgb, 255:red, 0; green, 0; blue, 0 }  ][fill={rgb, 255:red, 0; green, 0; blue, 0 }  ][line width=0.75]      (0, 0) circle [x radius= 1.34, y radius= 1.34]   ;
\draw [shift={(429.33,155)}, rotate = 0] [color={rgb, 255:red, 0; green, 0; blue, 0 }  ][fill={rgb, 255:red, 0; green, 0; blue, 0 }  ][line width=0.75]      (0, 0) circle [x radius= 1.34, y radius= 1.34]   ;
%Straight Lines [id:da6802437508579728]
\draw    (639.67,155) -- (674.33,155) ;
\draw [shift={(674.33,155)}, rotate = 0] [color={rgb, 255:red, 0; green, 0; blue, 0 }  ][fill={rgb, 255:red, 0; green, 0; blue, 0 }  ][line width=0.75]      (0, 0) circle [x radius= 1.34, y radius= 1.34]   ;
\draw [shift={(639.67,155)}, rotate = 0] [color={rgb, 255:red, 0; green, 0; blue, 0 }  ][fill={rgb, 255:red, 0; green, 0; blue, 0 }  ][line width=0.75]      (0, 0) circle [x radius= 1.34, y radius= 1.34]   ;
%Straight Lines [id:da26312804044861404]
\draw    (475,155) -- (519.67,155) ;
\draw [shift={(519.67,155)}, rotate = 0] [color={rgb, 255:red, 0; green, 0; blue, 0 }  ][fill={rgb, 255:red, 0; green, 0; blue, 0 }  ][line width=0.75]      (0, 0) circle [x radius= 1.34, y radius= 1.34]   ;
\draw [shift={(475,155)}, rotate = 0] [color={rgb, 255:red, 0; green, 0; blue, 0 }  ][fill={rgb, 255:red, 0; green, 0; blue, 0 }  ][line width=0.75]      (0, 0) circle [x radius= 1.34, y radius= 1.34]   ;
%Straight Lines [id:da6905734130317058]
\draw    (543.67,155.67) -- (564.33,155.67) ;
\draw [shift={(564.33,155.67)}, rotate = 0] [color={rgb, 255:red, 0; green, 0; blue, 0 }  ][fill={rgb, 255:red, 0; green, 0; blue, 0 }  ][line width=0.75]      (0, 0) circle [x radius= 1.34, y radius= 1.34]   ;
\draw [shift={(543.67,155.67)}, rotate = 0] [color={rgb, 255:red, 0; green, 0; blue, 0 }  ][fill={rgb, 255:red, 0; green, 0; blue, 0 }  ][line width=0.75]      (0, 0) circle [x radius= 1.34, y radius= 1.34]   ;
%Straight Lines [id:da7457553690145493]
\draw    (587.67,155) -- (625,155) ;
\draw [shift={(625,155)}, rotate = 0] [color={rgb, 255:red, 0; green, 0; blue, 0 }  ][fill={rgb, 255:red, 0; green, 0; blue, 0 }  ][line width=0.75]      (0, 0) circle [x radius= 1.34, y radius= 1.34]   ;
\draw [shift={(587.67,155)}, rotate = 0] [color={rgb, 255:red, 0; green, 0; blue, 0 }  ][fill={rgb, 255:red, 0; green, 0; blue, 0 }  ][line width=0.75]      (0, 0) circle [x radius= 1.34, y radius= 1.34]   ;
%Straight Lines [id:da7312633751464399]
\draw  [dash pattern={on 0.84pt off 2.51pt}]  (527,155.67) -- (538,155.67) ;
%Straight Lines [id:da15520767895756094]
\draw  [dash pattern={on 0.84pt off 2.51pt}]  (572,155.67) -- (583,155.67) ;
%Shape: Rectangle [id:dp17095356908229942]
\draw  [draw opacity=0][fill={rgb, 255:red, 248; green, 231; blue, 28 }  ,fill opacity=0.22 ][line width=0.75]  (429,91.33) -- (675.67,91.33) -- (675.67,176.67) -- (429,176.67) -- cycle ;

\draw [shift={(88,166)}, rotate = 0] [color={rgb, 255:red, 0; green, 0; blue, 0 }  ][fill={rgb, 255:red, 0; green, 0; blue, 0 }  ][line width=0.75]      (0, 0) circle [x radius= 1.34, y radius= 1.34]   ;
\draw [shift={(230,166)}, rotate = 0] [color={rgb, 255:red, 0; green, 0; blue, 0 }  ][fill={rgb, 255:red, 0; green, 0; blue, 0 }  ][line width=0.75]      (0, 0) circle [x radius= 1.34, y radius= 1.34]   ;

% Text Node
\draw (256.67,169.07) node [anchor=north west][inner sep=0.75pt]  [font=\scriptsize]  {$\mathbb{{\textstyle R}}$};
% Text Node
\draw (233.33,149.73) node [anchor=north west][inner sep=0.75pt]  [font=\scriptsize]  {$z _{0}$};
% Text Node
\draw (224.33,125.73) node [anchor=north west][inner sep=0.75pt]  [font=\scriptsize]  {$z _{1}$};
% Text Node
%\draw (196.33,99.73) node [anchor=north west][inner sep=0.75pt]  [font=\scriptsize]  {$z _{2k}$};
%% Text Node
%\draw (155.33,85.73) node [anchor=north west][inner sep=0.75pt]  [font=\scriptsize]  {$z _{2k+1}$};
% Text Node
\draw (71.33,120.73) node [anchor=north west][inner sep=0.75pt]  [font=\scriptsize]  {$z _{2N-1}$};
% Text Node
\draw (96.33,100.73) node [anchor=north west][inner sep=0.75pt]  [font=\scriptsize]  {$z _{2N-2}$};
% Text Node
\draw (150,169.07) node [anchor=north west][inner sep=0.75pt]    {$0$};
\draw (78,169.07) node [anchor=north west][inner sep=0.75pt]    {$-\rho$};
\draw (222,169.07) node [anchor=north west][inner sep=0.75pt]    {$\rho$};
% Text Node
\draw (166.67,135.73) node [anchor=north west][inner sep=0.75pt]  [font=\footnotesize]  {$\rho $};
% Text Node
\draw (302.67,80.07) node [anchor=north west][inner sep=0.75pt]  [font=\normalsize]  {$p(z) =\frac{1}{2}\bigl(\frac{z}{\rho } +\frac{\rho }{z}\bigr)$};
% Text Node
\draw (620.33,142.73) node [anchor=north west][inner sep=0.75pt]  [font=\scriptsize]  {$a_{0}$};
% Text Node
\draw (585.67,162.73) node [anchor=north west][inner sep=0.75pt]  [font=\scriptsize]  {$a_{1}$};
% Text Node
%\draw (561,142.73) node [anchor=north west][inner sep=0.75pt]  [font=\scriptsize]  {$a_{2k}$};
%% Text Node
%\draw (539,162.73) node [anchor=north west][inner sep=0.75pt]  [font=\scriptsize]  {$a_{2k+1}$};
% Text Node
\draw (516.33,142.73) node [anchor=north west][inner sep=0.75pt]  [font=\scriptsize]  {$a_{2N-2}$};
% Text Node
\draw (471.67,162.73) node [anchor=north west][inner sep=0.75pt]  [font=\scriptsize]  {$a_{2N-1}$};
% Text Node
\draw (641.67,158.4) node [anchor=north west][inner sep=0.75pt]  [font=\scriptsize]  {$1$};
% Text Node
\draw (446.83,158.4) node [anchor=north west][inner sep=0.75pt]  [font=\scriptsize]  {$-1$};
% Text Node
\draw (418.33,158.4) node [anchor=north west][inner sep=0.75pt]  [font=\scriptsize]  {$-\infty $};
% Text Node
\draw (667,157.07) node [anchor=north west][inner sep=0.75pt]  [font=\scriptsize]  {$+\infty $};
% Text Node
\draw (65.33,84.07) node [anchor=north west][inner sep=0.75pt]    {$\mathbb{C}^{+}$};
\draw (255.33,90.07) node [anchor=north west][inner sep=0.75pt]    {{$z$}};
\node[circle,draw=black, inner sep=0pt,minimum size=12pt] at (259.33,95.07) {};
% Text Node
\draw (435.33,88.07) node [anchor=north west][inner sep=0.75pt]  [font=\footnotesize]  {$\mathbb{C} \backslash {\textstyle ( -\infty ,-1] \cup [ 1,\infty )}$};
\draw (655.33,95.07) node [anchor=north west][inner sep=0.75pt]    {$p$};
\node[circle,draw=black, inner sep=0pt,minimum size=12pt] at (659.33,99.07) {};

\end{tikzpicture}

    }
    \caption{Illustration of the conformal map $p(z)$ from the upper half plane with $n$ circular slits (namely, $\bigcup_{k=1}^N\{\rho e^{i\theta}\colon \theta\in [\alpha_{2k-2},\alpha_{2k-1}]\}$) to $\C\backslash (-\infty,-1]\cup(1,\infty]$ with $N$ slits (namely,    $\bigcup_{k=1}^N[a_{2k-2},a_{2k-1}]$, where $a_j=p(\rho\exp(i\a_j))$) within the interval $[-1,1]$. In the left figure, $z_j=\rho e^{i\alpha_j}$ indicate the branch points. The genus of $\mathfrak R_N$, see the right plane,  is $N$.}
    \label{fig:my_label}
\end{figure} 

Denote by $\hat u(p)$, $\hat v(p)$ solutions of singular integral equations
\begin{gather}\label{Hilb-p}
\pi H[\hat u](p):=\int_{\hat \G^+}\frac{\hat u(q){\rm d}q}{q-p}=-p, \qquad \pi H[\hat v](p)=4\rho\bigl(2p^2-1\bigr) \qquad \text{on}\ \hat\G^+ ;
\end{gather}
here $H$ denotes the finite Hilbert transform (FHT) on $\hat\G^+$. It is straightforward to show (see, for example, \cite{KT2021}) that
\be\label{hat-uv}
\hat u(p)= \frac{P(p)}{R(p)}, \qquad
\hat v(p)=\frac{Q(p)}{ R(p)},
\ee
where $P$, $Q$ are polynomials of degree $N+1$ and $N+2$ respectively and \smash{$R(p)=\prod_j(p-p(z_j))^\hf$} taken over all the endpoints of $\hat \G^+$ and normalized by $R(p)\sim p^N$ as $p\ra\infty$ in $\C$.
However, $P$, $Q$ are not uniquely defined by \eqref{Hilb-p}, since the FHT $H$ has an $N$-dimensional kernel. The following theorem establishes solutions of the NDR \eqref{NDR}--\eqref{NDR2} in terms of $\hat u$, $\hat v$.
\begin{Theorem}\label{th-main-1}
The NDR \eqref{NDR}--\eqref{NDR2} for fNLS circular soliton condensate $($i.e., with $\sigma\equiv 0)$ have solutions
\begin{gather}%\label{NDRinMain}
u(z)= \hat u(p(z))= \frac{P(p(z))}{ R(p(z))}, \qquad \text{where} \quad \pi H[\hat u]=-p \ \text{on}\ \hat\G^+=p(\G^+),\nonumber\\
v(z)= \hat v(p(z))=\frac{ Q(p(z))}{ R(p(z))},\qquad \text{where} \quad \pi H[\hat v]=4\rho(2p^2-1)
\ \text{on}\ \hat\G^+=p(\G^+). \label{NDRinMain2}
\end{gather}
Here $P(p)$, $Q(p)$ are polynomials of degrees $N+1$, $N+2$ respectively, $p\in\R$, \smash{$\frac{\hat u(p){\rm d}p}{\sqrt{1-p^2}}$}, \smash{$\frac{\hat v(p){\rm d}p}{\sqrt{1-p^2}}$} are
second kind meromorphic differentials on $\mathcal{R}_N$ and
$u$, $v$ satisfy the conditions
\begin{gather} \label{gap-cond}
\int_{c_j} u(z)\,{\rm d}\arg z=0, \qquad
\int_{c_j} v(z)\,{\rm d}\arg z=0
\end{gather}
for all
$j=0,\dots,N$, where $c_j=\{\rho {\rm e}^{{\rm i}\theta}\colon \theta\in (\alpha_{2j-1},\alpha_{2j})\}$, $\alpha_{-1}=0$, and $\alpha_{2N}=\pi$, denote the gaps on $S^+\setminus \G^+$ where
%the semicircle
$S^+=\{|z|=\rho,\, 0\leq \arg z\leq \pi\}$ $($see Figure~$\ref{fig:my_label})$. In the case when an endpoint $\pm\rho\in\G^+$ and so the corresponding gap $c_j$ collapses, the corresponding integral conditions in~\eqref{gap-cond} should be replaced by $u(\pm\rho)=0$,
$v(\pm\rho)=0$. Also, $u(z),v(z)\in\R$ on $\G^+$ and $u>0$ on $\G^+\setminus\R$.
\end{Theorem}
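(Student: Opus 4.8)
The plan is to pull the problem back under the conformal map $p$, reducing \eqref{NDR}--\eqref{NDR2} to the finite Hilbert transform equations \eqref{Hilb-p} together with the period conditions \eqref{gap-cond}; the rational form of $\hat u,\hat v$, their reality and positivity, and existence/uniqueness then follow from \cite{KT2021}. For the first step, parametrize $S^+$ by $z=\rho\,{\rm e}^{{\rm i}\theta}$, $\theta\in[0,\pi]$, so that ${\rm d}\l(w)=\rho\,{\rm d}\phi$ and $p(z)=\cos\theta$. From $|z-w|^2=4\rho^2\sin^2\frac{\theta-\phi}{2}$ and $|z-\bar w|^2=4\rho^2\sin^2\frac{\theta+\phi}{2}$ one gets $\log\bigl|\frac{z-\bar w}{z-w}\bigr|=\log\bigl|\sin\frac{\theta+\phi}{2}\bigr|-\log\bigl|\sin\frac{\theta-\phi}{2}\bigr|$. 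Differentiating \eqref{NDR} in $\theta$ along $S^+$ (legitimate since $u$ is real-analytic in the interior of each band, see \cite{KT2021}) and using the identity $\frac12\bigl(\cot\frac{\theta+\phi}{2}-\cot\frac{\theta-\phi}{2}\bigr)=\frac{\sin\phi}{\cos\theta-\cos\phi}$ turns the kernel into $\frac{\sin\phi}{p-q}$, $q=\cos\phi$; since $\sin\phi\,{\rm d}\phi=-{\rm d}q$, the substitution $q=\cos\phi$ converts \eqref{NDR} into $\int_{\hat\G^+}\frac{\hat u(q)\,{\rm d}q}{q-p}=-p$, where $\hat u(\cos\theta):=u(\rho\,{\rm e}^{{\rm i}\theta})$, i.e.\ the first equation in \eqref{Hilb-p}. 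Likewise, as $\frac{{\rm d}}{{\rm d}\theta}\bigl(-4\Re z\,\Im z\bigr)=-4\rho^2\cos2\theta$ and $\cos2\theta=2p^2-1$, equation \eqref{NDR2} becomes $\int_{\hat\G^+}\frac{\hat v(q)\,{\rm d}q}{q-p}=4\rho(2p^2-1)$. By \cite{KT2021} the solutions of these equations are $\hat u=P/R$, $\hat v=Q/R$ with $\deg P=N+1$, $\deg Q=N+2$, which gives the displayed formulas for $u,v$; that $\frac{\hat u(p)\,{\rm d}p}{\sqrt{1-p^2}}$, $\frac{\hat v(p)\,{\rm d}p}{\sqrt{1-p^2}}$ are second-kind differentials on $\mathcal R_N$ is then immediate, since $R(p)\sqrt{1-p^2}$ is, up to a constant, the radical defining $\mathcal R_N$ and the degrees of $P,Q$ (together with the leading-coefficient constraints imposed by \eqref{Hilb-p}) keep the only poles at $\infty_\pm$ and residue-free.

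For the period conditions, set $\Phi(p)=\int_{\hat\G^+}\frac{\hat u(q)\,{\rm d}q}{q-p}$, analytic off $\hat\G^+$. Using $R_+=-R_-$ on $\hat\G^+$, the Sokhotski--Plemelj formulas, and \eqref{Hilb-p} (which fixes the two top coefficients of $P$), a Liouville-type argument shows that $\Phi(p)+p-{\rm i}\pi\,P(p)/R(p)$ has vanishing boundary values on both sides of $\hat\G^+$, is entire, is linear at $\infty$ with both coefficients forced to vanish, and hence vanishes identically; thus $\Phi(p)+p={\rm i}\pi\,\hat u(p)$ for $p\notin\hat\G^+$. Now let $G(z)=\int_{\G^+}\log\bigl|\frac{z-\bar w}{z-w}\bigr|u(w)\,{\rm d}\l(w)$; then $G\equiv0$ on $\R$ (the Green function of $\C^+$ vanishes there) and $G\equiv\Im z$ on $\G^+$ by \eqref{NDR}, so $G-\Im z$ vanishes at both endpoints of every gap $c_j$ (band endpoints, or $\pm\rho\in\R$). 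On a gap, the computation of the first step gives $\frac{{\rm d}}{{\rm d}\theta}\bigl(G(\rho\,{\rm e}^{{\rm i}\theta})-\Im z\bigr)=-\rho\bigl(\Phi(\cos\theta)+\cos\theta\bigr)=-{\rm i}\pi\rho\,u(z)$, and integrating over $c_j$ yields $\int_{c_j}u(z)\,{\rm d}\arg z=0$; the same argument, with right-hand side $-4\Re z\,\Im z$ (also vanishing on $\R$), gives $\int_{c_j}v(z)\,{\rm d}\arg z=0$. If $\pm\rho\in\G^+$ then $a_j=p(\pm\rho)=\pm1$, $R$ vanishes there, and vanishing of $G-\Im z$ at $\pm\rho$ forces $P(\pm1)=0$, i.e.\ $u(\pm\rho)=0$ (and $v(\pm\rho)=0$), the stated replacement for the degenerate integral condition.

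It remains to assemble the conclusions. Existence and uniqueness of $u$ solving \eqref{NDR} are Theorem \ref{secondthm}, and of $v$ solving \eqref{NDR2} the analogous statement quoted after it; combined with the two steps above this shows the $N$-dimensional freedom in $P$ (resp.\ $Q$) is pinned down by \eqref{gap-cond} (of the $N+1$ conditions listed, $N$ are independent), and the converse also holds, by running the computation backwards: any $\hat u=P/R$, $\hat v=Q/R$ solving \eqref{Hilb-p} and \eqref{gap-cond} yield, via $\Phi+p={\rm i}\pi\hat u$, solutions of \eqref{NDR}--\eqref{NDR2}. Reality of $u,v$ on $\G^+$ and $u\ge0$ are part of Theorem \ref{secondthm}; since $u=P(p(z))/R(p(z))$ is real-analytic on each open band, $u\ge0$ there forces $u>0$ or $u\equiv0$, and $u\equiv0$ is impossible ($P\equiv0$ would contradict the right-hand side $\Im z\not\equiv0$ of \eqref{NDR}), so $u>0$ on $\G^+\setminus\R$, with the square-root zeros of $R$ accounting for the usual $(p-a_j)^{-1/2}$ blow-up at interior endpoints.

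The reduction in the first step is a direct, if delicate, computation carried by the trigonometric kernel identity and the Jacobian of $p=\cos\theta$. The genuine work is the identity $\Phi(p)+p={\rm i}\pi\hat u(p)$ off $\hat\G^+$ together with the bookkeeping of the branch of $R$ (purely imaginary on each band, purely real on the gaps, which is exactly what makes $\frac{{\rm d}}{{\rm d}\theta}(G-\Im z)$ real on the gaps and $\int_{c_j}u\,{\rm d}\arg z$ a single real condition): this is what converts the piecewise-constant ambiguity left after differentiating \eqref{NDR} into precisely the period conditions \eqref{gap-cond}, and is the crux of the argument.
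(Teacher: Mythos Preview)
Your overall architecture matches the paper's proof: reduce via $p=\cos\theta$ and differentiate to obtain the finite Hilbert transform equations, then show that the $N$-dimensional kernel of the FHT is fixed by the gap conditions \eqref{gap-cond}. Your derivation of the gap conditions via the identity $\Phi(p)+p={\rm i}\pi\,P(p)/R(p)$ off $\hat\G^+$ and the fundamental theorem of calculus for $G-\Im z$ along each gap is essentially an explicit version of the paper's appeal to continuity of the Green's potential (the paper refers to \cite[Theorem~6.1]{KT2021} and \cite{ST}); this is fine, and in fact your Liouville step implicitly recovers the leading-coefficient computation \eqref{coeff-u} that the paper does separately.

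There is, however, a genuine gap in your positivity argument. The claim ``$u\ge 0$ and real-analytic on an open band forces $u>0$ or $u\equiv 0$'' is false (take $u(p)=(p-p_0)^2$). The paper instead counts zeros of $P$: each gap condition $\int_{c_j} u\,{\rm d}\arg z=0$ forces $P$ to change sign, hence vanish, somewhere on that gap; since there are $N+1$ gaps and $\deg P=N+1$, \emph{all} zeros of $P$ lie on the gaps, so $P$ has no zeros on the bands and $u=P/R\neq 0$ there. Combined with $u\ge 0$ from Theorem~\ref{secondthm} this gives $u>0$ on $\G^+\setminus\R$. You have already established the gap conditions, so this fix is available to you, but the argument as written does not go through.

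A smaller point: you assert that $\hat u\,{\rm d}p/\sqrt{1-p^2}$ is a second-kind (residue-free) differential with only a parenthetical justification. The paper verifies this by the explicit computation \eqref{coeff-u}; your Liouville argument in fact forces the same two leading coefficients of $P$, but you should say so, because the residue-free property is exactly what makes one of the $N+1$ gap conditions redundant and the remaining $N$ a genuine normalization.
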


According to Theorem \ref{secondthm}, a solution to NDR is unique.
We now state the second main theorem about non-uniform circular soliton condensate for the fNLS,
where the endpoints ${z_j=\rho {\rm e}^{{\rm i}\a_j}}$ of the bands on $S^+$ can depend on $x$, $t$.
It is governed by the equation \eqref{PDE}
and reflect large scale changes of $u$, $v$.

Let $\vec \a(x,t)=(\a_0(x,t),\dots,\a_{2N-1}(x,t))$ denote the vector of $x$, $t$ dependent endpoints of $\G^+$. Assuming smoothness of $u(z;x,t)$, $v(z;x,t)$ and
$\vec \a(x,t)$, and using Theorem \ref{th-main-1} , we follow the approach of \cite{FFM} to show that
the continuity equation \eqref{PDE} can be written as a Whitham type equations on the evolution of $\vec \a(x,t)$. A similar result for the KdV soliton condensates was obtained in \cite{CERT}.

\begin{Theorem}\label{th-main-2}
If the fNLS circular soliton condensate that described in Theorem $\ref{th-main-1}$, with ${\G^+\cap\R=\varnothing}$, is non-equilibrium, then \eqref{PDE} is equivalent to the system of modulation $($Whitham$)$ equations
% for endpoints $\a_j$ are
given by
\begin{gather}\label{Whit-alp}
\partial_t\a_j+V_j(\vec\a)\partial_x\a_j=0, \qquad j=0,\dots, 2N-1,
\end{gather}
where $z_j=\rho {\rm e}^{{\rm i}\a_j}$ and
%$\vec\a=(\a_0,\dots,\a_{2n-1})$ and
\begin{gather}\label{mod-vel}
V_j(\vec\a)=\frac{Q(\cos\a_j)}{P(\cos\a_j)}
\end{gather}
are bounded velocities, with $P$, $Q$ as in Theorem~$\ref{th-main-1}$.
\end{Theorem}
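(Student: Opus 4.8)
The plan is to follow the Flaschka--Forest--McLaughlin strategy: insert the algebraic formulas of Theorem~\ref{th-main-1} into \eqref{PDE}, differentiate in the branch points, and read off the characteristic velocities from the pole structure of the resulting meromorphic differentials on $\mathcal R_N$. Since $\Gamma^+\cap\mathbb R=\varnothing$, every $\alpha_j\in(0,\pi)$, so the images $a_j:=p(z_j)=\cos\alpha_j$ are distinct points of $(-1,1)$ and $\alpha_j\mapsto a_j$ is a local diffeomorphism with $\partial a_j=-\sin\alpha_j\,\partial\alpha_j$, $\sin\alpha_j\ne0$. Hence \eqref{Whit-alp} is equivalent to the system $\partial_t a_j+V_j\,\partial_x a_j=0$, $j=0,\dots,2N-1$, with the \emph{same} velocities $V_j=Q(a_j)/P(a_j)$, and it suffices to prove that this system is equivalent to \eqref{PDE}.

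Work in the $p$-plane. By Theorem~\ref{th-main-1}, $u=P(p)/R(p)$ and $v=Q(p)/R(p)$ with $p=p(z)$, $R(p)=\prod_j(p-a_j)^{1/2}\sim p^N$, and the forms $\omega_u=P(p)\,dp/\widetilde R(p)$, $\omega_v=Q(p)\,dp/\widetilde R(p)$, where $\widetilde R(p)=\sqrt{1-p^2}\,R(p)$, are second-kind meromorphic differentials on $\mathcal R_N$ with poles only at $\infty_\pm$ and vanishing periods over all gap cycles $c_0,\dots,c_N$ (this is the content of \eqref{Hilb-p} together with \eqref{gap-cond}; equivalently, $\omega_u$ and $\omega_v$ are the quasimomentum and quasienergy differentials of $\mathcal R_N$ transported to the $p$-variable). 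Crucially, the polar parts of $\omega_u$, $\omega_v$ at $\infty_\pm$ are prescribed by the $\vec a$-independent right-hand sides $-p$ and $4\rho(2p^2-1)$ of \eqref{Hilb-p}, hence are themselves independent of $\vec a$. For \eqref{PDE} the spectral variable $z$, and hence $p$, is held fixed; all $(x,t)$-dependence enters through $\vec a(x,t)$, and writing $\partial_t u=\sum_j\partial_{a_j}\hat u(p)\,\partial_t a_j$, $\partial_x v=\sum_j\partial_{a_j}\hat v(p)\,\partial_x a_j$, the claim reduces to
\[
\sum_{j=0}^{2N-1}\bigl(\partial_{a_j}\hat u(p)\,\partial_t a_j+\partial_{a_j}\hat v(p)\,\partial_x a_j\bigr)=0\quad\text{for every }p
\ \Longleftrightarrow\ \partial_t a_j+V_j\,\partial_x a_j=0\ \ \text{for all }j.
\]

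The heart of the argument is the structural claim that for each fixed $j$ the form $\partial_{a_j}\omega_u$ is again a second-kind differential on $\mathcal R_N$, with a single new double pole at the branch point over $p=a_j$ and no other poles, and with vanishing periods over all gap cycles. Differentiating $R^{-1}=\prod_k(p-a_k)^{-1/2}$ in $a_j$ produces the factor $\frac{1}{2(p-a_j)}$, so $\partial_{a_j}\omega_u=\frac{1}{\widetilde R}\bigl(\partial_{a_j}P+\frac{P}{2(p-a_j)}\bigr)dp$, whose only singularity besides $\infty_\pm$ is the expected $(p-a_j)^{-3/2}$ term, with leading coefficient proportional to $P(a_j)$; the poles at $\infty_\pm$ drop out because the polar parts there are $\vec a$-independent, so differentiating kills them. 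The delicate point is that differentiating \eqref{gap-cond} in $a_j$ — where the gap cycles adjacent to $a_j$ move with $a_j$ — still yields $\oint_{c_k}\partial_{a_j}\omega_u=0$ for all $k$: one must differentiate under the moving contour and check that the non-integrable endpoint contributions cancel against the endpoint singularity of $\partial_{a_j}\omega_u$ after a finite-part regularization. \textbf{This moving-boundary computation is the main obstacle}; everything else is bookkeeping. The same reasoning applies verbatim to $\omega_v$, whose polar parts at $\infty_\pm$ are likewise $\vec a$-independent, so $\partial_{a_j}\omega_v$ is second-kind with one new double pole over $a_j$, now with leading coefficient proportional to $Q(a_j)$ and the \emph{same} proportionality constant.

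A second-kind differential on $\mathcal R_N$ with a prescribed double pole over a fixed branch point $a_j$ and vanishing gap periods is unique up to a scalar; denote it $\Omega_j$. Then $\partial_{a_j}\omega_u=c_j\Omega_j$ and $\partial_{a_j}\omega_v=d_j\Omega_j$, and comparing the leading coefficients at $a_j$ gives $d_j/c_j=Q(a_j)/P(a_j)$; here $c_j\ne0$ since $\partial_{a_j}\omega_u\not\equiv0$, and $P(a_j)\ne0$ because $u$ has a square-root singularity with positive coefficient at the band endpoint $z_j$ (Theorem~\ref{th-main-1}), so $V_j=Q(a_j)/P(a_j)$ is finite, which is the asserted boundedness. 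The $2N$ differentials $\Omega_0,\dots,\Omega_{2N-1}$ are linearly independent, because the pole over $a_j$ can only come from $\Omega_j$. Passing from differentials to their density functions (via $\omega=(\text{density})\,dp/\sqrt{1-p^2}$) we obtain $\partial_t u+\partial_x v=\sum_j\bigl(c_j\,\partial_t a_j+d_j\,\partial_x a_j\bigr)\hat\Omega_j(p)$ with the $\hat\Omega_j$ linearly independent functions. Hence \eqref{PDE} holds for every $z$ if and only if $c_j\,\partial_t a_j+d_j\,\partial_x a_j=0$ for each $j$, i.e., $\partial_t a_j+\frac{d_j}{c_j}\partial_x a_j=\partial_t a_j+V_j\,\partial_x a_j=0$. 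Substituting $a_j=\cos\alpha_j$ and cancelling the common nonzero factor $-\sin\alpha_j$ turns this into \eqref{Whit-alp} with $V_j(\vec\alpha)=Q(\cos\alpha_j)/P(\cos\alpha_j)$, completing the proof.
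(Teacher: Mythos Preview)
Your argument is correct and follows the same Flaschka--Forest--McLaughlin strategy as the paper, but the organization differs. For the implication \eqref{PDE}$\Rightarrow$\eqref{Whit-alp} the paper is much more direct: write \eqref{PDE} as $(P/R)_t+(Q/R)_x=0$, multiply by $T:=R^2$ to get the polynomial identity $2T(P_t+Q_x)=PT_t+QT_x$, and evaluate at $p=a_j$ (where $T=0$ and only one term of $T_t,T_x$ survives) to read off $P(a_j)\partial_t a_j+Q(a_j)\partial_x a_j=0$. No differentiation in the branch points, no basis $\{\Omega_j\}$, no period arguments are needed for this direction. For the converse the paper argues, as you do, that $\Omega:=(\partial_t\hat u+\partial_x\hat v)\,dp$ is holomorphic with vanishing gap periods, hence zero; your decomposition $\partial_{a_j}\omega_u=c_j\Omega_j$, $\partial_{a_j}\omega_v=d_j\Omega_j$ with $d_j/c_j=Q(a_j)/P(a_j)$ and linear independence of the $\Omega_j$ gives both directions at once, which is cleaner structurally but heavier.

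The step you flag as ``the main obstacle'' is not actually an obstacle. The gap condition $\int_{c_k}\omega_u=0$ is exactly the vanishing of a closed $B$-cycle period $\oint_{B_k}\omega_u$ on $\mathfrak R_N$; since the $a_j$ vary smoothly without colliding (here $\Gamma^+\cap\mathbb R=\varnothing$ and the $a_j$ are distinct in $(-1,1)$), the closed loop $B_k$ can be fixed in the $p$-plane away from all branch points, and then $\partial_{a_j}\oint_{B_k}\omega_u=\oint_{B_k}\partial_{a_j}\omega_u$ is immediate---no moving endpoints, no finite-part regularization. The paper uses the same fact without comment. With this remark your proof is complete; the boundedness $P(a_j)\neq0$ follows exactly as you say (and as the paper notes) from the zeros of $P$ lying in the gaps.
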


\begin{proof}
Let us first obtain \eqref{Whit-alp}	 from \eqref{PDE}. According to Theorem \ref{th-main-1}, \eqref{PDE} can be written as
\begin{gather}\label{step1}
\left(\frac {P(p)}{R(p)}\right)_t+\left(\frac {Q(p)}{R(p)}\right)_x=0,
\end{gather}
which should be valid for all $p\in\hat\G^+$. Denoting $T=R^2$,
equation \eqref{step1} can be written as
\begin{gather}\label{KdVkin1}
2T(P_t+Q_x)=PT_t+QT_x \qquad {\rm or} \qquad 2(P_t+Q_x)=P(\log T)_t+Q(\log T)_x.
\end{gather}
%where $S(z)=R^2(z)=\prod_{j=0}^N(z^2-a_j^2)$.
Since
\begin{gather*}
(\log T)_r=-\sum_{j=0}^{2N-1}\frac{(a_j)_r}{p-a_j},
\end{gather*}
where $r$ is either $x$ or $t$,
the second equation \eqref{KdVkin1} yields
\begin{gather*}%\label{KdV-Whit}
P(a_j)(a_j)_t+Q(a_j)(a_j)_x=0, \qquad j=0,1,\dots,2N-1,
\end{gather*}
if we take limit $p\ra a_j$. Given $a_j=\cos \a_j$, the latter equation implies \eqref{Whit-alp}--\eqref{mod-vel}. Note that all the zeros of $P(p)$ are on the gaps and, thus, the velocities $V_j$ defined by \eqref{mod-vel} are bounded.

Assume now that the evolution of the endpoints satisfies
\eqref{Whit-alp} or, equivalently,
\begin{gather*}%\label{Whit-p}
\partial_t a_j+V_j(\vec a)\partial_x a_j=0, \qquad j=0,\dots, 2N-1,
\end{gather*}
where $\vec a=(a_0,\dots, a_{2N-1})$.
Then, following \cite{FFM}, we observe that the only poles of the differential $\Omega=\partial_t\hat u{\rm d}p+\partial_x\hat v{\rm d}p$ on $\mathfrak R_N$ are the second-order poles at each $a_j$. But the modulation equations~\eqref{Whit-alp} show that the principal parts of $\Omega$ at each $p=a_j$ is zero, that is, $\Omega$ is a holomorphic differential.
Since all the gap integrals (and, thus, the B-periods) of a holomorphic differential~$\Omega$ are zeros, we obtain the kinetic equation $\partial_t\hat u+\partial_x\hat v=0$, which is equivalent to \eqref{PDE}.
\end{proof}

The modulation equations
\eqref{Whit-alp}--\eqref{mod-vel} form a strictly hyperbolic system of first-order quasilinear PDEs provided that all the branchpoints are distinct (otherwise it will be just hyperbolic). This system is in the diagonal (Riemann) form with all the coefficients (velocities) being real. The Cauchy data for this system consists of $\vec\a(x,0)$. The system has a unique local (classic) real solution provided $\vec\a(x,0)$ is of $C^1$ class, see \cite[Theorem 7.8.1]{DafBook}, and real and the velocities~$V_j(\vec\a)$ are smooth and real. Thus, the fNLS circular gas is (at least locally) preserved under the evolution described by the kinetic equation.

\begin{Remark}\label{rem-break}
As it is well known, systems of hyperbolic equations may develop singularities in the $x$, $t$ plane, which, in the case of modulation equations \eqref{Whit-alp}--\eqref{mod-vel}, lead to collapse of a~band or a gap, or to appearance of a new ``double point'' that will open into a band or gap. In any case, at a point of singularity $($also known as a breaking point$)$, two or more endpoints from~$\vec a(x,t)$ collide or a new pair$($s$)$ of collapsed double points appear, so that the Riemann surface $\mathfrak R_N$ develop a singularity.
In this paper we do not intend to discuss details of transition of the circular condensate between regions of different genera while passing through a breaking point.
However, we would like to mention that
since the differentials $\hat u(p){\rm d}p$, $\hat v(p){\rm d}p$
%, as well as their $x, t$ derivatives,
are imaginary normalized differentials, they undergo a continuous transition through breaking points, see~{\rm \cite{BT14}}. Thus, fNLS circular condensate is preserved under the kinetic equation evolution through breaking points $($change of genus$)$. Some examples of such evolution can be found in Section~$\ref{sec-Riem-prob}$.\looseness=-1
\end{Remark}

\section{Proof of Theorem \ref{th-main-1}}\label{s2}
In the particular case of a genus zero circular condensate where the point $z=\rho$ is on the band, the NDR where solved in \cite{ET2020}.
The proof of Theorem \ref{th-main-1} presented below in a sense resembles the proof of the solution to the NDR for a bound state fNLS soliton condensate from
\cite{KT2021}.

The conformal map \eqref{conf-map-p} has the inverse $z=\rho\bigl(p+\sqrt{p^2-1}\bigr)$ or $z=\rho {\rm e}^{{\rm i}\xi}$, where $p=\cos \xi$. In the variables $\xi$, $\th$, where $w=\rho {\rm e}^{{\rm i}\th}$, each NDR equation in \eqref{NDR} with $\s\equiv 0$ can be written as
\begin{gather}\label{NDR-xi}
-\r\int_{\tilde\G^+}\log\Biggl|\frac{\sin \frac{\x-\th}{2}}{\sin \frac{\x+\th}{2}}\Biggr|\psi_j(\th)\,{\rm d}\th=
\phi_j(\xi),\qquad j=1,2,
\end{gather}
where $\psi_1(\xi)=u\bigl(\r {\rm e}^{{\rm i}\xi}\bigr)$, $\psi_2(\xi)=v\bigl(\r {\rm e}^{{\rm i}\xi}\bigr)$, $\tilde\G^+$ is the preimage of $\G^+$ under the map $z=\rho {\rm e}^{{\rm i}\xi}$,
\begin{gather*}%\label{RHS12}
\phi_1(\xi)= \r\sin\x, \qquad
\phi_2(\xi)= -4\rho^2 \sin\xi\cos\xi,
\end{gather*}
and the integration in $\hat\G^+\subset \R$ goes in the negative direction. Here and henceforth, we always assume that the integration over $\hat\G^+$ goes in the positive direction, and, therefore, change the sign in the left-hand side of \eqref{NDR-xi}.
Since
\begin{gather*}
\frac{{\rm d}}{{\rm d}\x}\log\Biggl|\frac{\sin \frac{\x-\th}{2}}{\sin \frac{\x+\th}{2}}\Biggr|=\hf\biggl[ \cot \frac{\x-\th}{2}- \cot \frac{\x+\th}{2} \biggr]=
\frac{\sin\th}{\cos\th-\cos\x},
\end{gather*}
differentiation in $\x$ of \eqref{NDR-xi} yields
\begin{gather*}%\label{Hilb-1}
\r\int_{\tilde\G^+}\frac{\psi_j(\th)\sin\th}{\cos\th-\cos\x}\,{\rm d}\th=
-\frac{{\rm d}}{{\rm d}\xi}\phi_j(\xi),\qquad j=1,2,
\end{gather*}
or
\begin{align}\label{Hilb-2}
\pi H[\hat u](p)=-p,\qquad
\pi H[\hat v](p)=4\rho^{} \bigl(2p^2-1\bigr),
\end{align}
where $\hat u(q)=\psi_1(\th)$, $\hat v(q)=\psi_2(\th)$ and $q=\cos\th$.

Inversion of the first FHT in \eqref{Hilb-2} has the form \eqref{hat-uv}, where the degree $N+1$ polynomial~$P(p)$ is defined up to a kernel of $H$ acting on $\hat\G^+$.
As it is well known (and can be easily verified), this kernel is an $N$-dimensional space that consists of functions $\frac{K(p)}{R(p)}$, where $K(p)$ is an arbitrary
polynomial of degree $N-1$.

We now prove that the $N$ unknown coefficients of $K(p)$ are uniquely defined by conditions~\eqref{gap-cond} for $\hat u$. We start with proving that \smash{$\frac{\hat u(p){\rm d}p}{\sqrt{1-p^2}}$} has zero residue at $p=\infty$, i.e., it~is~a~second kind meromorphic differential on $\mathfrak{R}_N$. Indeed, substituting $\hat u(p)=\frac{P(p)}{R(p)}$ into \eqref{Hilb-2} and calculating the $H[\hat u]$ through the residue at $p=\infty$, we obtain
\begin{gather}\label{coeff-u}
a=-{\rm i}, \qquad b=\frac {\rm i}2\sum_{j=0}^{2N-1}a_j,
\end{gather}
where $P(p)={\frac{1}{\pi}\bigl(ap^{N+1}+bp^N+\cdots\bigr)}$ and $a_j=\cos\a_j$ are the endpoints of $\hat\G^+$. Thus
\begin{gather*}
\frac{\hat u(p){\rm d}p}{\sqrt{1-p^2}} =\frac{1-\frac{\sum_ja_j}{2p}+\cdots}{\sqrt{1-p^{-2}}\prod_j\bigl(1-\frac{a_j}{p}\bigr)^\hf}\,{\rm d}p=\bigl(1+ O\bigl(p^{-2}\bigr)\bigr){\rm d}p,
\end{gather*}
which complete the argument.

Now equations \eqref{gap-cond} for $u$ can be written as
\begin{gather}\label{gap-cond-1}
\int_{a_{2j-1}}^{a_{2j}}\frac{\hat u(p){\rm d}p}{\sqrt{1-p^2}}=0, \qquad j=0,\dots,N,
\end{gather}
with $a_{-1}=1$, $a_{2N}=-1$ and the corresponding integral in \eqref{gap-cond-1} should be replaced by ${\hat u(\pm 1)=0}$ if
$\pm 1\in\hat\G^+$ respectively.
The fact that \smash{$\Omega_1=\frac{\hat u(p){\rm d}p}{\sqrt{1-p^2}}$} is a second kind differential implies that one of the equations~\eqref{gap-cond-1} is a tautology and so the remaining $n$ conditions simply define a~normalization of $\Omega_1$. In particular, if all except one gap are $A$-cycles, then \eqref{gap-cond-1} implies that $\Omega_1$ is an $A$-normalized meromorphic differential. Thus, equations \eqref{gap-cond-1} always have a unique solution. The cases $\pm 1\in\hat\G^+$ can be treated as limits of small closing gaps from $\pm 1$ to the nearest endpoint.

To complete our arguments for $u(z)$, we need to show that conditions \eqref{gap-cond} for $u$ must be satisfied. In this proof we follow the arguments of~\cite[Theorem~6.1]{KT2021}, where similar conditions were derived for the bound state fNLS soliton condensate, i.e., when all the bands were situated on the imaginary axis. The idea of the proof is related to the fact that the solution $u(z)$ to~\eqref{NDR} is the density of the equilibrium measure for the corresponding Green's energy~\cite{KT2021}.
If $u$ is such an equilibrium density then the Green's potential
$G[u]:=\int_{\G^+}\log\bigl|\frac{z-\bar w }{z-w}\bigr|u(w)|{\rm d}w|$ of $u$ should be continuous at every regular point of $\G^+$, see \cite{ST}. In our case, all points of $\G^+$ are regular and so $G[u]$ must be continuous in $\C$.

Equations \eqref{gap-cond-1} imply
that there exists at least one zero in each gaps (intervals). Since there are $n+1$ gaps and the degree of $P(p)$ is $N+1$, we conclude that each gap has exactly one root of $P(p)$ and, so, all the roots of the polynomial $P(p)$ are real. Thus, according to \eqref{coeff-u}, $P(p)$ is purely imaginary on $\R$ and, so, $u>0$ on $\G^+$.

Similar arguments hold for the solution $v(z)$ of the second NDR \eqref{NDRinMain2}.
For example, representing $Q(p)=\frac{1}{\pi}\bigl(ap^{N+2}+bp^{N+1}+cp^N+\cdots\bigr)$, it follows from \eqref{Hilb-p} that $a=-8{\rm i}\r $, $b=-\frac a2 \sum_j a_j$ and $c=4{\rm i}\r +{\rm i}\r \bigl(\sum_{j}a_j^2-6\sum_{j<k}a_ja_k\bigr)$. So, $a,b,c\in {\rm i}\R$. Equations~\eqref{coeff-u} imply that $N+1$ roots are real. Thus it follows that all the roots of $Q$ are real.

\section{Genus 0 and genus 1 circular condensates}\label{s3}
In this section, the results of Theorem \ref{th-main-1} are applied to two simple cases: the genus 0 and~1 circular soliton condensate.
We remind that by genus we mean the genus of the Riemann surface~$\mathfrak R_N$, $N=1,2,3$, which, generically, is equal to the number of bands $N$ (or the number of gaps minus one), see Figure~\ref{fig:my_label}.
However, since in this section we always have $\alpha_0=0$ and $\alpha_{2N-1}=\pi$, two gaps are collapsed and the genus of the Riemann surface is reduced to $N=0,1$.

As before, we denote
\[
a_j=p\bigl(\r {\rm e}^{{\rm i}\a_j}\bigr)=\cos(\alpha_j), \qquad j=1,2,3,4.
\]
 The support $\G^+$ for the genus one circular condensate is illustrated by Figure~\ref{fig:genus one master}. In what follows, we will not mention $\a_0$ and~$\a_{2N+3}$ since they are always fixed.
In both cases, the exact solutions to the NDR equations can be explicitly represented with the help of complete elliptic integrals. For higher genus situation, the solution to the NDR equations can be represented using hyperelliptic integrals.

\begin{figure}[t]
\centering
\resizebox{0.44\textwidth}{!}{\tikzset{every picture/.style={line width=0.75pt}} %set default line width to 0.75pt        

\begin{tikzpicture}[x=0.75pt,y=0.75pt,yscale=-1,xscale=1]
	%uncomment if require: \path (0,300); %set diagram left start at 0, and has height of 300
	
	%Straight Lines [id:da9415440149591279] 
	\draw    (207.33,163) -- (477,163.33) ;
	%Shape: Arc [id:dp3332794895139495] 
	\draw  [draw opacity=0] (268.03,109.83) .. controls (280.72,93.61) and (299.22,81.84) .. (321.1,77.99) .. controls (341.66,74.38) and (361.8,78.36) .. (378.64,87.91) -- (336.08,163.19) -- cycle ; \draw    (268.03,109.83) .. controls (280.72,93.61) and (299.22,81.84) .. (321.1,77.99) .. controls (341.66,74.38) and (361.8,78.36) .. (378.64,87.91) ; \draw [shift={(378.64,87.91)}, rotate = 25.72] [color={rgb, 255:red, 0; green, 0; blue, 0 }  ][fill={rgb, 255:red, 0; green, 0; blue, 0 }  ][line width=0.75]      (0, 0) circle [x radius= 3.35, y radius= 3.35]   ; \draw [shift={(268.03,109.83)}, rotate = 312.11] [color={rgb, 255:red, 0; green, 0; blue, 0 }  ][fill={rgb, 255:red, 0; green, 0; blue, 0 }  ][line width=0.75]      (0, 0) circle [x radius= 3.35, y radius= 3.35]   ;
	%Shape: Arc [id:dp5363284559358221] 
	\draw  [draw opacity=0][line width=1.5]  (404.06,109.75) .. controls (412.6,120.61) and (418.66,133.67) .. (421.22,148.23) .. controls (422.04,152.93) and (422.47,157.61) .. (422.53,162.23) -- (336.08,163.19) -- cycle ; \draw [line width=1.5]    (404.06,109.75) .. controls (412.6,120.61) and (418.66,133.67) .. (421.22,148.23) .. controls (422.04,152.93) and (422.47,157.61) .. (422.53,162.23) ; \draw [shift={(422.53,162.23)}, rotate = 85.36] [color={rgb, 255:red, 0; green, 0; blue, 0 }  ][fill={rgb, 255:red, 0; green, 0; blue, 0 }  ][line width=1.5]      (0, 0) circle [x radius= 2.61, y radius= 2.61]   ; \draw [shift={(404.06,109.75)}, rotate = 55.82] [color={rgb, 255:red, 0; green, 0; blue, 0 }  ][fill={rgb, 255:red, 0; green, 0; blue, 0 }  ][line width=1.5]      (0, 0) circle [x radius= 2.61, y radius= 2.61]   ;
	%Shape: Circle [id:dp38847979876681604] 
	\draw  [fill={rgb, 255:red, 0; green, 0; blue, 0 }  ,fill opacity=1 ] (334.67,162.67) .. controls (334.67,161.56) and (335.56,160.67) .. (336.67,160.67) .. controls (337.77,160.67) and (338.67,161.56) .. (338.67,162.67) .. controls (338.67,163.77) and (337.77,164.67) .. (336.67,164.67) .. controls (335.56,164.67) and (334.67,163.77) .. (334.67,162.67) -- cycle ;
	%Shape: Arc [id:dp030040374752331678] 
	\draw  [draw opacity=0][dash pattern={on 0.84pt off 2.51pt}] (379.62,88.86) .. controls (388.96,94.22) and (397.27,101.3) .. (404.06,109.75) -- (336.65,163.92) -- cycle ; \draw [color={rgb, 255:red, 74; green, 144; blue, 226 }  ,draw opacity=1 ][dash pattern={on 0.84pt off 2.51pt}] [dash pattern={on 0.84pt off 2.51pt}]  (379.62,88.86) .. controls (388.96,94.22) and (397.27,101.3) .. (404.06,109.75) ;  
	%Shape: Arc [id:dp6483332818512029] 
	\draw  [draw opacity=0][line width=1.5]  (268.61,109.31) .. controls (281.31,93.08) and (299.81,81.31) .. (321.69,77.47) .. controls (342.25,73.85) and (362.38,77.83) .. (379.23,87.38) -- (336.67,162.67) -- cycle ; \draw [line width=1.5]    (268.61,109.31) .. controls (281.31,93.08) and (299.81,81.31) .. (321.69,77.47) .. controls (342.25,73.85) and (362.38,77.83) .. (379.23,87.38) ; \draw [shift={(379.23,87.38)}, rotate = 25.72] [color={rgb, 255:red, 0; green, 0; blue, 0 }  ][fill={rgb, 255:red, 0; green, 0; blue, 0 }  ][line width=1.5]      (0, 0) circle [x radius= 1.74, y radius= 1.74]   ; \draw [shift={(268.61,109.31)}, rotate = 312.11] [color={rgb, 255:red, 0; green, 0; blue, 0 }  ][fill={rgb, 255:red, 0; green, 0; blue, 0 }  ][line width=1.5]      (0, 0) circle [x radius= 1.74, y radius= 1.74]   ;
	%Straight Lines [id:da44229970288016984] 
	\draw [color={rgb, 255:red, 155; green, 155; blue, 155 }  ,draw opacity=1 ]   (336.67,162.67) -- (304.13,88.17) ;
	\draw [shift={(303.33,86.33)}, rotate = 66.41] [color={rgb, 255:red, 155; green, 155; blue, 155 }  ,draw opacity=1 ][line width=0.75]    (10.93,-3.29) .. controls (6.95,-1.4) and (3.31,-0.3) .. (0,0) .. controls (3.31,0.3) and (6.95,1.4) .. (10.93,3.29)   ;
	%Shape: Arc [id:dp7257242215345958] 
	\draw  [draw opacity=0][dash pattern={on 0.84pt off 2.51pt}] (255.04,134.13) .. controls (258.14,125.26) and (262.67,116.98) .. (268.38,109.6) -- (336.67,162.67) -- cycle ; \draw [color={rgb, 255:red, 74; green, 144; blue, 226 }  ,draw opacity=1 ][dash pattern={on 0.84pt off 2.51pt}] [dash pattern={on 0.84pt off 2.51pt}]  (255.04,134.13) .. controls (258.14,125.26) and (262.67,116.98) .. (268.38,109.6) ;  
	%Curve Lines [id:da04133785683514013] 
	\draw [line width=1.5]    (250.21,163.42) .. controls (248.5,158.42) and (252.5,138.92) .. (255,134.42) ;
	\draw [shift={(255,134.42)}, rotate = 299.05] [color={rgb, 255:red, 0; green, 0; blue, 0 }  ][fill={rgb, 255:red, 0; green, 0; blue, 0 }  ][line width=1.5]      (0, 0) circle [x radius= 2.61, y radius= 2.61]   ;
	\draw [shift={(250.21,163.42)}, rotate = 251.1] [color={rgb, 255:red, 0; green, 0; blue, 0 }  ][fill={rgb, 255:red, 0; green, 0; blue, 0 }  ][line width=1.5]      (0, 0) circle [x radius= 2.61, y radius= 2.61]   ;
	
	% Text Node
	\draw (320.33,166.07) node [anchor=north west][inner sep=0.75pt]    {$O$};
	% Text Node
	\draw (310.33,124.73) node [anchor=north west][inner sep=0.75pt]    {$\rho $};
	% Text Node
	\draw (424.53,165.63) node [anchor=north west][inner sep=0.75pt]    {$z_{0}$};
	% Text Node
	\draw (410.53,92.63) node [anchor=north west][inner sep=0.75pt]    {$z_{1}$};
	% Text Node
	\draw (379.53,68.63) node [anchor=north west][inner sep=0.75pt]    {$z_{2}$};
	% Text Node
	\draw (249.53,84.63) node [anchor=north west][inner sep=0.75pt]    {$z_{3}$};
	% Text Node
	\draw (231.53,116.13) node [anchor=north west][inner sep=0.75pt]    {$z_{4}$};
	% Text Node
	\draw (229.53,162.63) node [anchor=north west][inner sep=0.75pt]    {$z_{5}$};

\end{tikzpicture}}
\caption{The plot of $\Gamma^+$ showing bands and gaps of the fNLS circular condensate, where \smash{$\{z_j=\rho {\rm e}^{{\rm i}\alpha_j}\}_{j=0}^5$} are the endpoints of the bands
respectively and $\rho$ is the radius; $\alpha_0=\arg z_0=0$, $\alpha_5=\arg z_5=\pi$ are fixed. If any of the gaps, shown on the plot, is closed, the genus becomes zero.}
\label{fig:genus one master}
\end{figure}
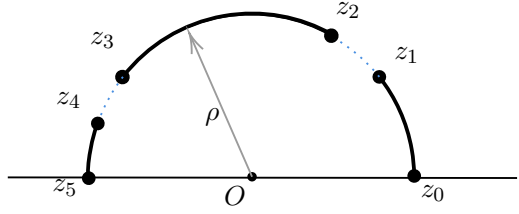

\begin{Corollary}
%\label{g1-master case}
Given the contour $\G^+$ as shown in Figure~$\ref{fig:genus one master}$, the solution to the NDR \eqref{NDR}--\eqref{NDR2} is given by
\begin{align}
&u(z)=u_1(p;a_1,a_2,a_3,a_4)=\frac{-\sqrt{1-p^2}}{\pi}\left(\frac{-p^2+\frac{1}{2}l_1p+A}{R(p)}\right),\label{g1-DOS}\\
&v(z)=v_1(p;a_1,a_2,a_3,a_4)=\frac{\rho\sqrt{1-p^2}}{\pi}\left(\frac{8p^3-4l_1p^2+\bigl(4l_2-l_1^2\bigr)p+B}{R(p)}\right), \label{g1-DOS-v}
\end{align}
where the subindex in $u_1$, $v_1$ indicates the genus, $p=p(z)$ is given by \eqref{conf-map-p},
\begin{gather*}
R(p)=\sqrt{\prod_{j=1}^4(p-a_j)}, \qquad 	l_1=\sum_{j=1}^4a_j, \qquad l_2=\sum_{1\leq i<j\leq 4 }a_ia_j,\\
A=\frac{E(m)}{2K(m)}(a_2-a_4)(a_1-a_3)-\frac{1}{2}(a_1a_2+a_3a_4),\\
B=-\frac{E(m)}{K(m)}(a_2-a_4)(a_1-a_3)l_1+(a_1a_2-a_3a_4)(a_1+a_2-a_3-a_4),\\
m=\frac{(a_1-a_2)(a_3-a_4)}{(a_1-a_3)(a_2-a_4)},
\end{gather*}
and, as in Theorem $\ref{th-main-1}$, the branch of $R(p)$ is chosen so that $R(p)\sim p^2$ as ${p\ra \infty}$.
The effective velocity is then given by
\begin{align}
s(z)=s_1(p;a_1,a_2,a_3,a_4)=-\left(\frac{8p^3-4l_1p^2+\bigl(4l_2-l_1^2\bigr)p+B}{p^2-\frac{1}{2}l_1p-A}\right)\rho.\label{g1-effv}
\end{align}
\end{Corollary}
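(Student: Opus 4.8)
The plan is to specialise Theorem~\ref{th-main-1} to the contour of Figure~\ref{fig:genus one master} and then pin down the only two still-free constants, $A$ and $B$, by an explicit elliptic integration. Under the conformal map \eqref{conf-map-p} the three bands of $\G^+$ go to $\hat\G^+=[-1,a_4]\cup[a_3,a_2]\cup[a_1,1]$ with $a_1>a_2>a_3>a_4$; since $\alpha_0=0$ and $\alpha_5=\pi$, the two collapsed gaps sit at $p=\pm1$, and there the cut $[1,\infty)$ (resp.\ $(-\infty,-1]$) of the surface merges with the band $[a_1,1]$ (resp.\ $[-1,a_4]$), so that $p=\pm1$ are ordinary points of the genus-one Riemann surface $\mathfrak R_1\colon y^2=\prod_{j=1}^4(p-a_j)$. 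This is why $R(p)=\sqrt{\prod_{j=1}^4(p-a_j)}$, $R(p)\sim p^2$, appears, and why the factor $\sqrt{1-p^2}$ is present in \eqref{g1-DOS}--\eqref{g1-DOS-v}: it is exactly what makes $\hat u(p)\,{\rm d}p/\sqrt{1-p^2}$ and $\hat v(p)\,{\rm d}p/\sqrt{1-p^2}$ (the second-kind differentials of Theorem~\ref{th-main-1}) single-valued on $\mathfrak R_1$, while at the same time forcing the collapsed-gap conditions $u(\pm\rho)=v(\pm\rho)=0$.

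Accordingly, I write $\hat u(p)=\frac{-\sqrt{1-p^2}}{\pi}\,\frac{P_1(p)}{R(p)}$ and $\hat v(p)=\frac{\rho\sqrt{1-p^2}}{\pi}\,\frac{Q_1(p)}{R(p)}$ with $\deg P_1=2$, $\deg Q_1=3$, the degrees being forced by the prescribed orders of the poles of the quasimomentum/quasienergy differentials at $\infty_\pm$. First I would determine every coefficient of $P_1$, $Q_1$ except their constant terms, exactly as in the proof of Theorem~\ref{th-main-1}: substituting these ans\"atze into the singular integral equations \eqref{Hilb-p} and expanding at $p=\infty$ (this is the residue computation \eqref{coeff-u} and its analogue for $Q$) fixes the top coefficients from the right-hand sides $-p$ and $4\rho(2p^2-1)$, and fixes the next coefficients by the vanishing of the residue at $\infty_\pm$. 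One gets $P_1(p)=-p^2+\hf l_1p+A$ and $Q_1(p)=8p^3-4l_1p^2+(4l_2-l_1^2)p+B$, with $l_1=\sum_j a_j$, $l_2=\sum_{1\leq i<j\leq 4}a_ia_j$ and $A$, $B$ as yet free.

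It remains to fix $A$ and $B$ from the gap conditions \eqref{gap-cond}. As $\hat u\,{\rm d}p/\sqrt{1-p^2}$ and $\hat v\,{\rm d}p/\sqrt{1-p^2}$ are second-kind differentials on $\mathfrak R_1$, of the four conditions in \eqref{gap-cond} the two at $p=\pm1$ hold automatically, and of the two remaining ones, namely that $\int\hat u\,{\rm d}p/\sqrt{1-p^2}$ vanish over each of the gaps $(a_4,a_3)$ and $(a_2,a_1)$, one is a consequence of the other (the tautological gap relation for a second-kind differential, used already in the proof of Theorem~\ref{th-main-1}); hence exactly one scalar equation, linear in $A$, survives, and likewise one linear in $B$. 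Solving $\int_{a_2}^{a_1}\frac{(-p^2+\hf l_1p+A)\,{\rm d}p}{R(p)}=0$ for $A$ calls for the classical reduction of the elliptic integrals $\int_{a_2}^{a_1}\frac{p^k\,{\rm d}p}{R(p)}$, $k=0,1,2$, to Legendre normal form with modulus $m=\frac{(a_1-a_2)(a_3-a_4)}{(a_1-a_3)(a_2-a_4)}$, which yields the stated $A=\frac{E(m)}{2K(m)}(a_2-a_4)(a_1-a_3)-\hf(a_1a_2+a_3a_4)$; the same reduction, now also involving $\int_{a_2}^{a_1}p^3\,{\rm d}p/R(p)$, applied to the cubic $Q_1$ gives the stated $B$. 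Finally, $u(z),v(z)\in\R$ on $\G^+$ and $u>0$ on $\G^+\setminus\R$ are inherited from Theorem~\ref{th-main-1}, and \eqref{g1-effv} follows at once from $s=v/u$, the common factor $\mp\sqrt{1-p^2}/(\pi R(p))$ cancelling.

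The main obstacle is this last step: evaluating the gap integrals explicitly in terms of $K(m)$ and $E(m)$. One has to reduce $\int p^k\,{\rm d}p/R(p)$ over the gap $(a_2,a_1)$ to Legendre form, keep careful track of the branch of $R(p)$ there and of the signs produced by the ordering $a_1>a_2>a_3>a_4$, and then solve the resulting linear equations so as to recover precisely the coefficients $\frac{E(m)}{2K(m)}$ and $-\frac{E(m)}{K(m)}$ together with the rational remainders as written. Everything else is a direct application of Theorem~\ref{th-main-1} or a rerun of the residue computation already performed in its proof.
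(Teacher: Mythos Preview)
Your proposal is correct and follows essentially the same route as the paper: invoke Theorem~\ref{th-main-1} for the general form, use the collapsed-gap conditions at $p=\pm1$ to pull out the factor $\sqrt{1-p^2}$, fix the remaining constants from the surviving gap integral, and read off $s=v/u$. Your account is in fact more detailed and more precise than the paper's own proof, which simply writes down the ansatz and states that $c_1=\tfrac12 l_1$ and $c_0=A$ ``are determined by the (gap-vanishing) normalization conditions'' over $(a_4,a_3)$ and $(a_2,a_1)$; you correctly separate out that $c_1=\tfrac12 l_1$ is already forced by the zero-residue-at-infinity condition (equivalently, by matching the asymptotics in \eqref{Hilb-p}, as in \eqref{coeff-u}), leaving exactly one independent gap condition to fix $A$, and similarly for $B$.
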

\begin{proof}
Based on Theorem \ref{th-main-1}, the general solution to the first NDR reads
\begin{align*}
u(z)=u_1(p;a_1,a_2,a_3,a_4)=\frac{1}{\pi}\frac{-\sqrt{1-p^2}\bigl(-p^2+c_1p+c_0\bigr)}{\sqrt{(p-a_1)(p-a_2)(p-a_3)(p-a_4)}},
\end{align*}
where $c_1=\frac{1}{2}l_1$, $c_0=A$ are determined by the (gap-vanishing) normalization conditions:
\begin{align*}
&\int_{a_4}^{a_3}u(p)\frac{{\rm d}p}{\sqrt{1-p^2}}=0,\qquad
\int_{a_2}^{a_1}u(p)\frac{{\rm d}p}{\sqrt{1-p^2}}=0.
\end{align*}
The computation of $v$ is similar and thus omitted. The effective velocity can be computed directly from equation \eqref{v-eff}.
\end{proof}

In the case of $\a_3=\a_4$, one of the gaps disappears and we are in the genus 0 situation, defined only by $\a_1$, $\a_2$.

\begin{Corollary}\label{g0-master case}
In the case $\alpha_4=\alpha_3$ $($genus zero, see Figure~$\ref{fig:genus one master})$, the solution to the NDR \eqref{NDR}--\eqref{NDR2} is given by
\begin{align}
&u(z)=u_0(p;a_1,a_2)=\frac{1}{\pi}\frac{\sqrt{1-p^2}\bigl(p-\frac{1}{2}(a_2+a_1)\bigr)}{\sqrt{(p-a_2)(p-a_1)}},\label{g0-DOS}\\
&v(z)=v_0(p;a_1,a_2)=-\frac{8}{\pi}\frac{\rho\sqrt{1-p^2}\bigl(p^2-\frac{1}{2}(a_2+a_1)p-\frac{1}{8}(a_1-a_2)^2\bigr)}{\sqrt{(p-a_2)(p-a_1)}},\label{g0-DOS-v}
\end{align}
where $p=p(z)$ is given by \eqref{conf-map-p},
$a_j=\cos(\alpha_j)$, $j=1,2$, and the square-root function takes the principal branch.

The effective velocity is then given by
\begin{align}
s(z)=s_0(p;a_1,a_2)=\left(-8p+\frac{(a_2-a_1)^2}{p-(a_2+a_1)/2}\right)\rho.\label{g0-effv}
\end{align}
\end{Corollary}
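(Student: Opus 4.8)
\textbf{Proof proposal for Corollary \ref{g0-master case}.}
The plan is to specialize the genus-one formulas \eqref{g1-DOS}--\eqref{g1-effv} to the degenerate situation $a_4 = a_3$, so that the Riemann surface $\mathfrak R_N$ acquires a node and its genus drops to zero. First I would examine the modulus $m = \frac{(a_1-a_2)(a_3-a_4)}{(a_1-a_3)(a_2-a_4)}$: since $a_3-a_4\to 0$ we get $m\to 0$, and hence $\frac{E(m)}{K(m)}\to 1$ because $E(0)=K(0)=\frac{\pi}{2}$. With this limit in hand, the constants $A$ and $B$ simplify: $A\to \frac12(a_2-a_3)(a_1-a_3) - \frac12(a_1a_2+a_3^2)$ and similarly for $B$, which after the substitution $a_3=a_4$ should collapse the quartic $R(p)=\sqrt{\prod_{j=1}^4(p-a_j)}$ to $(p-a_3)\sqrt{(p-a_1)(p-a_2)}$, cancelling the factor $(p-a_3)$ that must appear in the numerators of both $u_1$ and $v_1$ in this limit.

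The key algebraic step is therefore to verify that, when $a_3=a_4$, the numerator $-p^2+\tfrac12 l_1 p + A$ of \eqref{g1-DOS} is divisible by $(p-a_3)$, with quotient $-(p - \tfrac12(a_1+a_2))$ after accounting for the leading sign and the overall $-\frac{1}{\pi}$; and likewise that the cubic $8p^3 - 4l_1 p^2 + (4l_2 - l_1^2)p + B$ in \eqref{g1-DOS-v} is divisible by $(p-a_3)$, with quotient $-8\bigl(p^2 - \tfrac12(a_1+a_2)p - \tfrac18(a_1-a_2)^2\bigr)$. Both divisibility facts are forced a priori: since $a_3$ now sits in a collapsed gap, Theorem \ref{th-main-1} requires $u(\rho e^{i\alpha_3})=v(\rho e^{i\alpha_3})=0$ at the collapsed double point, so the common factor $(p-a_3)$ in numerator and $R(p)$ must cancel leaving a finite (in fact vanishing) value there. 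So one could either push the limit through the explicit constants or, more cleanly, simply re-derive the genus-zero formulas from scratch: solve $\pi H[\hat u] = -p$ on the single band $[a_2,a_1]$ with $R(p)=\sqrt{(p-a_1)(p-a_2)}$, write $\hat u(p) = \frac{-(p-c_1)}{\pi R(p)}$ (degree $N+1 = 2$ in the bound-state normalization collapses to degree $1$ once the two collapsed gaps at $\pm 1$ are accounted for), and fix $c_1$ from \eqref{coeff-u}, which gives $c_1 = \tfrac12(a_1+a_2)$ directly via $b=\tfrac{i}{2}\sum a_j$. The $\sqrt{1-p^2}$ prefactor comes from the Jacobian $\mathrm d\arg z$ versus $\mathrm dp$ exactly as in the passage from \eqref{NDR} to \eqref{Hilb-p}.

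For $v$ the same recipe applies: $\hat v(p) = \frac{Q(p)}{R(p)}$ with $Q$ of degree $N+2=3$, reduced to degree $2$ after the two collapsed endpoints, and the coefficients pinned down by the expansion $Q = \frac1\pi(ap^2 + bp + c + \cdots)$ with $a = -8i\rho$, $b = -\tfrac{a}{2}(a_1+a_2)$, and $c$ from the analog of \eqref{coeff-u}; matching against the stated $-\tfrac{8\rho}{\pi}(p^2 - \tfrac12(a_1+a_2)p - \tfrac18(a_1-a_2)^2)$ is then a short computation, with the single gap condition $\int_{a_2}^{a_1} \hat v(p)\,\mathrm dp/\sqrt{1-p^2}=0$ (a tautology here because $\Omega_2$ is already a second-kind differential of genus zero) and the vanishing at $p=\pm 1$ handled as limits of closing gaps. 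Finally $s_0 = v_0/u_0$ follows by dividing \eqref{g0-DOS-v} by \eqref{g0-DOS}; the $\sqrt{1-p^2}$ and $\sqrt{(p-a_1)(p-a_2)}$ factors cancel, and polynomial long division of $-8\rho(p^2 - \tfrac12(a_1+a_2)p - \tfrac18(a_1-a_2)^2)$ by $(p - \tfrac12(a_1+a_2))$ yields $-8\rho p + \frac{\rho(a_2-a_1)^2}{p - (a_1+a_2)/2}$, which is \eqref{g0-effv}. The only genuine obstacle is bookkeeping the signs and the $E/K\to 1$ limit consistently; everything else is linear algebra dictated by Theorem \ref{th-main-1}.
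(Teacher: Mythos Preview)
Your degeneration route (sending $a_4\to a_3$, so $m\to 0$ and $E/K\to 1$) is sound, and the paper acknowledges it in the remark immediately following the corollary. The paper's own proof, however, is the direct derivation from Theorem~\ref{th-main-1}, and there it differs from your sketch in one essential respect: to pin down the constant term $c_0$ in $v_0$, the paper \emph{uses} the interior gap condition $\int_{a_2}^{a_1}\hat v(p)\,(1-p^2)^{-1/2}\,{\rm d}p=0$ and solves it explicitly to obtain $c_0=-\tfrac18(a_1-a_2)^2$. You dismiss this integral as a tautology on a genus-zero surface---which is correct once the differential is known to be second kind---but then you must extract $c_0$ from somewhere else, namely the full residue-at-infinity matching (your ``$c$ from the analog of \eqref{coeff-u}''). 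That also works, and the two computations are equivalent, but the paper's single real integral is arguably cleaner than chasing the $p^{-1}$ coefficient in the large-$p$ expansion of $Q/R$.

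Two smaller slips in your direct route: the interval $[a_2,a_1]$ is the \emph{gap}, not a band---in this configuration $\hat\Gamma^+=[-1,a_2]\cup[a_1,1]$ consists of two arcs touching $\pm 1$, so the FHT is over two intervals, with a two-dimensional kernel. And the factor $\sqrt{1-p^2}$ in \eqref{g0-DOS}--\eqref{g0-DOS-v} does not come from the Jacobian ${\rm d}\arg z$ versus ${\rm d}p$; it appears because the collapsed-gap conditions $\hat u(\pm 1)=\hat v(\pm 1)=0$ force a factor $(p^2-1)$ into the numerator polynomials $P$, $Q$, which then cancels one power of $\sqrt{(p-1)(p+1)}$ from the full $R$ in Theorem~\ref{th-main-1}.
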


\begin{proof}
Using the conditions of Theorem \ref{th-main-1} and having in mind that $\hat u(\pm 1)= \hat v(\pm 1)=0$, we obtain \eqref{g0-DOS} and
\begin{align*}
v(z)=-\frac{8\rho}{\pi}\sqrt{\frac{1-p^2}{(p-a_1)(p-a_2)}}\left(p^2-\frac{a_1+a_2}{2}p+c_0\right),
\end{align*}
where $c_0$ is the constant that is determined by the gap vanishing condition
\begin{align*}
\int_{a_2}^{a_1}v(z(p))(1-p^2)^{-1/2}{\rm d}p=0.
\end{align*}
Solving the latter equation, we obtain
\begin{align*}
c_0=-\frac{1}{8}(a_1-a_2)^2.
\end{align*}
Given solutions $u$, $v$, we obtain \eqref{g0-effv} for the effective velocity $s$.
\end{proof}

\begin{Remark}
As it was mentioned in Remark $\ref{rem-break}$, imaginary normalized differentials have a~continuous transition through breaking points, i.e., through points of collapse of a band or a~gap. In particular,
genus zero solutions $($see equations \eqref{g0-DOS}--\eqref{g0-effv}$)$ can be directly obtained from the genus one solutions $($see equations \eqref{g1-DOS}--\eqref{g1-effv}$)$ by taking the limit \smash{$a_3\ra a_4^+$}. That is to say,\looseness=-1
\[u_0(p;a_1,a_2)=\lim_{\substack{a_3\ra a_4^+}}u_1(p;a_1,a_2,a_3,a_4).\]
Similar limits work for the solutions to the density of fluxes and the corresponding effective velocities.
\end{Remark}

\begin{Remark}
As it was shown in the proof of Theorem~$\ref{th-main-1}$, the DOS $u>0$ on $\G^+$; however, the DOF $v$ may have a zero $z_0$ on $\G^+$, which also coincides with the zero of the effective velocity $s$. In Figure~$\ref{fig:genuszeroII}$, we show different cases of the location of $z_0$, defined by
the branch points $(\alpha_1,\alpha_2)$.
\end{Remark}

\begin{figure}[t]
\centering
\includegraphics[width=11cm]{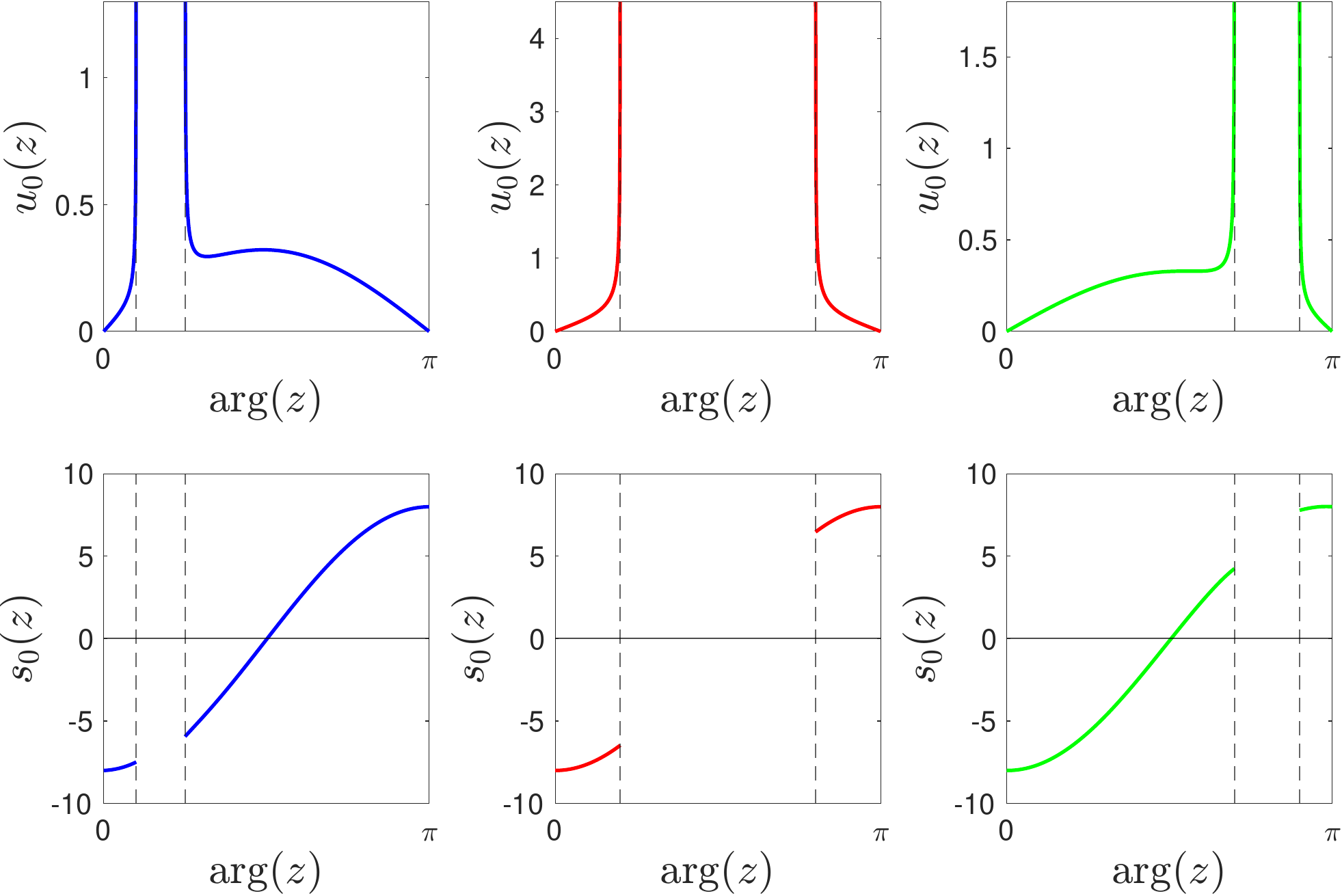}
\caption{Three plots of the DOS $u$ and the corresponding effective velocity based on the results of Corollary~\ref{g0-master case}. Here $\rho=1$. Left: $\alpha_1=0.1\pi$, $\alpha_2=0.25\pi$; the only zero of $s(z)$ lies on the band \smash{$\{\rho {\rm e}^{{\rm i}\theta}\colon \theta\in (\alpha_2,\pi)\}$};
Middle: $\alpha_1=0.2\pi$, $\alpha_2=0.8\pi$; $s(z)$ has no zero on $\G^+$; Right: $\alpha_{1}=0.7\pi$, $\alpha_2=0.9\pi$;
the only zero of $s(z)$ lies in the arc \smash{$\{\rho {\rm e}^{{\rm i}\theta}\colon \theta\in (0,\alpha_1)\}$}.
In each plot, the vertical dashed lines indicate $\alpha_1$ and $\alpha_2$.}
\label{fig:genuszeroII}
\end{figure}

\begin{Remark}\label{g0-cor}
If in the conditions of
Corollary $\ref{g0-master case}$ we further assume $\alpha_2=\pi$, then the solutions to the NDR reduce to
\begin{align}
u(z)&=u_0(p;a_1,-1)=\frac{1}{\pi}\frac{(1- p)\bigl( p+\frac{1-a_1}{2}\bigr)}{\sqrt{( p-a_1)(1- p)}},\label{g0-DOS-2}\\
v( z)&=v_0(p;a_1,-1)=\frac{1}{\pi}\frac{\rho(1- p)\bigl(-8 p^2+4(a_1-1) p+(a_1+1)^2\bigr)}{\sqrt{(1- p)( p-a_1)}}, \label{g0-DOF-2}
\end{align}
%where $p=\cos(\arg(z))=\Re(z)/\rho$,
where $p=p(z)$ is given by \eqref{conf-map-p},
so that
%
%And the effective velocity is given by
\begin{align}
s(z)=\left(-8p+\frac{(a_1+1)^2}{p+(1-a_1)/2}\right)\rho.\label{g0-effv-2}
\end{align}
%It is worth mention that formulas
Expressions
\eqref{g0-DOS-2} and \eqref{g0-effv-2} were calculated in {\rm \cite[equations $(72)$ and $(73)$]{ET2020}.}
\end{Remark}

\section{Modulational dynamics for the circular condensate} \label{sec-Riem-prob}
Solutions to the kinetic equations for the fNLS circular condensate, obtained in Theorems \ref{th-main-1} and~\ref{th-main-2}, look very similar to that for the KdV soliton condensate, obtained in \cite{CERT}. Following the ideas of
\cite{CERT}, in this section we consider the evolution of step function initial conditions for the circular gas modulation equations, i.e., the Riemann problem, that, as in the KdV case, produce rarefaction and dispersive shock wave solutions.

\begin{Remark}
As it was pointed out by one of the referees,
similar phenomena in the context of the ``generalized hydrodynamics'', namely, the shock waves in the zero-entropy GHD, was reported in {\rm \cite{DDKY2017}}.
\end{Remark}

Consider the step function initial data for the modulation equations \eqref{Whit-alp}
\begin{align}
a_1(x,t=0)\equiv \cos(\alpha_1(x,t=0))=\begin{cases}
q_-,& x<0,\\
q_+,& x>0,
\end{cases}\qquad q_+\neq q_-.\label{mod-initial-data}
\end{align}
that corresponds to the genus zero circular condensate described in Remark \ref{g0-cor}, i.e., the initial date
for the branch point $a_1\in\hat\G^+$. That defines (see \eqref{g0-DOS-2}) the DOS:{\samepage
\begin{align}
u(z;x,t=0)=\begin{cases}u_0({p(z)};q_-,-1),& x<0,\\
u_0({p(z)};q_+,-1),& x>0,\end{cases}\label{cond-Riemann-Prob}
\end{align}
and a similar expression for the DOF $v$, see \eqref{g0-DOF-2},
where $q_\pm\in(-1,1)$.}

In what follows, we will consider self-similar solutions to the modulation system \eqref{Whit-alp}, i.e., we consider $x/t=\xi=$ const. It is well known that the behavior of such solutions, including the genus of the corresponding hyperelliptic Riemann surface $\mathfrak{R}=\mathfrak{R}(x,t)$ (i.e., the Riemann surface in the $p$ plane, whose branchcuts are intervals on the real line.) for $u$, $v$
depends on whether $q_->q_+$ or $q_-<q_+$. The first case,
the genus of $ \mathfrak{R}(x,t)$ stays
zero and the dynamics of the DOS $u(z;x,t)$ is characterized by the rarefaction wave solution of
the modulation
equation \eqref{Whit-alp}. The latter case, however, implies immediate wave-breaking, which can be regularized by introducing a genus one surface $ \mathfrak{R}_1(x,t)$ and the corresponding dispersive shock wave DOS $u(z;x,t)$ that connects $u=u_0( p;q_-,-1)$ for large negative and $u=u_0( p;q_+,-1)$ for large positive $x$. Such regularization is well known for describing the dispersive shock wave modulations (see \cite{GP}) of the~KdV equation with step initial data. The two types of behavior of the modulational dynamics of the DOS are shown as in Figure~\ref{fig:density plot DOS}. The following theorem summarizes main results of the section.

\begin{Theorem}\label{Mod-DOS}
Suppose $u(z;x,0)$
%distribution of the DOS
is given by \eqref{cond-Riemann-Prob}
with $a_1(x,0)$ given by \eqref{mod-initial-data}. Then for any $(x,t)\in \R\times \R_+$
we have
\begin{itemize}\itemsep=0pt
\item[$(i)$] If $q_->q_+$,
\begin{align*}
a_1(x,t)=\begin{cases}
q_-,& x<V_{1-}t,\\
-\dfrac{1}{6}\frac{x}{\rho t}+\dfrac{1}{3},& V_{1-}t<x<V_{1+}t,\\
q_+,& x>V_{1+}t,
\end{cases}\qquad \text{and} \qquad V_{1\pm}=-6\rho\left(q_\pm-\frac{1}{3}\right),
\end{align*}
so that the DOS is given by
\begin{align}
u(z;x,t)=u_0(p;a_1(x,t),-1)=\frac{1}{\pi}\frac{(1-p)\bigl(p+\frac{1-a_1(x,t)}{2}\bigr)}{\sqrt{(p-a_1(x,t))(1-p)}},\label{eq: mod-dos}
\end{align}
where $p=p(z)$ is given by \eqref{conf-map-p} and the expression for the DOF is given by
\begin{align}
v(z;x,t)&{}=v_0(p;a_1(x,t),-1)\nonumber\\
&{}=-\frac{8}{\pi}\frac{\rho\sqrt{1-p^2}\bigl(p^2-\frac{1}{2}(a_1(x,t)-1)p-\frac{1}{8}(a_1(x,t)+1)^2\bigr)}{\sqrt{(p+1)(p-a_1(x,t))}}.\label{eq: mod-dof}
\end{align}

\item[$(ii)$] If $q_-<q_+$, then 	$a_2(x,t)$ is uniquely and implicitly determined by the following equation:
\begin{align}
\frac{x}{\rho t}=-2(q_++a_2+q_--1)-\frac{4(a_2-q_-)(q_+-a_2)}{(q_+-q_-)\mu(m) +q_--a_2},\qquad V_{2-}t<x<V_{2+}t,\!\!\!\!\label{def a2 first}
\end{align}
where
\begin{align}
&m=\frac{(1+q_-)(q_+-a_2)}{(q_+-q_-)(1+a_2)},\qquad
\mu(m)=\frac{E(m)}{K(m)}, \label{m,mu}\\
&V_{2-}=\left(\frac{-16a_1^2+8a_1a_3+2a_3^2-8a_1+4a_3+2}{2a_1-a_3+1}\right)\rho,\nonumber\\
&V_{2+}=(-2a_1-4a_3+2)\rho. \label{Def: V2pm}
\end{align}
%the modulated DOS is given by
so that
\begin{align}
u(z;x,t)=\begin{cases}
u_0(p;q_-,-1),& x<V_{2-}t,\\
u_1(p;q_+,a_2(x,t),q_-,-1),& V_{2-}t<x<V_{2+}t,\\
u_0(p;q_+,-1),& x>V_{2+}t,
\end{cases}\label{eq: mod-dos-g1}
\end{align}
where $p=p(z)$ is given by \eqref{conf-map-p}; and the expression for the DOF is given by
\begin{align}
v(z;x,t)=\begin{cases}
v_0(p;q_-,-1),& x<V_{2-}t,\\
v_1(p;q_+,a_2(x,t),q_-,-1),& V_{2-}t<x<V_{2+}t,\\
v_0(p;q_+,-1),& x>V_{2+}t.
\end{cases}\label{eq: mod-dof-g1}
\end{align}

\end{itemize}
\end{Theorem}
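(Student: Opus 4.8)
The plan is to treat the two cases separately, since they correspond to genuinely different modulational regimes. In both cases the strategy is the same: reduce the kinetic equation to the scalar hyperbolic equation $\partial_t a_1 + V(a_1)\partial_x a_1 = 0$ (or its genus-one analogue via Theorem~\ref{th-main-2}), and then solve this scalar conservation law with step initial data by the classical method of characteristics, matching the self-similar ansatz $a_1 = a_1(\xi)$, $\xi = x/t$.

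For part $(i)$, with $q_- > q_+$, I would first specialize the modulation velocity \eqref{mod-vel} to the genus-zero case of Remark~\ref{g0-cor}: using \eqref{g0-DOS-2}--\eqref{g0-effv-2}, one has $P(p) = \frac{1}{\pi}(1-p)(p + \frac{1-a_1}{2})$ and $Q(p)$ the degree-three polynomial from \eqref{g0-DOF-2}, so that the characteristic speed at the moving endpoint $a_1 = \cos\alpha_1$ is $V_1(a_1) = Q(a_1)/P(a_1)$, which after simplification gives the affine expression $V_1(a_1) = -6\rho(a_1 - \tfrac13)$. Since $V_1$ is monotone in $a_1$ and $q_- > q_+$ forces $V_{1-} < V_{1+}$, the Riemann data fans out into a rarefaction: for $\xi < V_{1-}$ the solution is constant $q_-$, for $\xi > V_{1+}$ it is constant $q_+$, and in between the self-similar profile is obtained by inverting $\xi = V_1(a_1)$, i.e.\ $a_1 = -\frac{1}{6}\frac{x}{\rho t} + \frac{1}{3}$. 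Plugging this into \eqref{g0-DOS-2}, \eqref{g0-DOF-2} yields \eqref{eq: mod-dos}, \eqref{eq: mod-dof}. Continuity across the edges $x = V_{1\pm}t$ and the fact that the resulting $a_1(x,t)$ is Lipschitz make this a genuine weak (indeed $C^0$) solution of \eqref{Whit-alp}.

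For part $(ii)$, with $q_- < q_+$, the genus-zero solution immediately breaks (the would-be rarefaction has $V_{1-} > V_{1+}$, i.e.\ overlapping characteristics), so one must open a genus-one region $\mathfrak R_1(x,t)$ à la Gurevich--Pitaevskii, with branch points $a_1 = q_+$, $a_2 = a_2(x,t)$, $a_3 = q_-$, $a_4 = -1$. Here I would invoke Theorem~\ref{th-main-2}: in the modulating region only $a_2$ is free, and its evolution is governed by $\partial_t a_2 + V_2(\vec a)\partial_x a_2 = 0$ with $V_2 = Q(a_2)/P(a_2)$ computed from the genus-one polynomials of the Corollary (the ones involving $E(m)/K(m)$). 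The self-similar reduction then gives $x/(\rho t) = V_2(a_2)/\rho$, which, after substituting the explicit $P$, $Q$ and the expression \eqref{m,mu} for $m$ and $\mu(m) = E(m)/K(m)$, becomes exactly the implicit relation \eqref{def a2 first}. The endpoints $V_{2\pm}$ of the modulating fan are fixed by the degeneration conditions: $V_{2+}$ is the value of the genus-one velocity $V_2$ in the limit $a_2 \to q_+ = a_1$ (band shrinks to a point, genus drops, matching the right constant state), and $V_{2-}$ is the value as $a_2 \to q_- = a_3$ (a gap collapses, again dropping genus, matching the left constant state); computing these two limits of \eqref{mod-vel} gives \eqref{Def: V2pm}. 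The DOS and DOF formulas \eqref{eq: mod-dos-g1}, \eqref{eq: mod-dof-g1} then follow by piecing together the genus-zero expressions outside $[V_{2-}t, V_{2+}t]$ with the genus-one expressions \eqref{g1-DOS}, \eqref{g1-DOS-v} inside, the continuous-transition property of imaginary normalized differentials through breaking points (Remark~\ref{rem-break}) guaranteeing that $u$, $v$ are continuous across both edges.

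The main obstacle I anticipate is twofold and entirely in part $(ii)$: first, verifying that the implicit equation \eqref{def a2 first} indeed \emph{uniquely} determines $a_2 \in (q_-, q_+)$ for each $\xi \in (V_{2-}, V_{2+})$ — this requires checking that the right-hand side of \eqref{def a2 first} is strictly monotone in $a_2$, which is a nontrivial analytic fact about the combination $(q_+ - q_-)\mu(m) + q_- - a_2$ and the behavior of $\mu(m) = E(m)/K(m)$ as $m$ ranges over $(0,1)$ while $a_2$ varies; second, confirming the consistency of the two degeneration limits, i.e.\ that $\lim_{a_2 \to q_-} V_2(\vec a)$ from the genus-one side equals the genus-zero characteristic speed at $q_-$ and likewise at $q_+$, so that the patched $a_1(x,t)$ (equivalently the patched Riemann surface) varies continuously. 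Everything else — the reduction to the scalar equations, the method of characteristics, the substitution into the DOS/DOF formulas — is routine once Theorems~\ref{th-main-1} and~\ref{th-main-2} are in hand.
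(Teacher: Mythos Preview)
Your plan coincides with the paper's proof: part~(i) is exactly the scalar rarefaction argument you outline (compute $V_1(a_1)=-6\rho(a_1-\tfrac13)$ from \eqref{mod-vel}, invert $\xi=V_1(a_1)$ in the fan), and part~(ii) is the Gurevich--Pitaevskii genus-one regularization with the self-similar reduction $V_2(a_2)=x/t$, edge speeds read off as the two degeneration limits of $V_2$, and continuity across the edges via Remark~\ref{rem-break}. The monotonicity obstacle you correctly flagged is isolated in the paper as a separate proposition and proved by contradiction using the elementary two-sided bound $1-m<E(m)/K(m)<1-\tfrac{m}{2}$ on $(0,1)$; one small geometric slip in your description is that in this configuration $a_2\to a_1$ closes the \emph{gap} $(a_2,a_1)$ (so the merged band is $[q_-,1]$, matching the \emph{left} state $u_0(p;q_-,-1)$) while $a_2\to a_3$ collapses the \emph{band} $[a_3,a_2]$ (leaving $[q_+,1]$, matching the \emph{right} state), not the other way round.
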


\begin{figure}[t]\centering
\subfloat[]{%
\includegraphics[width=0.44\textwidth]{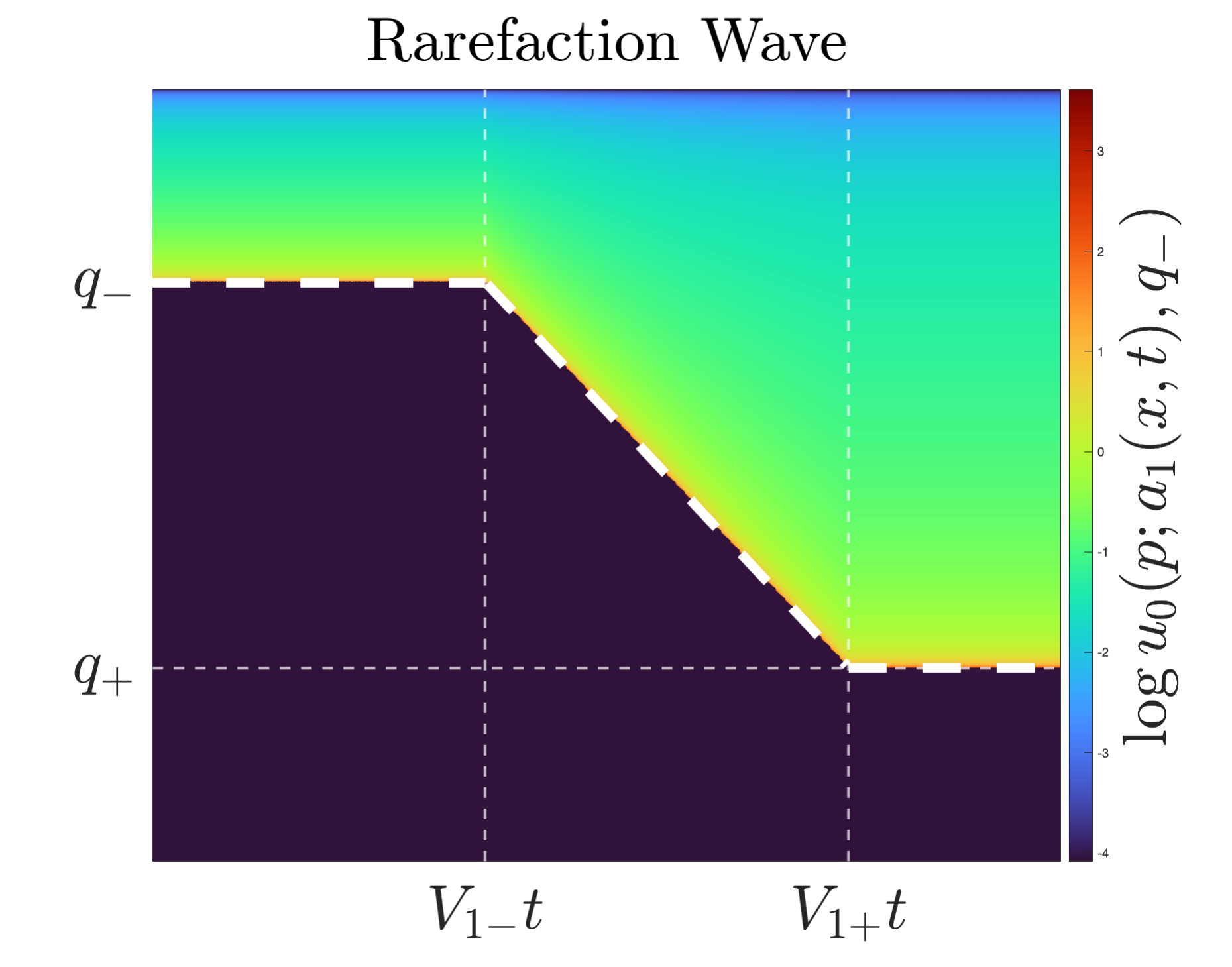}
}
\qquad
\subfloat[]{%
\includegraphics[width=0.44\textwidth]{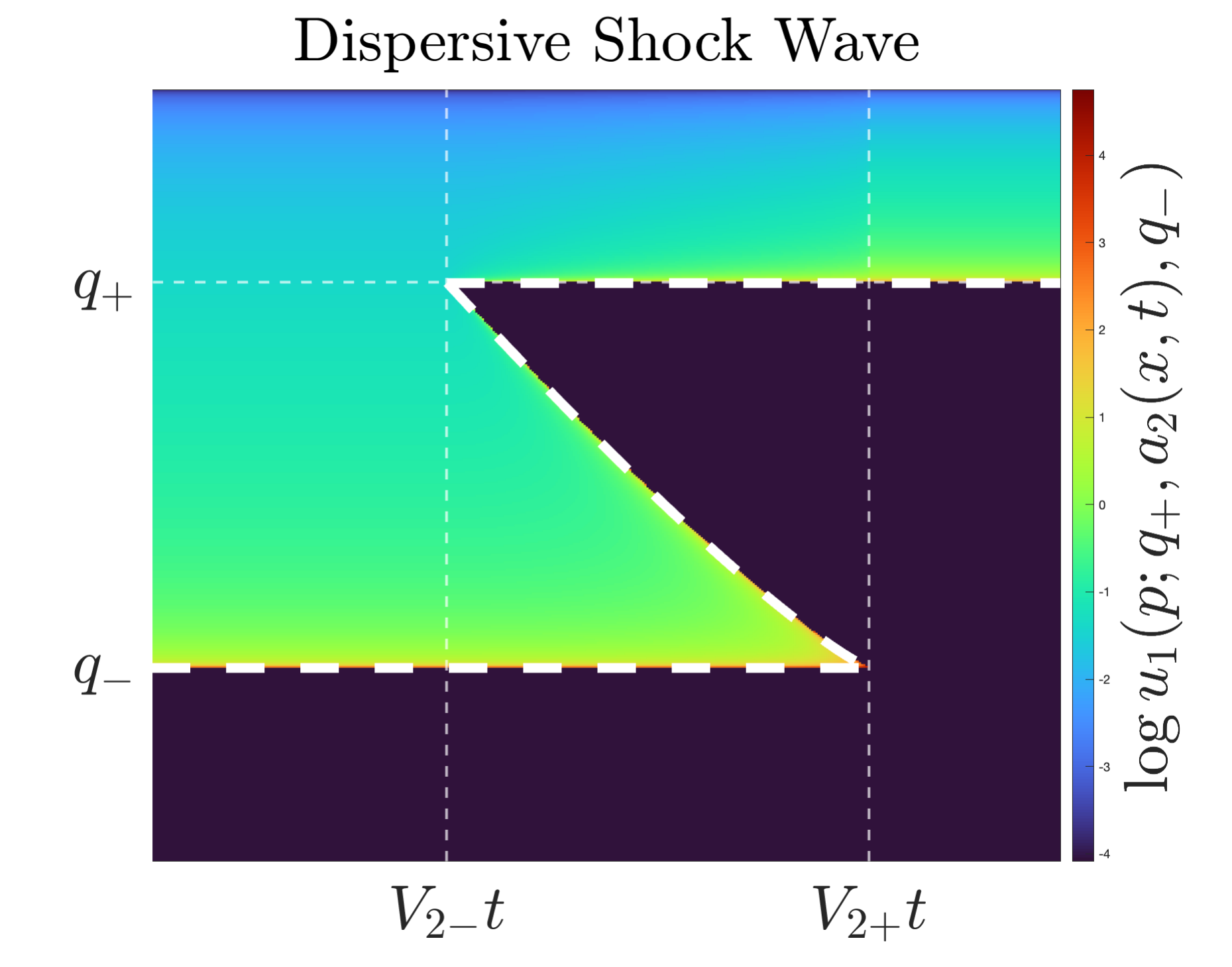}
}
\caption{Solutions to the kinetic equation for the circular condensate with step function initial data (Riemann problem) \eqref{cond-Riemann-Prob}. (a) Rarefaction wave solution when $q_->q_+$; Dashed line: $a_1(x,t)$; Colors: the values of DOS $u_0(p;a_1(x,t),-1)$; (b) Dispersive shock wave solution when $q_-<q_+$. Dashed line: $a_2(x,t)$; Colors: the value of DOS $u_1(p;q_+,a_2(x,t),q_-,-1)$.}
\label{fig:density plot DOS}
\end{figure}

\subsection{Proof of Theorem \ref{Mod-DOS}}

\textbf{Proof of the first part (i).} Applying Theorem \ref{th-main-2}, we obtain the modulation equation for moving the branch point $\alpha_1$:
\begin{align*}
\partial_t\alpha_1+V_1\partial_x\alpha_1=0,
\end{align*}
where
\begin{align*}
V_1 =-6\rho\left(\cos(\alpha_1)-\frac{1}{3}\right).%\label{eq: V1}
\end{align*}
Since $a_1=\cos(\alpha_1)$, the modulation equation is equivalent to
\begin{align*}
\partial_t a_1(x,t)-6\rho\left(a_1-\frac{1}{3}\right)\partial_x a_1(x,t)=0.
\end{align*}
Considering the initial data \eqref{mod-initial-data}, the self-similar solution to the modulation equation is given~by
\begin{align*}
a_1(x,t)=\begin{cases}
q_-,& x<V_{1-}t,\\
-\dfrac{1}{6}\frac{x}{\rho t}+\dfrac{1}{3},& V_{1-}t<x<V_{1+}t,\\
q_+,& x>V_{1+}t,
\end{cases}
\end{align*}
where
\begin{align*}
&V_{1-}=V_1|_{a_1=q_-}=-6\rho\left(q_--\frac{1}{3}\right),\qquad V_{1+}=V_1|_{a_1=q_+}=-6\rho\left(q_+-\frac{1}{3}\right).
\end{align*}
Clearly, $0>V_{1+}>V_{1-}$, which generates a rarefaction wave. In this case, the DOS is given by
\begin{align*}
u(z;x,t)=u_0(p;a_1(x,t),-1)=\frac{1}{\pi}\frac{(1-p)\bigl(p+\frac{1-a_1(x,t)}{2}\bigr)}{\sqrt{(p-a_1(x,t))(1-p)}}.
\end{align*}

\textbf{Proof of the second part (ii):} The previously derived solution (the rarefaction wave) is not well defined for $q_-<q_+$ since a wave breaking occurs immediately. To resolve the issue, it is necessary to introduce higher genus DOS that connects $u_0(p;q_-,-1)$ and $u_0(p;q_+,-1)$. This can be done by using the genus one DOS (see equation \eqref{g1-DOS}):
\begin{align*}
u(z;x,t)=u_1(p;a_1=q_+,a_2(x,t),a_3=q_-,-1).
\end{align*}

Following Theorem \ref{th-main-2}, the motion of $a_2(x,t)$ is governed by the following modulation equation:
\begin{align*}
\partial_t a_2+V_2\partial_x a_2=0,
\end{align*}
where
\begin{align}
V_2 =V_2(a_1,a_2,a_3,-1)=-2\rho(a_1+a_2+a_3-1)-\frac{4\rho(a_2-a_3)(a_1-a_2)}{(a_1-a_3)\mu(m) +a_3-a_2}\label{def-V2}
\end{align}
with
\begin{align*}
m=\frac{(1+a_3)(a_1-a_2)}{(a_1-a_3)(1+a_2)},\qquad
\mu(m)=\frac{E(m)}{K(m)}.
\end{align*}
By a direct computation, we obtain
\begin{align*}
&V_{2-}=\lim_{a_2\ra q_-}V_2=\left(\frac{-16a_1^2+8a_1a_3+2a_3^2-8a_1+4a_3+2}{2a_1-a_3+1}\right)\rho,\\
&V_{2+}=\lim_{a_2\ra 1}V_2=(-2a_1-4a_3+2)\rho.
\end{align*}
Moreover, since \[V_{2+}-V_{2-}=\frac{2\rho(6a_1-a_3+5)(a_1-a_3)}{2a_1-a_3+1},\] it is evident that
\begin{align*}
V_{2+}>V_{2-}.
\end{align*}
Then the solution to the modulation equation for $a_2$ is defined implicitly by the following system:
\begin{align}
V_2(a_1=q_+,a_2,a_3=q_-,-1)=\frac{x}{t},\qquad V_{2-}t<x<V_{2+}t.\label{eq: implic-a2}
\end{align}

Since the solution $a_2(x,t)$ is defined implicitly, we need to prove that the function $V_2$ as a~function of $a_2$ is invertible. In our case, it is equivalent to show that for any fixed $a_1$, $a_3$, the function $V_2(a_1,a_2,a_3,-1)$ is monotonic with respect to $a_2$. So, to complete the proof, we need to prove the following proposition.

\begin{Proposition}\label{mono-V2}
$V_2(a_1,a_2,a_3,-1)$, as defined by equation \eqref{def-V2}, is monotonic as a function of $a_2$ on the interval $(a_3,a_1)$ for any $-1<a_3<a_1<1$ being fixed.
\end{Proposition}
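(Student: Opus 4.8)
The plan is to reduce the claim to a one-variable sign inequality for complete elliptic integrals. Since $a_1$ and $a_3$ are fixed, the modulus $m=\dfrac{(1+a_3)(a_1-a_2)}{(a_1-a_3)(1+a_2)}$ is a smooth strictly decreasing bijection of $a_2\in(a_3,a_1)$ onto $m\in(0,1)$, with inverse $a_2=\dfrac{(1+a_3)a_1-m(a_1-a_3)}{(1+a_3)+m(a_1-a_3)}$. Hence it suffices to show that $V_2$, viewed through this substitution as a function $W(m)$ on $(0,1)$ whose coefficients are rational in $a_1,a_3$, has a derivative of one sign on $(0,1)$; we need only a definite sign, not its value. (Equivalently one can keep $a_2$ as the variable and use $m'(a_2)<0$.)

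Next I would differentiate $W$ and eliminate the transcendental part only at the end. The sole non-rational ingredient is $\mu(m)=E(m)/K(m)$, and from $\dfrac{dE}{dm}=\dfrac{E-K}{2m}$ and $\dfrac{dK}{dm}=\dfrac{E-(1-m)K}{2m(1-m)}$ one obtains the algebraic relation
\[
\mu'(m)=\frac{(1-m)\bigl(2\mu(m)-1\bigr)-\mu(m)^2}{2m(1-m)} .
\]
Substituting this into $W'(m)$ turns it into a rational expression $\mathcal N(m,\mu;a_1,a_3)/\mathcal D(m,\mu;a_1,a_3)$ with $\mathcal N,\mathcal D$ polynomial in $m$, $\mu=\mu(m)$, $a_1$, $a_3$. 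Up to a positive power of $m(1-m)$, the denominator $\mathcal D$ is the square of the denominator $(a_1-a_3)\mu(m)+a_3-a_2$ of \eqref{def-V2}; the latter does not vanish for $a_2\in(a_3,a_1)$, since otherwise $V_2$ would be unbounded there, contradicting the boundedness of the modulation velocities established after Theorem~\ref{th-main-2}. Being continuous and nonvanishing on the connected interval, it has constant sign, so $\mathcal D$ does too, and the sign of $W'$ equals the sign of $\mathcal N$.

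The remaining task, which is the real obstacle, is to show that $\mathcal N(m,\mu(m);a_1,a_3)$ does not change sign for $m\in(0,1)$ and $-1<a_3<a_1<1$. For this I would use only elementary properties of $\mu$: the bounds $0<\mu(m)<1$, the strict monotone decrease of $\mu$, and $1-m<\mu(m)$ (equivalently $E>(1-m)K$, which follows from the pointwise inequality $\sqrt{1-m\sin^2\theta}\ge(1-m)/\sqrt{1-m\sin^2\theta}$). Treating $\mathcal N$ as a polynomial in the variable $\mu$ on the window $(1-m,1)$ and pulling out manifestly positive factors (powers of $a_1-a_3$, of $1+a_2$, of $m$ and of $1-m$), its sign there should be forced by its endpoint values together with its monotonicity in $\mu$. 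The difficulty is to make this verdict uniform in all three parameters $(m,a_1,a_3)$; if the window $(1-m,1)$ for $\mu$ proves too loose, I would sharpen it, e.g. to $\sqrt{1-m}<\mu(m)$ (that is $E^{2}>(1-m)K^{2}$), or use convexity of $\mu$, to close the estimate.
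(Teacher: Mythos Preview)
Your strategy---rewriting $\partial_{a_2}V_2$ (equivalently $W'(m)$) as $\mathcal N/\mathcal D$ with $\mathcal D$ a nonvanishing square and then pinning the sign of $\mathcal N(m,\mu(m))$ via a two-sided window for $\mu$---is precisely the paper's approach. After substituting your formula for $\mu'$, the $\mu^{2}$ terms in $\mathcal N$ in fact cancel, so $\mathcal N$ is \emph{linear} in $\mu$; the paper then argues by contradiction that the unique $\mu$-root $\mu^{*}$ of $\mathcal N(m,\cdot)$ never falls in the admissible range of $\mu(m)$. So structurally you are on the same track.

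The gap is in the window you propose. The interval $(1-m,1)$ is too wide on the upper end: $\mu^{*}$ does not always exceed $1$. For example, with $a_{1}=\tfrac12$, $a_{3}=-\tfrac12$ and $m=0.02$ one finds $\mu^{*}\approx 0.994\in(1-m,1)=(0.98,1)$, so the endpoint values $\mathcal N(m,1-m)$ and $\mathcal N(m,1)$ carry opposite signs and your test is inconclusive. Your suggested refinements tighten the wrong side: replacing $1-m$ by $\sqrt{1-m}$ improves the lower bound, and convexity of $\mu$ (tangent at $m=0$ has slope $-\tfrac12$) would, if anything, push toward $\mu\ge 1-\tfrac{m}{2}$, the opposite of what you need. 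What actually closes the argument is the sharper \emph{upper} bound $\mu(m)<1-\tfrac{m}{2}$ (equivalently $E<(1-\tfrac{m}{2})K$), which the paper isolates as a separate lemma. With the window $(1-m,\,1-\tfrac{m}{2})$ one shows that $\mu^{*}>1-\tfrac{m}{2}$ for $m$ below a threshold $m_{\mathrm{cr}}\in(0,\tfrac12)$ and $\mu^{*}<1-m$ for $m>m_{\mathrm{cr}}$; hence $\mu^{*}$ always lies outside the window and $\mathcal N(m,\mu(m))\neq 0$.
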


Before we prove the proposition, we need the following lemma.
\begin{Lemma}\label{Lemma-mu-ineq}
Let $\mu(m)=\frac{E(m)}{K(m)}$, then
\begin{align}\label{mu-ineq}
1-m<\mu(m)<1-\frac{m}{2},\qquad m\in (0,1).
\end{align}
\end{Lemma}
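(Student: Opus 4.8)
The plan is to establish the two-sided bound on $\mu(m)=E(m)/K(m)$ by working with the standard series expansions of the complete elliptic integrals $K$ and $E$ and reducing each inequality to a comparison of the Taylor coefficients. Recall that
\begin{gather*}
K(m)=\frac{\pi}{2}\sum_{n\geq 0}\left(\frac{(2n)!}{2^{2n}(n!)^2}\right)^2 m^n,\qquad
E(m)=\frac{\pi}{2}\sum_{n\geq 0}\left(\frac{(2n)!}{2^{2n}(n!)^2}\right)^2 \frac{m^n}{1-2n}.
\end{gather*}
Writing $c_n=\bigl((2n)!/(2^{2n}(n!)^2)\bigr)^2>0$, the inequality $\mu(m)<1-\tfrac m2$ is equivalent to $E(m)<\bigl(1-\tfrac m2\bigr)K(m)$, i.e.\ to the power series inequality $\sum_n \frac{c_n}{1-2n}m^n < \bigl(1-\tfrac m2\bigr)\sum_n c_n m^n$; collecting the coefficient of $m^n$ on the right and subtracting, it suffices to show the coefficient inequality $\frac{c_n}{1-2n} < c_n - \tfrac12 c_{n-1}$ for all $n\geq 1$ (with a strict inequality for at least one $n$, which already holds at $n=1$). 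Using $c_n/c_{n-1}=\bigl((2n-1)/(2n)\bigr)^2$, this is an elementary rational inequality in $n$ that I would verify directly. Similarly, $1-m<\mu(m)$ is equivalent to $(1-m)K(m)<E(m)$, which by the same coefficient-comparison reduces to $c_n - c_{n-1} < \frac{c_n}{1-2n}$ for $n\geq 1$, again an elementary check; one must be a little careful here because for $n\geq 1$ the factor $1/(1-2n)$ is negative, so this is an inequality between a negative quantity ($c_n-c_{n-1}<0$) and a negative quantity.

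An alternative (and perhaps cleaner) route, which I would fall back on if the coefficientwise argument turns out to fail for small $n$, is to use the known ODE/recursions for $K$ and $E$: namely $\frac{dE}{dm}=\frac{E-K}{2m}$ and $\frac{dK}{dm}=\frac{E-(1-m)K}{2m(1-m)}$. From these one computes
\begin{gather*}
\mu'(m)=\frac{E'K-EK'}{K^2}=\frac{1}{2m}\cdot\frac{E^2-(1-m)K^2}{(1-m)K^2},
\end{gather*}
so the sign of $\mu'$ is governed by $E^2-(1-m)K^2$, which is positive on $(0,1)$ (for instance by Legendre's relation or by the elementary inequality $E(m)>\sqrt{1-m}\,K(m)$ obtained from Cauchy--Schwarz on the integral representations). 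Hence $\mu$ is increasing. Then the bounds $1-m<\mu(m)<1-\tfrac m2$ follow from the boundary behavior together with a convexity/comparison argument: at $m=0$ all three of $1-m$, $\mu(m)$, $1-\tfrac m2$ equal $1$, and one compares derivatives at $0$, where $\mu'(0)=\tfrac12$ (from the first coefficients, $K=\tfrac\pi2(1+\tfrac m4+\cdots)$, $E=\tfrac\pi2(1-\tfrac m4+\cdots)$, so $\mu=1-\tfrac m2+O(m^2)$), matching the upper envelope's slope and beating the lower envelope's slope $-1$. To upgrade "matching slope at one endpoint" to a strict global inequality I would show $1-\tfrac m2-\mu(m)$ is increasing, i.e.\ $-\tfrac12-\mu'(m)>0$ is false — so instead I would show $\mu(m)<1-\tfrac m2$ by noting $\frac{d}{dm}\bigl[(1-\tfrac m2)K(m)-E(m)\bigr]>0$ directly using the derivative formulas, and similarly for the lower bound.

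I expect the main obstacle to be the lower bound $\mu(m)>1-m$ as $m\to 1^-$: there $\mu(m)\to 0$ and $1-m\to 0$, so the inequality becomes delicate near the singular endpoint where $K(m)\to\infty$ logarithmically while $E(m)\to 1$. The coefficientwise method sidesteps this cleanly since it is a statement about formal power series valid on the whole disk of convergence $|m|<1$, so I would prioritize that approach and only invoke the ODE-based monotonicity argument as a cross-check. The upper bound $\mu(m)<1-\tfrac m2$ is the easier of the two in every approach. Once Lemma \ref{Lemma-mu-ineq} is in hand, it feeds into the proof of Proposition \ref{mono-V2}: the bound controls the sign of the denominator $(a_1-a_3)\mu(m)+a_3-a_2$ and the monotonicity of the correction term in \eqref{def-V2}, which is what ultimately makes $V_2$ invertible as a function of $a_2$.
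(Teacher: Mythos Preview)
Your primary coefficient-comparison approach is correct and genuinely different from the paper's argument, but note one slip: at $n=1$ the upper-bound coefficient inequality is an \emph{equality}, not a strict one (both sides equal $-1/4$); the strict inequality first occurs at $n=2$, and in general the reduction $\frac{c_n}{1-2n}\le c_n-\tfrac12 c_{n-1}$ is equivalent to $\frac{2n-1}{2n}\ge\tfrac12$, with equality exactly at $n=1$. The lower-bound coefficient inequality reduces to $\frac{2n-1}{2n}<1$, which is strict for all $n\ge1$. So the series argument goes through once you fix that remark.

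The paper's proof is considerably shorter and is essentially your final fallback sentence: using $E'=\frac{E-K}{2m}$ and $K'=\frac{E-(1-m)K}{2m(1-m)}$, the lower bound is immediate from $E-(1-m)K=2m(1-m)K'>0$, and the upper bound from $\frac{d}{dm}\bigl[E-(1-\tfrac m2)K\bigr]=-\tfrac m2 K'<0$ together with the value $0$ at $m=0$. Your series method is more hands-on but self-contained; the paper's method is two lines once you quote the derivative identities.

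Your ``alternative'' paragraph, however, contains real errors that you should not lean on: $\mu$ is \emph{decreasing} on $(0,1)$ (since $\mu(0)=1$ and $\mu(1^-)=0$), not increasing; correspondingly $\mu'(0)=-\tfrac12$, not $+\tfrac12$; and your displayed formula for $\mu'$ is not right (the numerator should be $2(1-m)EK-(1-m)K^2-E^2$, not $E^2-(1-m)K^2$). None of this is needed for the lemma, so simply drop that discussion and keep either the coefficient argument or the two-line derivative argument.
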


\begin{proof}It is well known that (see Byrd--Friedman \cite[formulas~(710.00) and (710.02)]{BFbook})
\begin{align}\label{ell-id}
E'(m)=\frac{E-K}{2m},\qquad K'(m)=\frac{E-(1-m)K}{2m(1-m)},
\end{align}
where prime means differentiation with respect to~$m$. Then
\begin{align*}
E-(1-m)K=2m(1-m)K'(m)>0,
\end{align*}
where we have used the fact that $K'(m)>0$ for $m\in (0,1)$. Thus, the first inequality in \eqref{mu-ineq} is proven.
Using both equations \eqref{ell-id}, we obtain
\begin{align*}
\frac{{\rm d}}{{\rm d}m}\left(E-\left(1-\frac{m}{2}\right)K\right)=-\frac{m}{2}K'(m),
\end{align*}
so that
\begin{align*}
E-\left(1-\frac{m}{2}\right)K<0.
\end{align*}
Thus, $\mu(m)<1-\frac{m}{2}$ for $m\in (0,1)$.
\end{proof}

\begin{proof}[Proof of Proposition \ref{mono-V2}]
We prove the statement by contradiction. Suppose there exists $a_2\in (a_3,a_1)$ such that
\begin{align*}
\partial_{a_2}V_2=0.
\end{align*}
This leads to
\begin{align*}
\mu(m)=\frac{(a_2-a_3)\bigl(a_1^2-a_1a_3-3a_2^2+3a_2a_3+a_1-3a_2+2a_3\bigr)}{2(a_1-a_3)\bigl(2a_1a_2-a_1a_3-3a_2^2+2a_2a_3+a_1-2a_2+a_3\bigr)}.
\end{align*}
Using \eqref{m,mu} to express
$a_2$ in terms of $a_1$, $a_3$ and $m$, we obtain
\begin{align}
\mu(m)={}&\frac{1}{2}+\frac{(a_1+1)(a_3+1)}{2(m(a_1-a_3)+a_3+1)(a_1-a_3)}\nonumber\\
&{}+\frac{\bigl[(a_1+1)^2+(a_3+1)^2\bigr]m-(a_3+1)^2}{2(a_1-a_3)\bigl((a_1-a_3)m^2-2(a_1+1)m+a_3+1\bigr)}.\label{mu-right}
\end{align}

Since $a_1>a_3$ and $m\in (0,1)$, the first two terms has no singularities. The denominator of the third term processes two zeros
\[\frac{(a_1+1)\pm\sqrt{a_1^2-a_1a_3+a_3^2+a_1+a_3+1}}{a_1-a_3},\]
but since $m<1$, we see that the right-hand side of \eqref{mu-right} has a unique simple pole at
\begin{align*}
m_{\rm cr}=\frac{a_1+1-\sqrt{a_1^2-a_1a_3+a_3^2+a_1+a_3+1}}{a_1-a_3}.
\end{align*}
On the one hand, since
\[
(a_1+1)^2-\Bigl(\sqrt{a_1^2-a_1a_3+a_3^2+a_1+a_3+1}\Bigr)^2=(a_3+1)(a_1-a_3)>0, \qquad m_{\rm cr}>0.
\]
 On the other hand, since
 \[
 m_{\rm cr}-1/2=(a_1-a_3)^{-1}\Bigl((a_1+a_3)/2+1-\sqrt{a_1^2-a_1a_3+a_3^2+a_1+a_3+1}\Bigr)
 \]
 and
 \[
 ((a_1+a_3)/2+1)^2-\Bigl(\sqrt{a_1^2-a_1a_3+a_3^2+a_1+a_3+1}\Bigr)^2=-(a_1-a_3)^2<0,
\]
we obtain $m_{\rm cr}<1/2$. Since the singularity $m_{\rm cr}\in (0,1/2)$, we consider two cases: (1) $m\in (0,m_{\rm cr})$; (2) $m\in (m_{\rm cr},1)$. In each case, we will construct a contradiction using Lemma~\ref{Lemma-mu-ineq}.
In the first case, subtract right-hand side of \eqref{mu-right} by $1-m/2$, we get
\begin{align}
\frac{m^2 F(m)}{2(m(a_1-a_3)+a_3+1)((a_1-a_3)m^2-2(a_1+1)m+a_3+1)},\label{pf-case1}
\end{align}
where
\begin{align*}
F(m)={}&(a_1-a_3)^2m^2-(a_1-a_3)(3a_1-2a_3+1)m\\
&{}+a_3^2-(3a_1+1)a_3+3a_1^2+3a_1+1.
\end{align*}
Since $a_1>a_3$, the denominator of \eqref{pf-case1} is positive for any $m\in (0,m_{\rm cr})$. From the expression of the quadratic function $F(m)$, we see that the axis of symmetry is
\[\frac{3a_1-2a_3+1}{2(a_1-a_3)},\]
which is obviously strictly greater than 1. This implies $F(m)$ is a decreasing function for $m\in (0,1)$. Thus, $F(m)\geq F(1)=(a_1+1)^2>0$ and we have shown that
\begin{align*}
\text{right-hand side of \eqref{mu-right}}>1-\frac{m}{2}.
\end{align*}
However, due to Lemma \ref{Lemma-mu-ineq}, this contradicts the inequality satisfied by $\mu(m)$.

In the second case, subtract right-hand side of \eqref{mu-right} by $1-m$, we get
\begin{align}
\frac{m(m-1) G(m)}{2(m(a_1-a_3)+a_3+1)((a_1-a_3)m^2-2(a_1+1)m+a_3+1)},\label{pf-case2}
\end{align}
where \begin{align*}
G(m)=(a_1-a_3)^2m^2-(a_1-a_3)(3a_1-a_3+2)\frac{m}{2}-\frac{(a_3+1)^2}{2}.
\end{align*}
In this case, the denominator is negative since $m>m_{\rm cr}$. As for the quadratic function $G(m)$, it is easy to check $G(0)<0$ and $G(1)<0$, together with the fact that the leading coefficient of $G$ is positive, we have $G(m)<0$ for any $m\in (0,1)$. This implies the expression \eqref{pf-case2} is negative and we have shown
\begin{align*}
\text{right-hand side of \eqref{mu-right}}<1-m.
\end{align*}
Again, due to Lemma \ref{Lemma-mu-ineq}, this contradicts the inequality satisfied by $\mu(m)$.

Hence, we have shown
\[\partial_{a_2}V_2\neq 0,\qquad \forall a_2\in (a_3,a_1).\]
And this means $V_2(a_1,a_2,a_3)$ is a monotonic function of $a_2$ for $a_2\in (a_3,a_1)$.
\end{proof}

Based on Proposition \ref{mono-V2}, we have shown that the solution $a_2$ defined by the equation \eqref{eq: implic-a2} is well defined. Now, let's check the boundary behaviors as $x\ra V_{2+}t$ and $x\ra V_{2-}t$. A direct calculation shows
\begin{align*}
&\lim_{x\ra V_{2-}t}u_1(p;a_1=q_+,a_2(x,t),a_3=q_-,-1)=u_0(p;q_-,-1),\\
&\lim_{x\ra V_{2+}t}u_1(p;a_1=q_+,a_2(x,t),a_3=q_-,-1)=u_0(p;q_+,-1).
\end{align*}

Thus, the genus one DOS, as given by
\begin{align*}
u(z;x,t)=\begin{cases}
u_0(p;q_-,-1),& x<V_{2-}t,\\
u_1(p;q_+,a_2(x,t),q_-,-1),& V_{2-}t<x<V_{2+}t,\\
u_0(p;q_+,-1),& x>V_{2+}t.
\end{cases}%\label{DSW-DOS}
\end{align*}
connects two genus zero DOS: $u_0(p;q_-,-1)$ and $u_0(p;q_+,-1)$. Similarly, using the expressions for the DOF (namely, equations \eqref{g1-DOS-v} and \eqref{g0-DOS-v}), we get the modulated DOF as given by equations \eqref{eq: mod-dof-g1} and \eqref{eq: mod-dof} respectively. This completes the proof for Theorem \ref{Mod-DOS}.

\section{Kurtosis in genus 0 and genus 1 circular condensate}\label{sec-kurt}

In this section, we will compute the fourth normalized moment $\kappa=\bigl\langle|\psi|^4\bigr\rangle/\bigl\langle|\psi|^2\bigr\rangle^2$ of the fNLS circular condensate $|\psi|$-the kurtosis. In the genus 0 case, we obtain that the kurtosis for the condensate is always 2, while in the genus 1 case, the kurtosis is greater than 2 but finite. Below we will give the explicit formulae for computing kurtosis in genus 0 and genus 1 case. The main tool is to use the formulae of computing the averaged conserved quantities for the fNLS soliton gas, which are recently developed by the authors in~\cite{TW}.

It is well known that the fNLS has infinite many conservation laws ($(f_j)_t=(g_j)_x$, $j\geq 1$), where the densities and the currents, $f_j$ and $g_j$, can be determined recursively (see, for example, Wadati's paper \cite{Wadati}). In order to compute the kurtosis for the circular condensate, we will need the first few densities and currents:
\begin{align*}
f_1=|\psi|^2,\qquad
f_3=|\psi|^4+\psi\overline{\psi}_{xx},\qquad
g_2=|\psi_x|^2-|\psi|^4-\psi\overline{\psi}_{xx}.
\end{align*}

According to \cite{TW}, the averaged conserved quantities are given by
\begin{align*}
&\braket{f_1}=2I_1:=4\int_{\G^+}(\Im{z})u(z)|{\rm d}z|,\\
&\braket{f_3}=-\frac{8}{3}I_3:=-\frac{16}{3}\int_{\G^+}\bigl(\Im{z^3}\bigr)u(z)|{\rm d}z|,\\
&\braket{g_2}=-2J_2:=-4\int_{\G^+}\bigl(\Im{z^2}\bigr)v(z)|{\rm d}z|,
\end{align*}
where
\begin{align}
I_j=2\int_{\Gamma_+}u(\xi)\Im \xi^j|{\rm d}\xi|,\label{eq:Ij}\qquad
J_j=2\int_{\Gamma_+}v(\xi)\Im \xi^j|{\rm d}\xi|,
\end{align}
and $\G^+$ is the support of the circular condensates.

Since the total derivatives do not contribute to the average, through integration by parts, the following identities follow
\begin{align*}
\braket{f_3}=\bigl\langle|\psi|^4-|\psi_x|^2\bigr\rangle,\qquad
\braket{g_2}=\bigl\langle 2|\psi_x|^2-|\psi^4|\bigr\rangle.
\end{align*}

Then by the definition of kurtosis, we have
\begin{align}
\kappa=\frac{\bigl\langle|\psi|^4\bigr\rangle}{\bigl\langle|\psi|^2\bigr\rangle^2}=-\frac{\frac{4}{3}I_3+\frac{1}{2}J_2}{I_1^2}.\label{eq:kurtosis def}
\end{align}

Below, we use the formula \eqref{eq:kurtosis def} to derive formulas for computing the kurtosis for the genus~0 and genus 1 condensate. The main ingredient of the computation is computing the averaged densities $\bigl(I_j\bigr)$ and averaged fluxes $\bigl(J_j\bigr)$. The following proposition provides a fairly simple way to compute these quantities.

\begin{Proposition}\label{prop:avg den}
Let $\G^+$ be the contour for the circular condensate associating with the hyperelliptic Riemann surface $\mathfrak{R}_n$, then the averaged densities $\bigl(I_j\bigr)$ and the averaged fluxes $\bigl(J_j\bigr)$ can be computed by the following formulae: for $j\in \mathbb{Z}_+$,
\begin{align}
&I_j=2\pi {\rm i} \rho^{j+1}\Res\left\{\frac{P(z(p))U_{j-1}(p)}{R(z(p))},\,p=\infty\right\},\label{eq:avg den}\\
&J_j=2\pi {\rm i} \rho^{j+1}\Res\left\{\frac{Q(z(p))U_{j-1}(p)}{R(z(p))},\,p=\infty\right\},\label{eq:avg flux}
\end{align}
where $P$, $Q$, $R$ are defined in Theorem~$\ref{th-main-1}$ and
\begin{align}\label{Ujp}
U_j(p)=\frac{\sin[(j+1)\arccos(p)]}{\sin[\arccos(p)]}.
\end{align}
is the $j$-th Chebyshev polynomial of the second kind.
\end{Proposition}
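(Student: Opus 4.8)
The plan is to reduce the integrals $I_j$ and $J_j$ to residue computations at $p=\infty$ by combining three ingredients: the explicit form of $u$, $v$ from Theorem~\ref{th-main-1}, the fact that $\frac{\hat u(p)\,{\rm d}p}{\sqrt{1-p^2}}$ and $\frac{\hat v(p)\,{\rm d}p}{\sqrt{1-p^2}}$ are second-kind meromorphic differentials on $\mathfrak R_n$ (so all their gap integrals vanish, by \eqref{gap-cond}), and a contour-deformation argument on the $p$-plane Riemann surface. First I would rewrite $I_j = 2\int_{\G^+} u(\xi)\,\Im \xi^j\,|{\rm d}\xi|$ in the $\xi$-coordinate via $\xi = \rho e^{{\rm i}\theta}$, so that $\Im \xi^j = \rho^j\sin(j\theta)$ and $|{\rm d}\xi| = \rho\,{\rm d}\theta$; this turns $I_j$ into $2\rho^{j+1}\int_{\tilde\G^+}\psi_1(\theta)\sin(j\theta)\,{\rm d}\theta$. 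Then, passing to $p=\cos\theta$ on $\hat\G^+$, one has $\sin(j\theta)/\sin\theta = U_{j-1}(\cos\theta)=U_{j-1}(p)$, and ${\rm d}\theta = -{\rm d}p/\sqrt{1-p^2}$, so that up to orientation
\begin{align*}
I_j = 2\rho^{j+1}\int_{\hat\G^+}\hat u(p)\,U_{j-1}(p)\,\frac{{\rm d}p}{\sqrt{1-p^2}}
= 2\rho^{j+1}\int_{\hat\G^+}\frac{P(p)\,U_{j-1}(p)}{R(p)}\,\frac{{\rm d}p}{\sqrt{1-p^2}},
\end{align*}
and similarly for $J_j$ with $P\mapsto Q$.

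Next I would view $\Omega_j := \frac{P(p)\,U_{j-1}(p)}{R(p)\,\sqrt{1-p^2}}\,{\rm d}p$ as a meromorphic differential on $\mathfrak R_n$. Since $U_{j-1}$ is a polynomial and $\frac{P(p)\,{\rm d}p}{R(p)\sqrt{1-p^2}}$ is second-kind with its only pole at $p=\infty$ (by the argument already given in the proof of Theorem~\ref{th-main-1}), the differential $\Omega_j$ is meromorphic with a single pole at $\infty$ on each sheet. The contour $\hat\G^+$, traversed on both sides, is homologous on $\mathfrak R_n$ to a small loop around $p=\infty$ together with the gap cycles; but all the gap (and hence $B$-) periods of $\Omega_j$ vanish because they vanish for $\frac{P\,{\rm d}p}{R\sqrt{1-p^2}}$ (that is exactly \eqref{gap-cond-1}, and multiplying by the polynomial $U_{j-1}$ does not change which cycles are closed — one needs that $\Omega_j$ is still second-kind, which I check from the residue-free expansion at $\infty$). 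Therefore $\int_{\hat\G^+}\Omega_j$, read as a contour integral wrapping the branch cut, equals $\pm 2\pi{\rm i}$ times the residue of $\frac{P(p)U_{j-1}(p)}{R(p)\sqrt{1-p^2}}$ at $p=\infty$, and using $z(p)=\rho(p+\sqrt{p^2-1})$ one identifies $\frac{1}{\sqrt{1-p^2}}$ with the relevant Jacobian factor $R(z(p))$-versus-$R(p)$ bookkeeping so that the answer comes out as \eqref{eq:avg den} with the stated normalization $2\pi{\rm i}\rho^{j+1}$; the case $J_j$ is identical with $Q$ in place of $P$.

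The main obstacle I expect is the careful bookkeeping of orientations, branch choices, and the exact Jacobian/normalization factor: matching the arclength measure $|{\rm d}\xi|$ and $|{\rm d}z|$ on $\G^+$ with the oriented contour integral on $\hat\G^+$, tracking the sign flips from ${\rm d}\theta = -{\rm d}p/\sqrt{1-p^2}$ and from the "integration in the negative direction" convention noted after \eqref{NDR-xi}, and confirming that writing $R(z(p))$ (rather than $R(p)\sqrt{1-p^2}$) in the final formula is consistent with the normalization $R(p)\sim p^n$. A secondary point requiring a short justification is that $U_{j-1}(p)\cdot\frac{P(p)\,{\rm d}p}{R(p)\sqrt{1-p^2}}$ remains a second-kind differential (no residue at $\infty$): since $U_{j-1}$ has degree $j-1$ and the original differential behaves like $(1+O(p^{-2}))\,{\rm d}p$, the product still has a pole at $\infty$ but one must verify the $p^{-1}$-coefficient vanishes — this follows from parity of the Chebyshev polynomials together with the $1+O(p^{-2})$ expansion established in the proof of Theorem~\ref{th-main-1}, so the residue at $\infty$ is captured entirely by the single formula in the statement. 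Once these normalization issues are pinned down, the result is immediate from the residue theorem applied on $\mathfrak R_n$.
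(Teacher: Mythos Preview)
Your change of variables contains a concrete error that derails the rest of the argument. You correctly write ${\rm d}\theta=-{\rm d}p/\sqrt{1-p^2}$, but you also use $\sin(j\theta)/\sin\theta=U_{j-1}(p)$, i.e.\ $\sin(j\theta)=U_{j-1}(p)\sqrt{1-p^2}$. Hence
\[
\sin(j\theta)\,{\rm d}\theta \;=\; U_{j-1}(p)\sqrt{1-p^2}\cdot\Bigl(-\frac{{\rm d}p}{\sqrt{1-p^2}}\Bigr)\;=\;-\,U_{j-1}(p)\,{\rm d}p,
\]
so the factor $\sqrt{1-p^2}$ \emph{cancels}. The correct reduction is
\[
I_j \;=\; -2\rho^{j+1}\int_{\hat\G^+}\frac{P(p)\,U_{j-1}(p)}{R(p)}\,{\rm d}p,
\]
with no $\sqrt{1-p^2}$ in the denominator. (The notation $P(z(p))$, $R(z(p))$ in the statement is just $P(p)$, $R(p)$, since $u(z)=\hat u(p(z))=P(p(z))/R(p(z))$; there is no hidden Jacobian to absorb your extra factor.)

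This matters structurally. With the correct integrand $\dfrac{P(p)U_{j-1}(p)}{R(p)}$ the only branch points are the $a_j$; the function is single-valued on $\C\setminus\hat\G^+$ and the whole computation is a one-line application of Cauchy's theorem in the plane: wrap a loop $\hat\gamma$ around $\hat\G^+$ (the jump of $R$ across each band gives the factor $2$), deform to $\infty$, and read off the residue. No gap conditions, no second-kind verification, no $\mathfrak R_n$ homology is needed --- this is exactly what the paper does. Your route, by contrast, works with $\Omega_j=\frac{P U_{j-1}}{R\sqrt{1-p^2}}\,{\rm d}p$ on $\mathfrak R_n$ and then asserts that its gap periods vanish. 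That assertion is false in general: \eqref{gap-cond-1} gives $\int_{c_j}\frac{P}{R\sqrt{1-p^2}}\,{\rm d}p=0$, but multiplying by the nonconstant polynomial $U_{j-1}(p)$ does \emph{not} preserve these integrals (``does not change which cycles are closed'' is not a valid argument). Likewise, the worry that $\Omega_j$ must be residue-free at $\infty$ is misplaced --- the residue at $\infty$ is precisely the quantity the proposition computes, and you do not want it to vanish. Once you fix the change of variables, all of these complications disappear.
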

\begin{proof}
Using \eqref{eq:Ij} and \eqref{Ujp} and changes of variables $\xi=\rho {\rm e}^{{\rm i}\theta}$, $p=\cos\theta$, we calculate
\begin{align}
I_j&= 2\int_{\G^+}u(\xi)\Im \xi^j|{\rm d}\xi|
=2\rho^{j+1}\int u\bigl(\rho {\rm e}^{{\rm i} \theta}\bigr)\sin(j\theta) \,{\rm d}\theta
\nonumber\\
& = -2\rho^{j+1}\int_{\hat\G^+}u(z(p))\frac{\sin(j\theta)}{\sin\theta}\,{\rm d}p
=-\rho^{j+1}\oint_{\hat\gamma}\frac{P(z(p))U_{j-1}(p)}{R(z(p))}\,{\rm d}p\label{Cauchy}\\
&= 2\pi {\rm i} \rho^{j+1}\Res\left\{\frac{P(z(p))U_{j-1}(p)}{R(z(p))},\,p=\infty\right\}\label{Residues},
\end{align}
where $U_{j-1}(p)$ is the $(j-1)$-th Chebyshev polynomial of the second kind. The equality \eqref{Cauchy} comes from an application of the Cauchy's theorem on the Riemann surface $\mathfrak{R}_n$, the loop $\hat \gamma$ encloses $\hat\G^+$ counterclockwisely. Then a direct application of the residues theorem leads to the equality \eqref{Residues}, which is exact the formula \eqref{eq:avg den} for computing the averaged densities. By replacing $u$, $P$ by $v$, $Q$ respectively, one get the formula \eqref{eq:avg flux} for computing the averaged fluxes.
\end{proof}

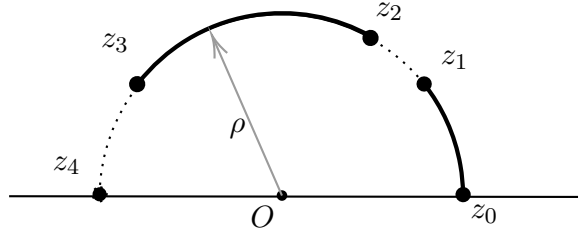
\begin{figure}
    \centering
   \resizebox{0.5\textwidth}{!}{

\tikzset{every picture/.style={line width=0.75pt}} %set default line width to 0.75pt

\begin{tikzpicture}[x=0.75pt,y=0.75pt,yscale=-1,xscale=1]
	%uncomment if require: \path (0,300); %set diagram left start at 0, and has height of 300
	
	%Straight Lines [id:da9415440149591279]
	\draw    (207.33,163) -- (477,163.33) ;
	%Shape: Arc [id:dp3332794895139495]
	\draw  [draw opacity=0] (268.03,109.83) .. controls (280.72,93.61) and (299.22,81.84) .. (321.1,77.99) .. controls (341.66,74.38) and (361.8,78.36) .. (378.64,87.91) -- (336.08,163.19) -- cycle ; \draw    (268.03,109.83) .. controls (280.72,93.61) and (299.22,81.84) .. (321.1,77.99) .. controls (341.66,74.38) and (361.8,78.36) .. (378.64,87.91) ; \draw [shift={(378.64,87.91)}, rotate = 25.72] [color={rgb, 255:red, 0; green, 0; blue, 0 }  ][fill={rgb, 255:red, 0; green, 0; blue, 0 }  ][line width=0.75]      (0, 0) circle [x radius= 3.35, y radius= 3.35]   ; \draw [shift={(268.03,109.83)}, rotate = 312.11] [color={rgb, 255:red, 0; green, 0; blue, 0 }  ][fill={rgb, 255:red, 0; green, 0; blue, 0 }  ][line width=0.75]      (0, 0) circle [x radius= 3.35, y radius= 3.35]   ;
	%Shape: Arc [id:dp5363284559358221]
	\draw  [draw opacity=0][line width=1.5]  (404.06,109.75) .. controls (412.6,120.61) and (418.66,133.67) .. (421.22,148.23) .. controls (422.04,152.93) and (422.47,157.61) .. (422.53,162.23) -- (336.08,163.19) -- cycle ; \draw [line width=1.5]    (404.06,109.75) .. controls (412.6,120.61) and (418.66,133.67) .. (421.22,148.23) .. controls (422.04,152.93) and (422.47,157.61) .. (422.53,162.23) ; \draw [shift={(422.53,162.23)}, rotate = 85.36] [color={rgb, 255:red, 0; green, 0; blue, 0 }  ][fill={rgb, 255:red, 0; green, 0; blue, 0 }  ][line width=1.5]      (0, 0) circle [x radius= 2.61, y radius= 2.61]   ; \draw [shift={(404.06,109.75)}, rotate = 55.82] [color={rgb, 255:red, 0; green, 0; blue, 0 }  ][fill={rgb, 255:red, 0; green, 0; blue, 0 }  ][line width=1.5]      (0, 0) circle [x radius= 2.61, y radius= 2.61]   ;
	%Shape: Circle [id:dp38847979876681604]
	\draw  [fill={rgb, 255:red, 0; green, 0; blue, 0 }  ,fill opacity=1 ] (334.67,162.67) .. controls (334.67,161.56) and (335.56,160.67) .. (336.67,160.67) .. controls (337.77,160.67) and (338.67,161.56) .. (338.67,162.67) .. controls (338.67,163.77) and (337.77,164.67) .. (336.67,164.67) .. controls (335.56,164.67) and (334.67,163.77) .. (334.67,162.67) -- cycle ;
	%Shape: Arc [id:dp030040374752331678]
	\draw  [draw opacity=0][dash pattern={on 0.84pt off 2.51pt}] (379.62,88.86) .. controls (388.96,94.22) and (397.27,101.3) .. (404.06,109.75) -- (336.65,163.92) -- cycle ; \draw [color={rgb, 255:red, 0; green, 0; blue, 0 }  ,draw opacity=1 ][dash pattern={on 0.84pt off 2.51pt}] [dash pattern={on 0.84pt off 2.51pt}]  (379.62,88.86) .. controls (388.96,94.22) and (397.27,101.3) .. (404.06,109.75) ;
	%Shape: Arc [id:dp6483332818512029]
	\draw  [draw opacity=0][line width=1.5]  (268.61,109.31) .. controls (281.31,93.08) and (299.81,81.31) .. (321.69,77.47) .. controls (342.25,73.85) and (362.38,77.83) .. (379.23,87.38) -- (336.67,162.67) -- cycle ; \draw [line width=1.5]    (268.61,109.31) .. controls (281.31,93.08) and (299.81,81.31) .. (321.69,77.47) .. controls (342.25,73.85) and (362.38,77.83) .. (379.23,87.38) ; \draw [shift={(379.23,87.38)}, rotate = 25.72] [color={rgb, 255:red, 0; green, 0; blue, 0 }  ][fill={rgb, 255:red, 0; green, 0; blue, 0 }  ][line width=1.5]      (0, 0) circle [x radius= 1.74, y radius= 1.74]   ; \draw [shift={(268.61,109.31)}, rotate = 312.11] [color={rgb, 255:red, 0; green, 0; blue, 0 }  ][fill={rgb, 255:red, 0; green, 0; blue, 0 }  ][line width=1.5]      (0, 0) circle [x radius= 1.74, y radius= 1.74]   ;
	%Straight Lines [id:da44229970288016984]
	\draw [color={rgb, 255:red, 155; green, 155; blue, 155 }  ,draw opacity=1 ]   (336.67,162.67) -- (304.13,88.17) ;
	\draw [shift={(303.33,86.33)}, rotate = 66.41] [color={rgb, 255:red, 155; green, 155; blue, 155 }  ,draw opacity=1 ][line width=0.75]    (10.93,-3.29) .. controls (6.95,-1.4) and (3.31,-0.3) .. (0,0) .. controls (3.31,0.3) and (6.95,1.4) .. (10.93,3.29)   ;
	%Shape: Arc [id:dp7257242215345958]
	\draw  [draw opacity=0][dash pattern={on 0.84pt off 2.51pt}] (250.22,162.08) .. controls (250.54,143.04) and (257.17,125.07) .. (268.38,110.6) -- (336.67,163.67) -- cycle ; \draw [color={rgb, 255:red, 0; green, 0; blue, 0 }  ,draw opacity=1 ][dash pattern={on 0.84pt off 2.51pt}] [dash pattern={on 0.84pt off 2.51pt}]  (250.22,162.08) .. controls (250.54,143.04) and (257.17,125.07) .. (268.38,110.6) ;  \draw [shift={(250.22,162.08)}, rotate = 275.14] [color={rgb, 255:red, 0; green, 0; blue, 0 }  ,draw opacity=1 ][fill={rgb, 255:red, 0; green, 0; blue, 0 }  ,fill opacity=1 ][dash pattern={on 0.84pt off 2.51pt}][line width=0.75]      (0, 0) circle [x radius= 3.35, y radius= 3.35]   ;
	
	% Text Node
	\draw (320.33,166.07) node [anchor=north west][inner sep=0.75pt]    {$O$};
	% Text Node
	\draw (310.33,124.73) node [anchor=north west][inner sep=0.75pt]    {$\rho $};
	% Text Node
	\draw (424.53,165.63) node [anchor=north west][inner sep=0.75pt]    {$z_{0}$};
	% Text Node
	\draw (410.53,92.63) node [anchor=north west][inner sep=0.75pt]    {$z_{1}$};
	% Text Node
	\draw (379.53,69.63) node [anchor=north west][inner sep=0.75pt]    {$z_{2}$};
	% Text Node
	\draw (249.53,84.63) node [anchor=north west][inner sep=0.75pt]    {$z_{3}$};
	% Text Node
	\draw (226.53,142.63) node [anchor=north west][inner sep=0.75pt]    {$z_{4}$};

\end{tikzpicture}

   }
    \caption{The curve $\G^+$ for computing the genus 1 kurtosis and its modulational dynamics. Here $z_4=\rho {\rm e}^{{\rm i}\alpha_4}=-\rho$ and $z_0=\rho {\rm e}^{{\rm i}\alpha_0}=\rho$ are fixed. When considering the modulational dynamics, $\alpha_1=\arg z_1=\arccos(q_+)$, $\alpha_3=\arg z_3=\arccos(q_-)$, where $q_-<q_+$ are given by the initial data for the Riemann problem.  At $t=0$  we have $z_2=z_1$, but for $t>0$  the motion of the point $z_2=\rho\exp\{{\rm i}\arccos(a_2(x,t))\}$ is determined through the equation \eqref{def a2 first}.}
    \label{fig:DSW-spectrum}
\end{figure} 

Based on the above proposition, in order to compute the kurtosis, we just need the first few averaged quantities, namely, $I_1$, $I_3$ and $J_2$. The following proposition gives an explicit formula for computing the kurtosis for the genus one circular condensate.

\begin{Proposition}\label{prop:g1 kur}
Let $\G^+$ be the contour on Figure~$\ref{fig:DSW-spectrum}$, then the kurtosis for the genus one condensate is given by
\begin{align}
\kappa=\frac{\kappa_{\rm num}}{\kappa_{\rm den}}, \label{eq:g1-kurtosis}
\end{align}
where
\begin{gather*}
\kappa_{\rm num}=6a_2^4+24(a_1-a_3+1)a_2^3+4\bigl[(a_1-a_3)^2-(8a_3-10)(a_1-a_3)-3\bigr]a_2^2\\ \qquad{}
+8\bigl[(a_1-a_3)^3+5(a_1-a_3)^2+\bigl(2a_1^2+2a_3^2-7\bigr)(a_1-a_3)-9\bigr]a_2\\ \qquad{}
+2\bigl[3(a_1-a_3)^4+4(a_1-a_3)^3-6(a_1-a_3)^2\bigr]+8\bigl(4a_3+2a_1^2+2a_3^2-9\bigr)(a_1-a_3)\\ \qquad{}
+54-16(a_1-a_3)(a_2+1)\biggl(3\left(a_1+a_2+a_3+\frac{1}{3}\right)^2\!-2(a_1a_2+a_1a_3+a_2a_3)-\frac{28}{3}\biggr)\mu,\\
\kappa_{\rm den}=3\left[(a_1+a_2-a_3+3)(a_1+a_2-a_3-1)-4(a_2+1)(a_1-a_3)\mu\right]^2,\\
\mu=\frac{E(m)}{K(m)},\qquad m=\frac{(1+a_3)(a_1-a_2)}{(a_1-a_3)(a_2+1)}.
\end{gather*}
\end{Proposition}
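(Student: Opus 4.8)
The plan is to reduce $\kappa$ to three residue evaluations followed by one symbolic simplification. By the definition \eqref{eq:kurtosis def}, $\kappa=-(\tfrac43 I_3+\tfrac12 J_2)/I_1^{2}$, so it suffices to compute $I_1$, $I_3$, $J_2$ for the genus-one contour of Figure~\ref{fig:DSW-spectrum}, that is, for the branch points $a_1,a_2,a_3$ together with $a_4=-1$. For this configuration $u$ and $v$ are given explicitly by \eqref{g1-DOS}--\eqref{g1-DOS-v} with $a_4=-1$; crucially, the ratio $\mu=E(m)/K(m)$ enters the DOS only through the constant $A$ and the DOF only through the constant $B$, so it will enter the final answer in a controlled, affine way.

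First I would invoke Proposition~\ref{prop:avg den}. With the Chebyshev polynomials $U_0(p)=1$, $U_1(p)=2p$, $U_2(p)=4p^{2}-1$ from \eqref{Ujp}, the formulae \eqref{eq:avg den}--\eqref{eq:avg flux} become
\begin{gather*}
I_1=2\pi{\rm i}\,\rho^{2}\,\Res_{p=\infty}\hat u(p),\qquad
I_3=2\pi{\rm i}\,\rho^{4}\,\Res_{p=\infty}\bigl[(4p^{2}-1)\,\hat u(p)\bigr],\\
J_2=2\pi{\rm i}\,\rho^{3}\,\Res_{p=\infty}\bigl[2p\,\hat v(p)\bigr],
\end{gather*}
with $\hat u(p)=u_1(p;a_1,a_2,a_3,-1)$ and $\hat v(p)=v_1(p;a_1,a_2,a_3,-1)$. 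Each residue is computed by expanding $R(p)^{-1}=p^{-2}\prod_{j}(1-a_j/p)^{-1/2}$ and $\sqrt{1-p^{2}}$ in powers of $1/p$ (on the branch used in the genus-one Corollary), multiplying through by the polynomial numerators and by the appropriate Chebyshev factor, and extracting the coefficient of $p^{-1}$. Because $A$ occurs linearly in $\hat u$ and $B$ occurs linearly in $\hat v$, this yields $I_1=\rho^{2}(c_0+c_1\mu)$, $I_3=\rho^{4}(d_0+d_1\mu)$, $J_2=\rho^{4}(e_0+e_1\mu)$ with $c_i,d_i,e_i$ explicit polynomials in $a_1,a_2,a_3$; the various $\rho$-powers combine so that $\kappa$ is independent of $\rho$, as it must be.

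Second, I would form
\[
\kappa=-\,\frac{\bigl(\tfrac43 d_0+\tfrac12 e_0\bigr)+\bigl(\tfrac43 d_1+\tfrac12 e_1\bigr)\mu}{(c_0+c_1\mu)^{2}},
\]
clear fractions, and collect the result in $a_1,a_2,a_3$ and $\mu$. The denominator $(c_0+c_1\mu)^{2}$ is automatically a perfect square that is affine in $\mu$, which is precisely the structure of the stated $\kappa_{\rm den}=3[\cdots]^{2}$; likewise the numerator is affine in $\mu$, matching $\kappa_{\rm num}$. A useful consistency check, which also pins down the overall normalization and the branch choices, is to let $a_2\to a_1$ or $a_2\to a_3$ (the band-/gap-collapse degenerations): the surface becomes genus zero and $\kappa$ must reduce to the value $2$.

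The main obstacle is purely computational: keeping the Laurent expansions of $R(p)^{-1}$ and $\sqrt{1-p^{2}}$ accurate to the order needed for $I_3$ and $J_2$, and then carrying out the large symbolic reduction that recasts the raw residues into the compact symmetric form displayed for $\kappa_{\rm num}$ (in particular, identifying the block $3\bigl(a_1+a_2+a_3+\tfrac13\bigr)^{2}-2(a_1a_2+a_1a_3+a_2a_3)-\tfrac{28}{3}$ that multiplies $\mu$). There is no conceptual difficulty---Proposition~\ref{prop:avg den} and the explicit genus-one $u_1$, $v_1$ supply all the analysis---so the remaining algebra is most safely handled with a computer algebra system.
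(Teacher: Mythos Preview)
Your proposal is correct and follows essentially the same approach as the paper: invoke Proposition~\ref{prop:avg den} to compute $I_1$, $I_3$, $J_2$ as residues at infinity, substitute into \eqref{eq:kurtosis def}, and simplify. The paper's proof says exactly this in two sentences (``use Proposition~\ref{prop:avg den} to compute $I_1$, $I_3$ and $J_2$, and then substitute them into the formula \eqref{eq:kurtosis def} \ldots\ After some algebra \ldots''); your write-up supplies more of the intermediate structure (the affine dependence on $\mu$, the $\rho$-cancellation, the degeneration check) but is not a different route.
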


\begin{proof}
First we use Proposition \ref{prop:avg den} to compute $I_1$, $I_3$ and $J_2$, and then substitute them into the formula \eqref{eq:kurtosis def} to compute the kurtosis. After some algebra, we get the kurtosis formula as stated.
\end{proof}

\begin{Remark}
Notice that the kurtosis for the circular condensate is independent of the radius~$\rho$ of the semicircle $S^+$.
\end{Remark}

\begin{Corollary}\label{thm: g0 kur}
The kurtosis for the genus 0 condensate is always $2$ for any $\alpha_1,\alpha_2\in (0,\pi)$ and $\alpha_1<\alpha_2$.
\end{Corollary}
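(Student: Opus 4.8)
The plan is to specialize the genus one kurtosis formula of Proposition~\ref{prop:g1 kur} to the degenerate configuration that produces the genus zero condensate, and to check that the resulting rational expression in $a_1,a_2$ collapses to the constant $2$. Concretely, the genus zero condensate of Corollary~\ref{g0-master case} is obtained from the genus one condensate by letting $a_3\to a_4$; in the normalization of Figure~\ref{fig:DSW-spectrum}, with the fixed endpoints $\pm 1$ already incorporated, this amounts to collapsing one of the bands so that the band structure reduces to the two free endpoints $a_1,a_2$. So the first step is to identify the precise limit of the parameters $(a_1,a_2,a_3)$ appearing in Proposition~\ref{prop:g1 kur} that corresponds to $\alpha_4=\alpha_3$ in Corollary~\ref{g0-master case}, and to take that limit in the elliptic modulus $m$ and in $\mu(m)=E(m)/K(m)$.

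Next I would track what happens to $\mu$ in this degeneration. When a band collapses, the modulus $m$ tends to an endpoint of $(0,1)$; recalling $m=\frac{(1+a_3)(a_1-a_2)}{(a_1-a_3)(a_2+1)}$, the relevant limit ($a_3\to$ the collapsing value) drives $m\to 0$ or $m\to 1$. Using the standard boundary values $\mu(0)=1$ (since $E(0)=K(0)=\pi/2$) or the logarithmic behavior $\mu(1)=0$ (since $E(1)=1$ while $K(1)=\infty$), one substitutes the limiting value of $\mu$ into $\kappa_{\rm num}$ and $\kappa_{\rm den}$. After this substitution both numerator and denominator become explicit polynomials in $a_1,a_2$, and the claim is that their ratio is identically $2$, i.e. $\kappa_{\rm num}=2\,\kappa_{\rm den}$ as polynomial identities. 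The alternative, cleaner route — and probably the one actually intended — is to bypass the genus one formula entirely: use Proposition~\ref{prop:avg den} together with the explicit genus zero $\hat u,\hat v$ from \eqref{g0-DOS}--\eqref{g0-DOS-v}, compute the residues at $p=\infty$ giving $I_1$, $I_3$, $J_2$ directly, and plug into \eqref{eq:kurtosis def}. Here $R(p)=\sqrt{(p-a_1)(p-a_2)}$, $P(p)=\frac{1}{\pi}(p-\frac{a_1+a_2}{2})\sqrt{1-p^2}\cdot(\cdots)$ expanded appropriately, and one needs $U_0=1$, $U_1(p)=2p$, $U_2(p)=4p^2-1$; the residues are elementary since no elliptic integrals survive at genus zero.

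The key steps, in order: (i) write down $\hat u(p)$ and $\hat v(p)$ for the genus zero condensate with $a_2=\pm 1$ replaced by the general $a_1,a_2$ endpoints, in the form $P(z(p))/R(z(p))$, $Q(z(p))/R(z(p))$; (ii) apply \eqref{eq:avg den}--\eqref{eq:avg flux} with $j=1$ (for $I_1$), $j=3$ (for $I_3$), and $j=2$ (for $J_2$), extracting the relevant Laurent coefficients at $p=\infty$ — this requires expanding $1/R$ and the Chebyshev polynomials $U_0,U_1,U_2$ to the needed order; (iii) substitute the three resulting expressions into $\kappa=-\bigl(\tfrac{4}{3}I_3+\tfrac12 J_2\bigr)/I_1^2$; (iv) simplify and observe the cancellation to $2$. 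I expect the main obstacle to be purely the bookkeeping in step (ii)--(iv): keeping track of enough terms in the asymptotic expansions of $P/R$ and $Q/R$ at infinity so that the residues of $PU_{j-1}/R$ are captured correctly, and then verifying the polynomial identity $\tfrac{4}{3}I_3+\tfrac12 J_2 = -2 I_1^2$ in the two variables $a_1,a_2$. There is no conceptual difficulty — the residue computation is mechanical and the final identity is a routine (if slightly lengthy) algebraic verification — so the proof is essentially a computation, which is why it is stated as a corollary of Proposition~\ref{prop:g1 kur}: one simply degenerates that formula, or re-runs the same residue machinery in the simpler genus zero setting.
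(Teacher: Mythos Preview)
Your first approach---degenerating the genus one kurtosis formula of Proposition~\ref{prop:g1 kur}---is exactly what the paper does, so the proposal is correct. To be specific: the paper sends $a_3\to -1^+$ (recall $a_4=-1$ is already fixed in the setup of Proposition~\ref{prop:g1 kur}), which drives $m\to 0$ and hence $\mu\to 1$; substituting $a_3=-1$, $\mu=1$ into $\kappa_{\rm num}$ and $\kappa_{\rm den}$ gives the factored forms
\[
\kappa_{\rm num}\to 6(a_1-a_2+2)^2(a_1-a_2-2)^2,\qquad
\kappa_{\rm den}\to 3(a_1-a_2+2)^2(a_1-a_2-2)^2,
\]
so $\kappa=2$. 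Two small corrections to your write-up: the relevant degeneration is the collapse of a \emph{gap} (the one between $-1$ and $a_3$), not a band; and contrary to your guess, the paper does \emph{not} take the direct residue route via Proposition~\ref{prop:avg den} at genus zero---it simply reads off the limit of the already-derived genus one formula, since that is what makes it a corollary of Proposition~\ref{prop:g1 kur}. Your alternative direct computation would of course also work (and avoids the elliptic machinery entirely), but note that in that approach $P(p)$ must be a polynomial per Theorem~\ref{th-main-1}; the factor $\sqrt{1-p^2}$ sits separately in $\hat u$, not inside $P$.
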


\begin{proof}
This case can be degenerated from Proposition \ref{prop:g1 kur} by taking the limit $a_3\ra-1+$. In fact, we have
\begin{align*}
&\lim_{a_3\ra-1+}\kappa_{\rm num}=6 (a_{1}-a_{2}+2 )^{2} (a_{1}-a_{2}-2 )^{2},\\
&\lim_{a_3\ra-1+}\kappa_{\rm den}=3 (a_{1}-a_{2}+2 )^{2} (a_{1}-a_{2}-2 )^{2},
\end{align*}
which immediately imply
\begin{align*}
\kappa = 2.
\tag*{\qed}
\end{align*}
\renewcommand{\qed}{}
\end{proof}

From the last section, we know that the genus one DOS/DOF (see equations \eqref{eq: mod-dos-g1} and~\eqref{eq: mod-dof-g1}) actually connects two genus zero DOS/DOF (see equations \eqref{eq: mod-dos} and \eqref{eq: mod-dof}) as $a_2\ra a_1$ or $a_2\ra a_3$. Also, we have already shown that the kurtosis for the genus zero condensate is always~2, it would be interesting to study the dynamic of the kurtosis as the branch point~$a_2$ moving from $a_1$ to $a_3$. A careful analysis to the formula \eqref{eq:g1-kurtosis}, we obtain the following theorem.

\begin{Theorem}\label{Thm:g1 kappa ineq}
Let $a_1$, $a_3$ be fixed and satisfy $-1<a_3<a_1<1$. Then for any $a_2\in [a_3,a_1]$ the genus $1$ kurtosis $\kappa$, given by formula \eqref{eq:g1-kurtosis}, is greater or equal to $2$ and is finite. Moreover, $\kappa=2$ if and only if $a_2=a_1$ or $a_2=a_3$.

\end{Theorem}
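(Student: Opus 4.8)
The plan is to treat the genus-one kurtosis $\kappa = \kappa_{\rm num}/\kappa_{\rm den}$ from Proposition~\ref{prop:g1 kur} as an explicit function of $a_2$ on the interval $[a_3,a_1]$ with $a_1,a_3$ fixed, and to show that the inequality $\kappa\ge 2$ is equivalent, after clearing the (sign-definite) denominator, to a one-variable inequality in $m\in(0,1)$ that can be closed using the elliptic-integral bounds of Lemma~\ref{Lemma-mu-ineq}. First I would record that $\kappa_{\rm den}=3\bigl[(a_1+a_2-a_3+3)(a_1+a_2-a_3-1)-4(a_2+1)(a_1-a_3)\mu\bigr]^2\ge 0$ is a perfect square, so $\kappa\ge 2$ is equivalent to $\kappa_{\rm num}-2\kappa_{\rm den}\ge 0$; a direct expansion shows $\kappa_{\rm num}-2\kappa_{\rm den}$ is a polynomial in $a_1,a_2,a_3$ plus a term linear in $\mu$ and a term (from squaring $\kappa_{\rm den}$) quadratic in $\mu$.

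Next I would pass to the parametrization by $m=\frac{(1+a_3)(a_1-a_2)}{(a_1-a_3)(a_2+1)}$, which (as in the proof of Proposition~\ref{mono-V2}) lets me solve for $a_2$ rationally in terms of $a_1,a_3,m$, with $a_2=a_1\iff m=0$ and $a_2=a_3\iff m=1$. Substituting, $\kappa_{\rm num}-2\kappa_{\rm den}$ becomes $R(m)\bigl(\mu(m)-\mu_1(m)\bigr)\bigl(\mu(m)-\mu_2(m)\bigr)$ (or a similar factored form), where $R$ is an explicit rational function with a controlled sign on $(0,1)$ and $\mu_1,\mu_2$ are explicit rational functions of $m$ (with coefficients depending on $a_1,a_3$). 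The task then reduces to locating $\mu(m)=E(m)/K(m)$ relative to the band $(1-m,\,1-m/2)$ supplied by Lemma~\ref{Lemma-mu-ineq}: one shows that on $(0,1)$ the two "rival" curves $\mu_1,\mu_2$ lie on the correct sides of this band (e.g.\ $\mu_i(m)\ge 1-m/2$ or $\mu_i(m)\le 1-m$ after checking the sign of the relevant quadratics, exactly in the style of the case analysis in the proof of Proposition~\ref{mono-V2}), so that $\mu(m)$ is pinned between them and each factor has a definite sign, giving $\kappa_{\rm num}-2\kappa_{\rm den}\ge 0$. Finiteness of $\kappa$ follows because $\kappa_{\rm den}$ vanishes only if $(a_1+a_2-a_3+3)(a_1+a_2-a_3-1)=4(a_2+1)(a_1-a_3)\mu$; using $0<\mu<1$ together with $-1<a_3<a_1<1$ and $a_2\in[a_3,a_1]$ one checks the left side strictly exceeds the right (the factor $a_1+a_2-a_3+3>3>0$ while $a_1+a_2-a_3-1$ can be controlled), so $\kappa_{\rm den}>0$ throughout.

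For the equality statement, I would show $\kappa_{\rm num}-2\kappa_{\rm den}$ has zeros at $m=0$ and $m=1$ (i.e.\ the factor $R(m)(\mu-\mu_1)(\mu-\mu_2)$ picks up a factor like $m(1-m)$), consistent with the Corollary~\ref{thm: g0 kur} computation that $\kappa\to 2$ as $a_2\to a_1$ (i.e.\ $m\to 0$) and the symmetric degeneration $a_2\to a_3$ (i.e.\ $m\to 1$). Conversely, on the open interval the strict inequalities in Lemma~\ref{Lemma-mu-ineq} (note $1-m<\mu<1-m/2$ are strict for $m\in(0,1)$) propagate to strict inequalities for each factor, hence $\kappa>2$ for $a_2\in(a_3,a_1)$; this is most cleanly seen using the already-known monotonicity-type arguments rather than re-differentiating $\kappa$. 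The main obstacle I anticipate is purely computational bookkeeping: getting $\kappa_{\rm num}-2\kappa_{\rm den}$ into a cleanly factored form in the $m$-variable and verifying the sign of the accompanying rational prefactor $R(m)$ and of the auxiliary quadratics on all of $(0,1)$ for the full two-parameter family $-1<a_3<a_1<1$. Symmetry under the natural involution $a_2\leftrightarrow a_1$, $a_2\leftrightarrow a_3$ swapping the roles of the interval endpoints should be exploited to halve this casework, and the sign checks themselves should reduce, as in Proposition~\ref{mono-V2}, to comparing squares of linear expressions.
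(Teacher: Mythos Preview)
Your approach is essentially the one the paper takes. Both treat $H(\tilde\mu):=\kappa_{\rm num}-2\kappa_{\rm den}$ as a quadratic in a formal variable $\tilde\mu$ (with $a_1,a_2,a_3$ held fixed) and then invoke Lemma~\ref{Lemma-mu-ineq} to place the actual value $\mu(m)$ between the two real roots of $H$. The paper's execution, however, is considerably more economical than your outline: instead of passing to the $m$-variable, extracting the roots $\mu_1,\mu_2$ explicitly, and then comparing them to $1-m$ and $1-m/2$, the paper simply evaluates $H$ at the four points $\tilde\mu=0,1,1-m,1-m/2$ directly in the $(a_1,a_2,a_3)$ variables. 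Each evaluation factors completely into linear pieces, e.g.\ $H(1-m)=32(a_1+1)(a_3+1)(a_2-a_3)(a_1-a_2)>0$ and $H(1-m/2)=8(a_3+1)^2(a_1-a_2)^2>0$, while $H(0)<0$ and $H(1)<0$. The sign pattern $(-,+,+,-)$ on $0<1-m<1-m/2<1$ forces one root into $(0,1-m)$ and the other into $(1-m/2,1)$ with no root-finding and no $a_2\mapsto a_2(m)$ substitution. This also makes the equality case immediate: on $(1-m,1-m/2)$ the inequality $H>0$ is strict, and the endpoint degenerations $a_2\to a_1$, $a_2\to a_3$ reduce to the genus-zero value $\kappa=2$.

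For finiteness, your proposed direct comparison (``$a_1+a_2-a_3-1$ can be controlled'') is a genuine gap: that factor is negative on large parts of the parameter region, so the argument as sketched does not close. The paper sidesteps this entirely by returning to the integral definition $\kappa=-\bigl(\tfrac43 I_3+\tfrac12 J_2\bigr)/I_1^2$: the denominator $I_1=\tfrac12\langle|\psi|^2\rangle$ is manifestly positive, and $I_3$, $J_2$ are finite since $u,v\in L^1(|{\rm d}z|)$ on $\G^+$. No analysis of the explicit $\kappa_{\rm den}$ is needed.
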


\begin{proof}
Using the explicit formula for the genus one kurtosis \eqref{eq:g1-kurtosis}, we define a new function
\begin{align*}
H(\tilde \mu)=\kappa_{\rm num}-2\kappa_{\rm den}
\end{align*}
by replacing $\mu$ with $\tilde \mu$ and consider $\tilde \mu$ as a new variable not depending on $a_1$, $a_2$, $a_3$. To show $\kappa\geq 2$, it suffices to show $H(\tilde \mu)$ is positive for any $\tilde \mu\in (1-m,1-m/2)$, where $m=\frac{(1+a_3)(a_1-a_2)}{(a_1-a_3)(a_2+1)}$. A direct computation shows
\begin{align*}
&H(0)=-32(a_1+1)(a_2+1)(a_1-a_3)(a_2-a_3)<0,\\
&H(1)=-32(a_2+1)(a_3+1)(a_1-a_3)(a_1-a_2)<0,\\
&H(1-m)=32(a_1+1)(a_3+1)(a_2-a_3)(a_1-a_2)>0,\\
&H(1-m/2)=8(a_3+1)^2(a_1-a_2)^2>0.
\end{align*}

Note that the function $H(\tilde \mu)$ is a quadratic function of $\tilde \mu$. The above observation (as visualized on Figure~\ref{fig:Hmu}) shows there is a zero in the interval $(0,1-m)$ and another zero in $(1-m/2,1)$, which implies $H(\tilde \mu)>0$
for any $\tilde \mu \in (1-m,1-m/2)$ and for all $a_2\in (a_3,a_1)$,
see inequality~\eqref{mu-ineq}.
% due to the inequality for $\mu(m)$ as stated in Lemma \ref{mu-ineq}.
This proves that $\kappa>2$ for all $a_2\in (a_3,a_1)$. On the one hand, as $a_2\ra a_1$ or $a_2\ra a_3$, a direct computation shows $\kappa=2$
in both cases.
%the kurtosis will be 2.
On the other hand, if there exists some $\tilde \mu$ such that $H(\tilde \mu)=0$, then either $m=0$ or $m=1$ (otherwise we already show $H(\tilde \mu)>0$). Since $a_1>a_3$, $m=0$ or $m=1$ implies $a_2=a_1$ or $a_2=a_3$ respectively. Thus, we have shown that $\kappa\geq 2$ and $\kappa=2$ if and only if $a_2=a_3$ or $a_1=a_2$.
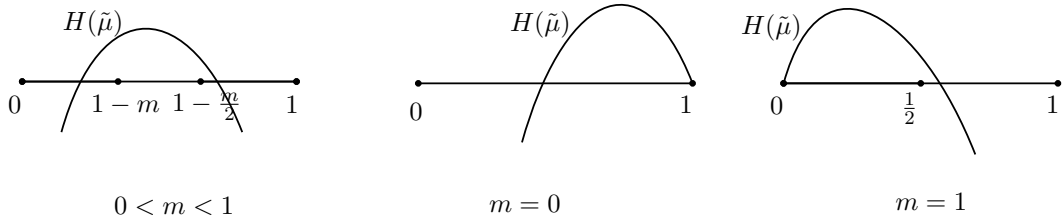
\begin{figure}[ht]
    \centering
   \resizebox{0.9\textwidth}{!}{

\tikzset{every picture/.style={line width=0.75pt}} %set default line width to 0.75pt

\begin{tikzpicture}[x=0.75pt,y=0.75pt,yscale=-1,xscale=1]
	%uncomment if require: \path (0,300); %set diagram left start at 0, and has height of 300
	
	%Straight Lines [id:da69345149979432]
	\draw    (81.33,149.33) -- (233,149.33) ;
	\draw [shift={(233,149.33)}, rotate = 0] [color={rgb, 255:red, 0; green, 0; blue, 0 }  ][fill={rgb, 255:red, 0; green, 0; blue, 0 }  ][line width=0.75]      (0, 0) circle [x radius= 1.34, y radius= 1.34]   ;
	\draw [shift={(81.33,149.33)}, rotate = 0] [color={rgb, 255:red, 0; green, 0; blue, 0 }  ][fill={rgb, 255:red, 0; green, 0; blue, 0 }  ][line width=0.75]      (0, 0) circle [x radius= 1.34, y radius= 1.34]   ;
	%Curve Lines [id:da28985090440710826]
	\draw    (103,177.33) .. controls (124.33,101) and (173,101) .. (203,177.33) ;
	%Straight Lines [id:da6597456736782243]
	\draw    (502.33,150.33) -- (654,150.33) ;
	\draw [shift={(654,150.33)}, rotate = 0] [color={rgb, 255:red, 0; green, 0; blue, 0 }  ][fill={rgb, 255:red, 0; green, 0; blue, 0 }  ][line width=0.75]      (0, 0) circle [x radius= 1.34, y radius= 1.34]   ;
	\draw [shift={(502.33,150.33)}, rotate = 0] [color={rgb, 255:red, 0; green, 0; blue, 0 }  ][fill={rgb, 255:red, 0; green, 0; blue, 0 }  ][line width=0.75]      (0, 0) circle [x radius= 1.34, y radius= 1.34]   ;
	%Straight Lines [id:da38687056577991163]
	\draw    (300.33,150.33) -- (452,150.33) ;
	\draw [shift={(452,150.33)}, rotate = 0] [color={rgb, 255:red, 0; green, 0; blue, 0 }  ][fill={rgb, 255:red, 0; green, 0; blue, 0 }  ][line width=0.75]      (0, 0) circle [x radius= 1.34, y radius= 1.34]   ;
	\draw [shift={(300.33,150.33)}, rotate = 0] [color={rgb, 255:red, 0; green, 0; blue, 0 }  ][fill={rgb, 255:red, 0; green, 0; blue, 0 }  ][line width=0.75]      (0, 0) circle [x radius= 1.34, y radius= 1.34]   ;
	%Curve Lines [id:da1926398906920157]
	\draw    (357.67,183) .. controls (379,106.67) and (422,74) .. (452,150.33) ;
	%Curve Lines [id:da5915310304289516]
	\draw    (502.33,150.33) .. controls (523.67,74) and (578.33,113.33) .. (608.33,189.67) ;
	%Straight Lines [id:da14740321212559926]
	\draw    (81.33,149.33) -- (134.33,149.33) ;
	\draw [shift={(134.33,149.33)}, rotate = 0] [color={rgb, 255:red, 0; green, 0; blue, 0 }  ][fill={rgb, 255:red, 0; green, 0; blue, 0 }  ][line width=0.75]      (0, 0) circle [x radius= 1.34, y radius= 1.34]   ;
	\draw [shift={(81.33,149.33)}, rotate = 0] [color={rgb, 255:red, 0; green, 0; blue, 0 }  ][fill={rgb, 255:red, 0; green, 0; blue, 0 }  ][line width=0.75]      (0, 0) circle [x radius= 1.34, y radius= 1.34]   ;
	%Straight Lines [id:da14703401653116788]
	\draw    (180,149.33) -- (233,149.33) ;
	\draw [shift={(233,149.33)}, rotate = 0] [color={rgb, 255:red, 0; green, 0; blue, 0 }  ][fill={rgb, 255:red, 0; green, 0; blue, 0 }  ][line width=0.75]      (0, 0) circle [x radius= 1.34, y radius= 1.34]   ;
	\draw [shift={(180,149.33)}, rotate = 0] [color={rgb, 255:red, 0; green, 0; blue, 0 }  ][fill={rgb, 255:red, 0; green, 0; blue, 0 }  ][line width=0.75]      (0, 0) circle [x radius= 1.34, y radius= 1.34]   ;
	%Straight Lines [id:da227207484499498]
	\draw    (502.33,150.33) -- (578.17,150.33) ;
	\draw [shift={(578.17,150.33)}, rotate = 0] [color={rgb, 255:red, 0; green, 0; blue, 0 }  ][fill={rgb, 255:red, 0; green, 0; blue, 0 }  ][line width=0.75]      (0, 0) circle [x radius= 1.34, y radius= 1.34]   ;
	\draw [shift={(502.33,150.33)}, rotate = 0] [color={rgb, 255:red, 0; green, 0; blue, 0 }  ][fill={rgb, 255:red, 0; green, 0; blue, 0 }  ][line width=0.75]      (0, 0) circle [x radius= 1.34, y radius= 1.34]   ;
	
	% Text Node
	\draw (72,155.73) node [anchor=north west][inner sep=0.75pt]  [font=\normalsize]  {$0$};
	% Text Node
	\draw (118,155.07) node [anchor=north west][inner sep=0.75pt]  [font=\normalsize]  {$1-m$};
	% Text Node
	\draw (162.67,153.73) node [anchor=north west][inner sep=0.75pt]  [font=\normalsize]  {$1-\frac{m}{2}$};
	% Text Node
	\draw (294,159.73) node [anchor=north west][inner sep=0.75pt]  [font=\normalsize]  {$0$};
	% Text Node
	\draw (225.33,155.73) node [anchor=north west][inner sep=0.75pt]  [font=\normalsize]  {$1$};
	% Text Node
	\draw (130.67,211.73) node [anchor=north west][inner sep=0.75pt]  [font=\normalsize]  {$0< m< 1$};
	% Text Node
	\draw (338,210.4) node [anchor=north west][inner sep=0.75pt]  [font=\normalsize]  {$m=0$};
	% Text Node
	\draw (562.67,209.73) node [anchor=north west][inner sep=0.75pt]  [font=\normalsize]  {$m=1$};
	% Text Node
	\draw (443.33,155.73) node [anchor=north west][inner sep=0.75pt]  [font=\normalsize]  {$1$};
	% Text Node
	\draw (646,155.73) node [anchor=north west][inner sep=0.75pt]  [font=\normalsize]  {$1$};
	% Text Node
	\draw (493.33,156.4) node [anchor=north west][inner sep=0.75pt]  [font=\normalsize]  {$0$};
	% Text Node
	\draw (566,153.73) node [anchor=north west][inner sep=0.75pt]  [font=\normalsize]  {$\frac{1}{2}$};
	% Text Node
	\draw (102.33,108.4) node [anchor=north west][inner sep=0.75pt]  [font=\normalsize]  {$H(\tilde{\mu })$};
	% Text Node
	\draw (349.67,109.07) node [anchor=north west][inner sep=0.75pt]  [font=\normalsize]  {$H(\tilde{\mu })$};
	% Text Node
	\draw (477.33,110.4) node [anchor=north west][inner sep=0.75pt]  [font=\normalsize]  {$H(\tilde{\mu })$};

\end{tikzpicture}

   }
    \caption{Three plots of $H(\tilde \mu)$ regarding different values of $m$.}
    \label{fig:Hmu}
\end{figure} 

To show that the kurtosis is finite, it suffices to show that, according to the definition of the kurtosis, $I_1$ is positive and $I_3$, $J_2$ are finite. Since $I_1=\frac{1}{2}\bigl\langle|\psi|^2\bigr\rangle$, it is obvious positive. Since~$u$,~$v$ are integrable with respect to the arc-length measure, we have
\begin{align*}
&|I_3|\leq \bigl\|\Im z^3\bigr\|_{L^\infty(|{\rm d}z|)}\|u(z)\|_{L^1(|{\rm d}z|)}<\infty,\\
&|J_2|\leq \bigl\|\Im z^2\bigr\|_{L^\infty(|{\rm d}z|)}\|v(z)\|_{L^1(|{\rm d}z|)}<\infty.
\end{align*}
Together with definition of the kurtosis, we conclude that $\kappa<\infty$. The proof is done.
\end{proof}

\subsection{The kurtosis for DSW}

In this subsection, we study the modulation dynamics of the kurtosis. As discussed in Section~\ref{sec-Riem-prob},
there are two types of wave phenomenon, the rarefaction wave and the dispersive shock wave. According to Theorem~\ref{thm: g0 kur}, the kurtosis for the rarefaction wave is always $2$. As for the dispersive shock wave, we follow the same setting in Section~\ref{sec-Riem-prob} for the dispersive shock wave. Replacing $a_2$ in equation \eqref{eq:g1-kurtosis} by $a_2(x,t)$ as implicitly defined by equation~\eqref{def a2 first}, we obtain the kurtosis for the dispersive shock wave. Denote the modulated kurtosis by $\kappa_{\rm mod}=\kappa_{\rm mod}(a_2(x,t))$. Recall for each fixed $a_1$, $a_3$ such that $-1<a_3<a_1<1$, we have defined
\begin{align*}
&V_{2-}=\frac{-16a_1^2+8a_1a_3+2a_3^2-8a_1+4a_3+2}{2a_1-a_3+1},\qquad V_{2+}=-2a_1-4a_3+2,
\end{align*}
which in turn define two rays in the $x-t$ plane
\begin{align*}
L_{\pm}:=\{(x,t)\colon x-V_{2\pm}t=0,\,t>0\}.
\end{align*}
These two rays split the $x-t$ plane with $t>0$ into three regions:
\begin{align*}
&D_1:=\{(x,t)\colon x-V_{2+}t>0\},\\
&D_2:=\{(x,t)\colon x-V_{2+}t<0,\, x-V_{2-}t>0\},\\
&D_3:=\{(x,t)\colon x-V_{2-}t<0\}.
\end{align*}
In regions $D_1$ and $D_3$, the kurtosis is $2$ and in the region $D_2$, the kurtosis, according to Theorem~\ref{Thm:g1 kappa ineq}, is strictly great than $2$ and finite as long as $a_1<1$. As an illustrative example, we take $a_1=0.9$, $a_3=-0.4$ and plot the modulated kurtosis $\kappa_{\rm mod}(a_2(x,t))$ in the $x-t$ plane as well as the plot of the kurtosis near the ray $L_-$ in Figure~\ref{fig:Mod Kappa}.

\begin{figure}[t]\centering
\subfloat[]{%
\includegraphics[width=0.48\textwidth]{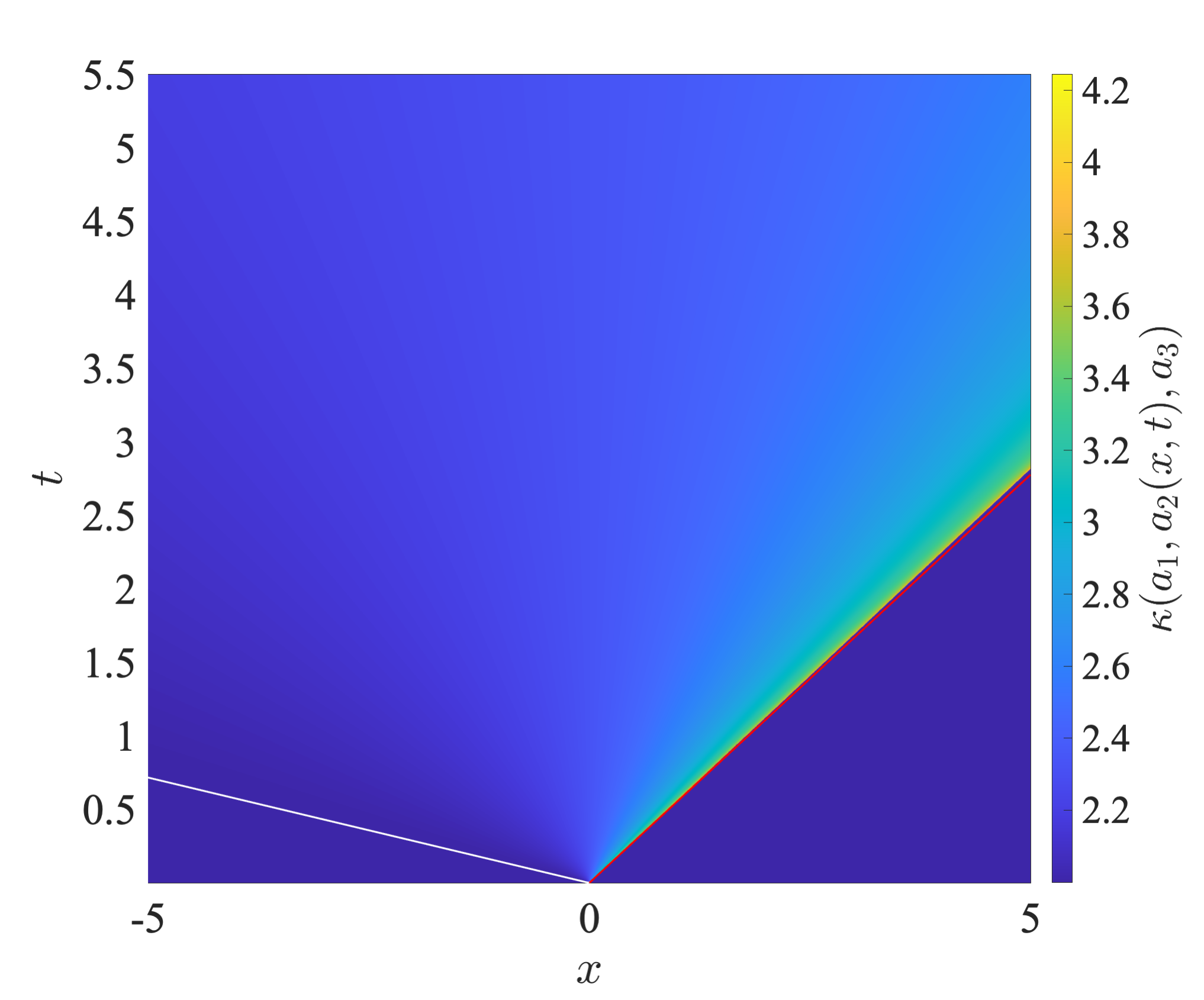}
}
\qquad
\subfloat[]{%
\includegraphics[width=0.45\textwidth]{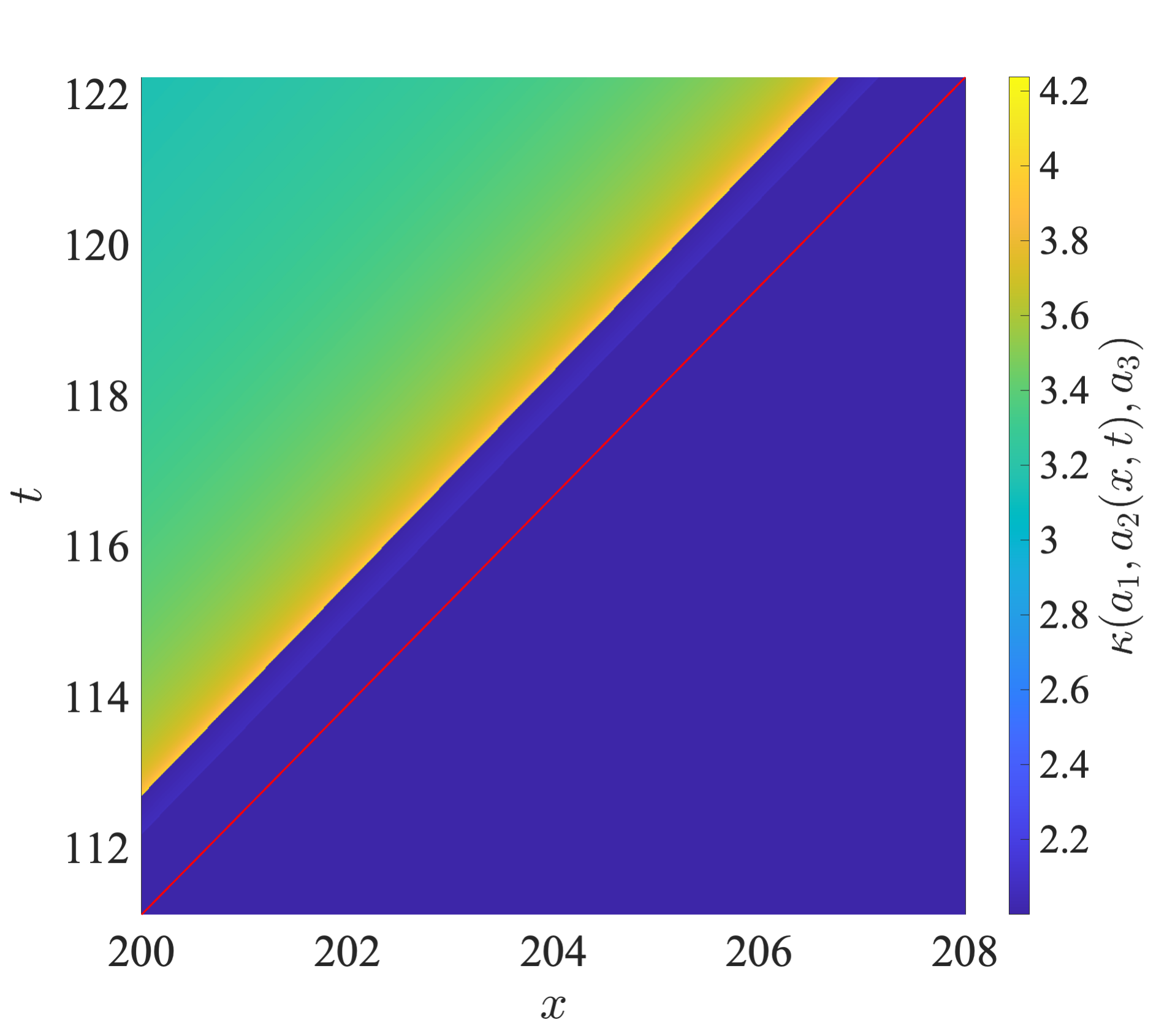}
}
\caption{The density plots of the modulated kurtosis $\kappa(a_1,a_2(x,t),a_3)$, with $a_1=0.9$, $a_3=-0.4$. In~(a), the white line indicates the line $x-V_{2+}t=0$ and the red line indicates the line $x-V_{2-}t=0$, where~$V_{2\pm}$ are defined by equations \eqref{Def: V2pm}. (b) shows the zoom-in density plot near the red line.}\label{fig:Mod Kappa}
\end{figure}

\subsection{Scaling limit of the kurtosis of certain genus one circular condensate}

In this subsection, we consider certain type of limiting configurations of the circular condensate and study the corresponding kurtosis.
Specifically, we set $a_3=-a_2\leq 0$ and study the limit
\begin{align}\label{kurt-lim}
\lim_{(a_1,a_2)\ra (1^-,0^+)}\kappa(a_1,a_2,-a_2)
\end{align}
along a certain path $L$.
In the next theorem, we show that for any given $s>2$, one can always find a path $L_s$ such that the limit in \eqref{kurt-lim} is $s$.

\begin{figure}[t]\centering
\includegraphics[width=0.44\textwidth]{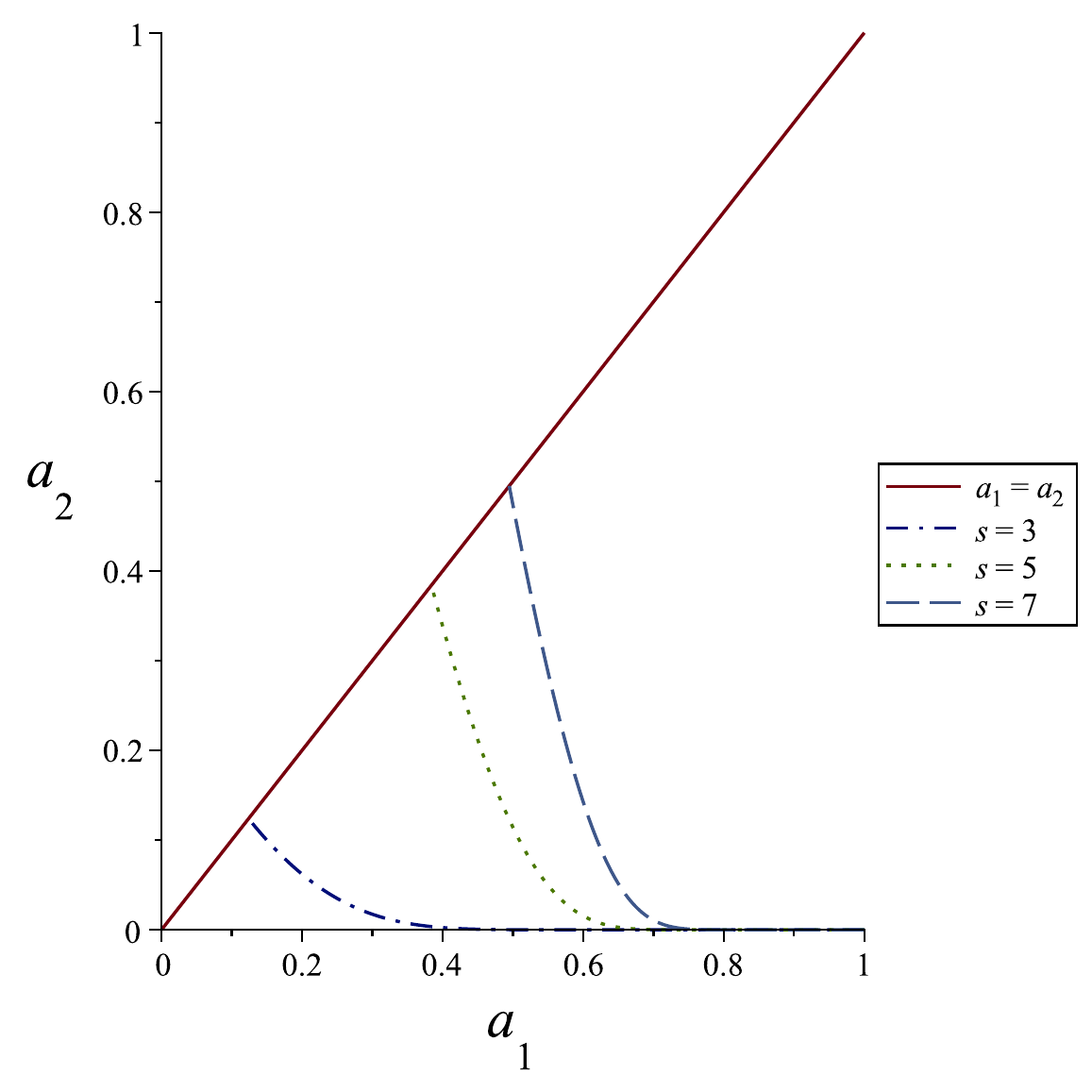}
\caption{Three curves (see the expression in \eqref{curve}) in the $(a_1,a_2)$ plane with $s=3,5,7$. The scaling limit $\lim_{(a_1,a_2)\ra (1,0)}\kappa(a_1,a_2,-a_2)$ along those curves will respectively go to $3$, $5$, $7$. One the line $a_1=a_2$, the kurtosis is equal to 2.}
%\label{fig: sca lim curves}
\end{figure}

\begin{Theorem}\label{thm: abr kur}
For any $s>2$ the limit
\begin{align*}
\lim_{c\ra 0^+}\kappa(a_1,a_2,-a_2)
\end{align*}
taken along the curve
\begin{align}
\begin{cases}
a_1=1-c,\\
a_2=4\exp\left\{\dfrac{8}{3(2-s)c^2}\right\}
\end{cases} \label{curve}
\end{align}
in the parameter plane $(a_1,a_2)$
is equal $s$.
\end{Theorem}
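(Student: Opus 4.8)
The plan is to substitute $a_3=-a_2$ into the explicit genus one kurtosis formula \eqref{eq:g1-kurtosis} of Proposition~\ref{prop:g1 kur} and to expand asymptotically along the curve \eqref{curve} as $c\to 0^+$; note that the ordering $-1<-a_2<a_2<a_1<1$ holds for all small $c$, since $s>2$ forces $a_2\to 0^+$. With $a_3=-a_2$ one has $a_1-a_3=a_1+a_2$, $a_1+a_2+a_3=a_1$ and $a_1a_2+a_1a_3+a_2a_3=-a_2^2$, so $\kappa_{\rm num}$ and $\kappa_{\rm den}$ become functions of $a_1$, $a_2$, $\mu$ only, with elliptic modulus $m=\frac{(1-a_2)(a_1-a_2)}{(a_1+a_2)(1+a_2)}$. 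Since the exponent $\frac{8}{3(2-s)c^2}$ in \eqref{curve} tends to $-\infty$, we have $a_2\to 0^+$ faster than any power of $c$, so $a_2$ is negligible against every power of $c$ and survives in the limit only through the logarithmic dependence of $\mu$ on $a_2$.

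First I would pin down the asymptotics of $m$ and of $\mu=E(m)/K(m)$. A direct computation gives $1-m=\frac{2a_2(a_1+1)}{(a_1+a_2)(1+a_2)}$, so $1-m=4a_2\bigl(1+O(c)\bigr)$ and $m\to 1^-$. Using the classical $m\to 1^-$ expansions $K(m)=\frac12\log\frac{16}{1-m}+o(1)$ and $E(m)\to 1$ (see \cite{BFbook}), one obtains $\mu(m)=\frac{2}{\log\frac{16}{1-m}}\bigl(1+o(1)\bigr)$. Along \eqref{curve} one has $\log a_2=\log 4+\frac{8}{3(2-s)c^2}$, hence $\log\frac{16}{1-m}=\log\frac{4}{a_2}+o(1)=\frac{8}{3(s-2)c^2}\bigl(1+o(1)\bigr)$; the constant $4$ in the definition of $a_2$ is exactly what cancels the additive $\log 4$. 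Therefore
\begin{align*}
\mu(m)=\frac{3(s-2)}{4}\,c^2\,\bigl(1+o(1)\bigr),\qquad c\to 0^+.
\end{align*}

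Next I would expand $\kappa_{\rm num}$ and $\kappa_{\rm den}$, writing $a_1=1-c$. In $\kappa_{\rm num}$ every term either carries an overall factor $a_2$ (hence is negligible beyond all orders in $c$) or, after replacing the factors $a_1+a_2$ by $a_1$ up to a negligible error, reduces to one of two pieces: the polynomial piece $6a_1^4+24a_1^3-12a_1^2-72a_1+54=6(a_1-1)^2(a_1+3)^2=6c^2(4-c)^2$, and the $\mu$-piece $-16a_1\bigl(3a_1^2+2a_1-9\bigr)\mu$, which equals $64\mu\bigl(1+O(c)\bigr)$ since $-16a_1(3a_1^2+2a_1-9)=64$ at $a_1=1$. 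Hence
\begin{align*}
\kappa_{\rm num}=96c^2+64\mu+o(c^2)=96c^2+48(s-2)c^2+o(c^2)=48s\,c^2\bigl(1+o(1)\bigr).
\end{align*}
The same reduction applied to $\kappa_{\rm den}$ gives $\kappa_{\rm den}=3\bigl[(a_1+3)(a_1-1)-4a_1\mu+(\text{negligible})\bigr]^2$, and since $(a_1+3)(a_1-1)=-4c+O(c^2)$ and $4a_1\mu=O(c^2)$, we get $\kappa_{\rm den}=48c^2\bigl(1+o(1)\bigr)$. Dividing, $\kappa(a_1,a_2,-a_2)=\frac{48s\,c^2(1+o(1))}{48c^2(1+o(1))}\longrightarrow s$, which is the assertion.

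The hard part is not conceptual but the bookkeeping across the three scales that occur — $a_2$ (smaller than any power of $c$), $\mu$ (of order $c^2$), and $c$ — and the verification that every discarded contribution ($a_2$-terms, $a_2\mu$ cross-terms, the $O(c^3)$ tails of the polynomial piece and of the squared bracket in $\kappa_{\rm den}$) is genuinely $o(c^2)$ against the leading $48c^2$. The one delicate point is checking the two algebraic identities that produce the cancellation: $6a_1^4+24a_1^3-12a_1^2-72a_1+54=6(a_1-1)^2(a_1+3)^2$ and $-16a_1(3a_1^2+2a_1-9)$ equals $64$ at $a_1=1$, together with $96+64\cdot\frac{3(s-2)}{4}=48s$ and $\log\frac{16}{1-m}=\frac{8}{3(s-2)c^2}\bigl(1+o(1)\bigr)$; these are exactly what the constants $4$, $8$, $3$ and $2-s$ in \eqref{curve} were reverse-engineered to produce. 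Once they are verified, the limit follows at once.
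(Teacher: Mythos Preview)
Your proposal is correct and follows essentially the same route as the paper: substitute $a_3=-a_2$ into the explicit formula \eqref{eq:g1-kurtosis}, use the $m\to 1^-$ asymptotics $\mu(m)\sim 2/\log\frac{16}{1-m}$ together with $1-m\sim 4a_2$, and expand numerator and denominator to order $c^2$ along the curve \eqref{curve}. The only cosmetic difference is that the paper organizes the expansion as $\kappa=(P_1+P_2\mu)/(Q_1+Q_2\mu)^2$ with Taylor coefficients $P_1=32c^2+\cdots$, $P_2=\tfrac{64}{3}+\cdots$, $Q_1=-4c+\cdots$, $Q_2=-4+\cdots$, whereas you extract the same leading terms by the algebraic factorization $6(a_1-1)^2(a_1+3)^2$ and by evaluating the $\mu$-coefficient at $a_1=1$; both give $\kappa_{\rm num}\sim 48sc^2$, $\kappa_{\rm den}\sim 48c^2$.
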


\begin{proof}The kurtosis \eqref{eq:g1-kurtosis} can be written in the following form
\begin{align*}
\kappa(a_1,a_2,-a_2)=\frac{P_1+P_2\mu}{(Q_1+Q_2\mu)^2},
\end{align*}
where $P_1$, $P_2$, $Q_1$, $Q_2$ are all polynomials of $a_1$, $a_2$. Replacing $a_1=1-c$ and we consider the Taylor approximation of those polynomials near $(c,a_2)=(0,0)$, which are given as follows:
\begin{alignat*}{3}
&P_1=-\frac{128}{3}a_2+32c^2+\bigo\bigl(a_2c,a_2^2\bigr),\qquad && P_2=\frac{64}{3}+\bigo(a_2,c),&\\
&Q_1=-4c+8a_2+\bigo\bigl(a_2^2, c^2, a_2c\bigr),\qquad && Q_2=-4+\bigo(a_2,c),&
\end{alignat*}
where $\bigo\bigl(\{A_j\}_{j=1}^n\bigr)$ means the correction term is bounded by some linear combination of $\{A_j\}_{j=1}^n$. Using the asymptotic approximations of complete elliptic integral of the first and the second kind (see formulas~(900.05) and (900.07) in Byrd--Friedman~\cite{BFbook}), it is straightforward to show
\begin{align*}
\mu(m)=\frac{1}{\log\frac{4}{\sqrt{1-m}}}+\bigo((m-1)\log(1-m))\qquad \text{as} \quad m\ra 1^-.
\end{align*}
Notice, as $(c,a_2)\ra (0,0)$, we have
\begin{align*}
m=1-4a_2+\bigo\bigl(a_2^2,a_2c\bigr),
\end{align*}
which implies
\begin{align*}
\mu(m)=\frac{1}{\log\frac{2}{\sqrt{a_2}}}+\bigo\left(\frac{-a_2}{\log(a_2)}\right) \qquad \text{as}\quad (a_2,c)\ra (0,0).
\end{align*}
Then, after some algebraic manipulations, we arrive at the following leading behavior of the kurtosis
\begin{align*}
\kappa = \frac{-\frac{128}{3}a_2+32c^2+\frac{64}{3}\frac{1}{\log\frac{2}{\sqrt{a_2}}}}{\Bigl(4c-8a_2+\frac{4}{\log\frac{2}{\sqrt{a_2}}}\Bigr)^2}+o(1).
\end{align*}

Since $a_2$ is obviously dominated by $-1/\log(a_2)$ as $a_2\ra 0^+$, we can further simplify the kurtosis to
\begin{align*}
\kappa = \frac{32c^2+\frac{64}{3}\frac{1}{\log\frac{2}{\sqrt{a_2}}}}{\Bigl(4c+\frac{4}{\log\frac{2}{\sqrt{a_2}}}\Bigr)^2}+o(1).
\end{align*}
Let us denote $S=\frac{1}{\log\frac{2}{\sqrt{a_2}}}$ and
\begin{align*}
\kappa_{0}=\frac{2c^2+\frac{4}{3}S}{(c+S)^2},
\end{align*}
then $\kappa=\kappa_0+o(1)$ as $c\ra 0+$.
Since
\begin{align*}
\kappa_0=2+\frac{\frac{4}{3}S-4cS-2S^2}{(c+S)^2},
\end{align*}
for $c$, $S$ are sufficiently close to $0$, it is evident that $\kappa_0\geq 2$. Apparently, $S$ will always dominate $cS+S^2$ as $(a_2,c)$ sufficiently close to $(0,0)$, thus, we have
\begin{align*}
\kappa_0=2+\frac{\frac{4}{3}S}{(c+S)^2}+o(1).
\end{align*}
Note that $S=-\frac{3}{4}(2-s)c^2$, we know, as $c\ra 0+$, $c$ dominates $S$, thus the denominator is then dominated by $c^2$. And we eventually get
\begin{align*}
\kappa=2+\frac{4}{3}\frac{S}{c^2}+o(1),
\end{align*}
whose limit as $c\ra 0+$ will be 2. This completes the proof.
\end{proof}

\begin{Remark}
Since the kurtosis is a continuous function of $a_1$, $a_2$, $a_3$ as long as $a_1>a_2>a_3$, Theorem~$\ref{thm: abr kur}$ also implies that for any given number that is greater than $2$, there exists certain configuration $($a genus one circular condensate$)$ such that the kurtosis equals the given number.
\end{Remark}

\subsection*{Acknowledgements}

The authors would also like to thank the anonymous referees for useful suggestions and comments. Fudong Wang is supported by Guangdong Basic and Applied Basic Research Foundation B24030004J. The work of Alexander Tovbis is supported in part by NSF Grant DMS-2009647.

\pdfbookmark[1]{References}{ref}
\LastPageEnding

\end{document}